\def\BibTeX{{\rm B\kern-.05em{\sc i\kern-.025em b}\kern-.08em
    T\kern-.1667em\lower.7ex\hbox{E}\kern-.125emX}}
\newtheorem{thm}{\it Theorem}
\newtheorem{lem}{\it Lemma}
\newtheorem{assumpt}{\it Assumption}
\newtheorem{problem}{\it Problem}
\newtheorem{remark}{\it Remark}
\newtheorem{corol}{\it Corollary}
\newtheorem{prop}{\it Proposition}
\newtheorem{define}{\it Definition}
\newtheorem{example}{\it Example}
\title{Output Regulation of Linear Systems with Non-periodic Non-smooth Exogenous Signals}
\author{Zirui Niu, \IEEEmembership{Graduate Student Member, IEEE}, Daniele Astolfi, and Giordano Scarciotti, \IEEEmembership{Senior Member, IEEE}
\thanks{Zirui Niu and G. Scarciotti is with the Department of Electrical and Electronic Engineering, Imperial College London, London SW7 2AZ, U.K. (e-mail: zn120@ic.ac.uk; g.scarciotti@ic.ac.uk).}
\thanks{D. Astolfi is with Universit\'e Claude Bernard Lyon 1, CNRS, LAGEPP UMR5007, 43 boulevard du 11 novembre 1918, F-69100, Villeurbanne, France (e-mail: 
daniele.astolfi@univ-lyon1.fr). }  \vspace{-0.3cm}}
\begin{document}

\maketitle

\begin{abstract}
We address the output regulation problem of linear systems with non-smooth and non-periodic exogenous signals. Specifically, we first formulate and solve the full-information problem by designing a state-feedback controller. We study the solvability of the regulator equations, providing a new non-resonance condition.
We then focus on the error-feedback problem, for which we design a (non-robust) internal model leveraging the concept of canonical realisation and applying a high-gain method for the stabilisation of the closed-loop system under the minimum-phase assumption. Finally, we study the regulation problem involving
model parameter uncertainties. Drawing ideas from both hybrid and time-varying (smooth) output regulation, we propose two methods to establish an internal model that is robust to uncertainties. The first method is an extension of the hybrid internal model, while the second relies on a new concept of immersion. In this non-smooth case, the immersion is established based on integrals rather than derivatives. The effectiveness of the proposed solutions is illustrated by a circuit regulation example. 
\end{abstract}

\begin{IEEEkeywords}
Internal model principle, linear systems, output regulation, robust control, hybrid systems, time-varying systems, discontinuous control
\end{IEEEkeywords}

\section{Introduction}\label{sec:intro}
\IEEEPARstart{I}{n} the field of control theory, designing a controller such that the output of a system can asymptotically track/reject specified references/disturbances is a fundamental problem. In this context, output regulation theory models references and disturbances, called ``exogenous signals", as trajectories generated by a known signal generator named ``exosystem". Research into output regulation problems for linear systems dates back to 1970s, when Davison, Francis and others studied the servomechanism problem using elegant geometric methods~\cite{ref:davison1976robust, ref:francis1976internal, ref:francis1977linear, ref:wonham1974linear}. They demonstrated that a robust solution capable of handling parametric and dynamic uncertainties in the regulated system requires the regulator to incorporate an appropriate replica of the exosystem dynamics, leading to the \textit{internal model principle}. Since the 1990s, the conceptual tools and design principles of the output regulation theory have been extended to nonlinear systems~\cite{ref:isidori1990output, ref:huang1990nonlinear, ref:byrnes1997structurally, ref:serrani2001semi, ref:chen2005robust, ref:marconi2007output}. Recently, many studies focusing on output regulation of other classes of dynamical systems have been presented. To mention a few, multi-agent systems have gained attention since 2010~\cite{ref:su2011cooperative, ref:su2012cooperative};
hybrid systems have been studied during the same time~\cite{ref:marconi2013internal, ref:carnevale2015hybrid, ref:carnevale2017robust}; and regulation methods of linear stochastic systems have also been reported, see~\cite{ref:scarciotti2018output, ref:mellone2021output}.

In this paper, we are interested in non-periodic non-smooth exogenous signals. Non-smooth signals, such as sawtooth or pulse width modulation (PWM) signals, are commonly encountered in real-life applications. For example, discontinuous or non-differentiable signals, possibly non-periodic, appear in robotic manipulation, either as disturbances, as references, or due to the interconnection between agents and the environment~\cite{ref:sira1996dynamical, ref:cortes2008discontinuous, ref:chen2013tracking, ref:akhmet2010principles}. This class of exogenous signals has not been considered by past output regulation studies. On the one hand, many works have focused on smooth exogenous signals represented by a known, autonomous differential equation, which herein we call \textit{implicit} generator~\cite{ref:huang2004nonlinear}. This generator, in time-invariant settings, cannot model signals that show non-smoothness. Similarly, for these standard implicit generators even in time-varying settings, the existing papers have only considered periodic and smooth exogenous signals~\cite{ref:zhang2006linear,ref:zhang2009adaptive}. On the other hand, tracking and rejection of non-smooth signals have been studied in more general contexts. For instance,~\cite{ref:gazi2007output} studied linear output regulation problems with switching exosystems. The study only considered continuous non-smooth exogenous signals and solved the problem with a switching controller, while our study also considers discontinuous exogenous signals.~\cite{ref:paunonen2017robust} studied, using the lifting approach, the regulation of linear time-varying systems subject to periodic exogenous signals that can be non-smooth. Other studies that considered non-smooth exogenous signals are mostly based on hybrid linear systems, see~\cite{ref:marconi2013internal, ref:carnevale2015hybrid, ref:carnevale2017robust, ref:de2019robust}. All these works solve the non-smooth problem for exogenous signals having \textit{periodic} jumps. However, output regulation with \textit{non-periodic}, possibly discontinuous, exogenous signals still remains an open question.

This article addresses the output regulation problem for single-input single-output (SISO), linear time-invariant (LTI) systems with \textit{non-periodic non-smooth} exogenous signals. To this end, these exogenous signals are modelled by a linear exosystem represented in explicit form\footnote{The terminology is taken from~\cite{ref:zadeh2008linear, ref:kalman1969topics}.} as~\cite[Section 5.1]{ref:scarciotti2015model, ref:scarciotti2017nonlinear}
\begin{equation}\label{equ:explicitGen}
\omega(t) = \Lambda\left(t, t_0\right) \omega(t_0), \qquad  \omega(t_0) = \omega_0,
\end{equation}
with $\Lambda\left(t, t_0\right) \in \mathbb{R}^{\nu \times \nu}$. This class of exosystems can represent most signals with discontinuities or non-differentiabilities. As a consequence, these exosystems bring more challenges in solving the output regulation problem as they pose restrictions in the use of derivatives. Preliminary results have been published in~\cite{ref:niu2024full}, which studied the full-information\footnote{This means that the states of the linear system and the exogenous signal are available for feedback.} problem with \textit{bounded} exogenous signals generated by~(\ref{equ:explicitGen}) and proposed solutions in the form of \textit{regulator equations}.

\textit{Contributions.} In this article, we first revisit this full-information problem, providing complete proofs of all theoretical contributions while introducing a major improvement by including the \textit{unbounded} case (Section~\ref{sec:FullInfoProb}). 
We then examine the solvability of the regulator equations, providing a new non-resonance condition (Section~\ref{sec:solvability}).
Based on these regulator equations, we then solve the error-feedback case by presenting necessary and sufficient conditions, called \textit{internal model property}, for the regulator design (Section~\ref{sec:IMProp}). By generalising the canonical realization used in the smooth periodic case~\cite{ref:zhang2009adaptive} (Section~\ref{sec:AchievingIMProp}), we provide a constructive procedure for designing an error-feedback regulator to solve the problem for minimum-phase systems (Section~\ref{sec-robstab}). Then, we seek to extend the posed internal model property to the \textit{internal model principle} when the system presents parameter uncertainties. To design a robust regulator, we propose two alternative strategies. We first extend the hybrid internal model principle posed by~\cite{ref:marconi2013internal} in the periodic setting to the non-periodic case, showing that the method is viable, but with a drawback of augmented dimension of the regulator (Section~\ref{sec:HybridIMP}). Then, we provide an alternative constructive method of designing the robust regulator via a new concept of \textit{immersion}. Differently from the classical immersion method which depends on the successive derivatives of the expected error-zeroing input signal~\cite{ref:isidori1995nonlinear}, our proposed method is integral-based to cope with the non-smooth case (Section~\ref{sec:ImmersionIMP}). We finally compare the two methods, demonstrating that both methods have a similar level of restrictiveness, with the immersion-based regulator showing an advantage of lower regulator dimensionality in some cases. This comparison is also illustrated by an example at the end of the article (Section~\ref{sec:example}).


\textbf{Notation.} $\mathbb{R}_{\geq 0}$ denotes the set of non-negative real numbers, $\mathbb{R}_{>0}$ indicates $\mathbb{R}_{\geq 0} \backslash\{0\}$, 
$\mathbb{C}_{<0}$ denotes the set of complex numbers with a strictly negative real part and $\mathbb{C}_{\geq 0}$ denotes $\mathbb{C} \backslash \mathbb{C}_{<0}$. The symbol $\otimes$ indicates the Kronecker product, $I_{n}$ denotes an $n \times n$ identity matrix, $\bm{0}_{m \times n}$ denotes an $m \times n$ zero matrix, $\sigma(A)$ indicates the spectrum of the matrix $A \in \mathbb{R}^{n \times n}$ and $\|A\|$ indicates its induced Euclidean matrix norm. The superscript $\top$ denotes the transposition operator, 
and the superscript $\dagger$ denotes the Moore-Penrose pseudoinverse operator. For a matrix $V \in \mathbb{R}^{a \times b}$, $\operatorname{vec}(V)$ indicates an $ab \times 1$ column vector obtained by vertically stacking the columns of $V$. Meanwhile, for $n$ matrices $M_{i} \in \mathbb{R}^{p \times q}$, $i = 1, 2, \cdots, n$, $\operatorname{col}\left(M_1, M_2, \cdots, M_n\right)$ indicates the $pn \times q$ matrix obtained by vertically stacking the matrices $M_i$. When the matrices $M_{i}$ are square, $\operatorname{diag}(M_1, M_2 \cdots, M_n)$ denotes the square matrix obtained by placing the matrices $M_1, M_2 \cdots, M_n$ along its diagonal. Given a function $H(\cdot)$ and a positive integer $k$, the notation $H \in \mathcal{C}^{k}$ indicates that $H$ is $k$-times differentiable with its $k$-th derivative, denoted by $H^{(k)}$, being continuous in its domain. On the other hand, given an initial value $t_0$, the functional $\mathcal{I}^{[k]}_{t_0} :  H \mapsto \mathcal{I}^{[k]}_{t_0}[H]$ denotes $k$-times repeated integral of the function $H$, namely
$$
\mathcal{I}^{[k]}_{t_0}[H(t)] =\int_{t_0}^t \int_{t_0}^{\sigma_1} \cdots \int_{t_0}^{\sigma_{k-1}} H\left(\sigma_k\right) \mathrm{d} \sigma_{k} \cdots \mathrm{d} \sigma_2 \mathrm{~d} \sigma_1.
$$

\section{Preliminaries}\label{sec:prelimiary}
Consider a class of SISO, LTI systems in the form
\begin{equation}\label{equ:system}
    \begin{aligned}
    \dot{x}(t) &= Ax(t) + Bu(t) + P\omega(t),\\
    e(t) &= Cx(t) + Du(t) + Q\omega(t),
    \end{aligned}
\end{equation}
with $A \in \mathbb{R}^{n \times n}$, $B \in \mathbb{R}^{n \times 1}$, $P \in \mathbb{R}^{n \times \nu}$, $C \in \mathbb{R}^{1 \times n}$, $D \in \mathbb{R}$, $Q \in \mathbb{R}^{1 \times \nu}$, $x(t) \in \mathbb{R}^n$ the state, $u(t) \in \mathbb{R}$ the control input, $e(t) \in \mathbb{R}$ the regulation error, and $\omega(t) = [r(t),\; d(t)^{\top}]^{\top} \in \mathbb{R}^{v}$ the exogenous signal representing the disturbances $d(t) \in \mathbb{R}^{\nu-1}$ and/or reference signal $r(t) \in \mathbb{R}$. 

The classical linear output regulation problem considers the exogenous signals as the solutions of systems in implicit form~\cite{ref:huang2004nonlinear}
\begin{equation}\label{equ:implicitGen}
\dot{\omega}(t)=S \omega(t), \qquad \omega(t_0) = \omega_0,
\end{equation}
where $S \in \mathbb{R}^{\nu \times \nu}$. As mentioned in Section~\ref{sec:intro}, since this model cannot generate signals that are not differentiable, we study non-periodic non-smooth exogenous signals by expressing $\omega$ as the solution of a generator in explicit form~(\ref{equ:explicitGen}), which, compared with~(\ref{equ:implicitGen}), can represent a wider range of signals. For instance, exosystem~(\ref{equ:explicitGen}) produces all signals generated by system~(\ref{equ:implicitGen}) with $\Lambda(t,t_0) = e^{S(t-t_0)}$, and even all signals generated by a time-varying system of the form
\begin{equation}\label{equ:timeVaryingGen}
\dot{\omega}(t)= \tilde{S}(t) \omega(t), \qquad \omega(t_0) = \omega_0,
\end{equation}
with $\tilde{S}(t) \in \mathbb{R}^{\nu \times \nu}$. In this time-varying case, $\Lambda\left(t, t_0\right)$ is the transition matrix associated to~(\ref{equ:timeVaryingGen})~\cite[Section 3]{ref:brockett2015finite}. Moreover, generator~(\ref{equ:explicitGen}) can also express signals provided by possibly time-varying hybrid systems of the form
\begin{equation}\label{equ:hybridGen}
    \begin{aligned}
    \dot{\omega}(t, k) = \bar{S}(t, k) \omega(t, k), \quad
    \omega(t, k+1) = J(t, k) \omega(t, k),
    \end{aligned}
\end{equation}
where $\bar{S}(t, k) \in \mathbb{R}^{\nu \times \nu}$ and $J(t, k) \in \mathbb{R}^{\nu \times \nu}$. This hybrid system flows and jumps according to some hybrid time domain to be specified. Note that generator~(\ref{equ:explicitGen}) is inherently a more direct representation of exogenous signals (\textit{i.e.} ``explicit'') when compared with other modeling frameworks. For example, a square wave, indicated by the symbol $\sqcap$, can be expressed by~(\ref{equ:explicitGen}) by setting $\Lambda\left(t, t_0\right) = \sqcap(t)$, directly, without the need of specifying whether this signal is generated by, \textit{e.g.} a nonlinear system $\sqcap(t)=\operatorname{sign}(\sin (t-t_0))$, or by the hybrid system~(\ref{equ:hybridGen}) for an opportune selection of constant matrices $S$ and $J$.

With this explicit generator~(\ref{equ:explicitGen}) at hand, we focus on the regulation problem with non-smooth exogenous signal $\omega$, and we discuss the assumptions that make the signal $\omega$ well-behaved. We first stress that, differently from the literature of output regulation of hybrid systems, we do not assume periodicity. With respect to non-smoothness, we consider piecewise continuous signals, as this does not exclude any signal of practical interest, \textit{e.g.} (possibly time-varying) square or triangular waves. 
We should also require some additional properties to bring the explicit generator~(\ref{equ:explicitGen}) closer to the standard output regulation setting. For instance, one would like the uniqueness of the solution $\omega$ generated by~(\ref{equ:explicitGen}) for all times. To guarantee this uniqueness, we require that $\Lambda$ is non-singular for all times. Also this assumption is not restrictive. For instance, any signal representable by (\ref{equ:implicitGen}) can be expressed by (\ref{equ:explicitGen}) with $\Lambda\left(t, t_0\right) = e^{S(t-t_0)}$, which is invertible for all times. This invertibility is also a standard property of the transition matrix of~(\ref{equ:timeVaryingGen}). More generally, given a signal of interest $\theta(t) = \tilde{P} \tilde{\omega}(t)$, the following proposition implies that it is always possible to construct $\Lambda$ invertible, possibly by inflating its dimension, that generates the same signal.
\begin{prop}\label{prop:LambdaDesign}
    Given a piecewise continuous signal $\theta(t) = \tilde{P} \tilde{\omega}(t)$, where $\tilde{P}\in\mathbb{R}^{1 \times \tilde{\nu}}$ and $\tilde{\omega}(t) \in \mathbb{R}^{\tilde{\nu}}$ satisfies $\tilde{\omega}(t) = [\Phi^{\omega}_1(t, t_0)\tilde{\omega}_1, \Phi^{\omega}_2(t, t_0)\tilde{\omega}_2$, $\cdots, \Phi^{\omega}_{\hat{\nu}}(t, t_0) \tilde{\omega}_{\hat{\nu}}]^{\top}$ with $\tilde{\omega}_i$'s arbitrary initial conditions and $i = 1, 2, \cdots, \tilde{\nu}$, it is always possible to construct a matrix-valued function $\Lambda \in \mathbb{R}^{\nu \times \nu}$ with $\nu \geq \tilde{\nu}$ and $\Lambda$ invertible for all times, a matrix  $\hat{P}\in\mathbb{R}^{1 \times \nu}$ and a vector $\omega_0\in\mathbb{R}^\nu$ such that $\theta(t) = \hat{P}\Lambda(t,t_0) \omega_0$.
\end{prop}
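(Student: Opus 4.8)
The plan is to exhibit the obvious candidate generator that already reproduces $\theta$, observe that its only defect is possible loss of rank at isolated instants, and then repair this by a dimension-inflating embedding whose determinant is bounded away from zero \emph{uniformly} in $t$.

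First I would collect the individual modes into a single block-diagonal transition matrix. Setting $\Lambda_0(t,t_0) = \operatorname{diag}\bigl(\Phi^{\omega}_1(t,t_0),\dots,\Phi^{\omega}_{\hat\nu}(t,t_0)\bigr) \in \mathbb{R}^{\tilde\nu\times\tilde\nu}$ and $\tilde\omega_0 = \operatorname{col}(\tilde\omega_1,\dots,\tilde\omega_{\hat\nu})$, the stacked form of the hypothesis gives $\tilde\omega(t) = \Lambda_0(t,t_0)\tilde\omega_0$, hence $\theta(t) = \tilde P\Lambda_0(t,t_0)\tilde\omega_0$. If $\Lambda_0$ happens to be non-singular for every $t$, the claim is immediate with $\nu = \tilde\nu$, $\Lambda = \Lambda_0$, $\hat P = \tilde P$, $\omega_0 = \tilde\omega_0$; the entire difficulty is therefore concentrated in the case where $\Lambda_0$ degenerates at some instants, and this is what the inflation must cure.

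To remove the degeneracy I would double the dimension and introduce a skew coupling, defining
\begin{equation*}
\Lambda(t,t_0) = \begin{bmatrix} \Lambda_0(t,t_0) & -I_{\tilde\nu}\\ I_{\tilde\nu} & \Lambda_0(t,t_0)^{\top} \end{bmatrix} \in \mathbb{R}^{2\tilde\nu\times 2\tilde\nu}.
\end{equation*}
Since the two lower blocks $I_{\tilde\nu}$ and $\Lambda_0^{\top}$ commute, Silvester's block-determinant identity yields $\det\Lambda(t,t_0) = \det\bigl(\Lambda_0(t,t_0)\Lambda_0(t,t_0)^{\top} + I_{\tilde\nu}\bigr)$. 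The matrix $\Lambda_0\Lambda_0^{\top}$ is symmetric and positive semidefinite for every $t$, so its eigenvalues are non-negative and $\det(\Lambda_0\Lambda_0^{\top}+I_{\tilde\nu})\geq 1>0$ regardless of whether $\Lambda_0$ was singular. Hence $\Lambda$ is invertible for all times with $\nu = 2\tilde\nu \geq \tilde\nu$, as required. It then remains only to fix the read-out: choosing $\omega_0 = \operatorname{col}(\tilde\omega_0,\ \bm{0}_{\tilde\nu\times 1})$ and $\hat P = [\,\tilde P\ \ \bm{0}_{1\times\tilde\nu}\,]$ gives $\Lambda(t,t_0)\omega_0 = \operatorname{col}\bigl(\Lambda_0(t,t_0)\tilde\omega_0,\ \tilde\omega_0\bigr)$, so that $\hat P\Lambda(t,t_0)\omega_0 = \tilde P\Lambda_0(t,t_0)\tilde\omega_0 = \theta(t)$, closing the argument. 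If in addition one wants the normalization $\Lambda(t_0,t_0)=I$ natural to the generator~(\ref{equ:explicitGen}), post-multiplying $\Lambda$ by $\Lambda(t_0,t_0)^{-1}$ and replacing $\omega_0$ by $\Lambda(t_0,t_0)\omega_0$ leaves the output untouched.

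I expect the only genuinely delicate point to be the choice of the coupling blocks. The naive symmetric augmentation $\bigl[\begin{smallmatrix}\Lambda_0 & -I\\ I & \Lambda_0\end{smallmatrix}\bigr]$ would give $\det(\Lambda_0^2+I)$, which \emph{can} vanish whenever $\Lambda_0$ carries a pair of eigenvalues $\pm i$, so it does not settle the matter; replacing the lower-right block by the transpose $\Lambda_0^{\top}$ is precisely what converts the determinant into $\det(\Lambda_0\Lambda_0^{\top}+I)$ and makes strict positivity automatic for every real $\Lambda_0$ and every $t$. Getting this embedding right is the crux; the remaining steps are bookkeeping.
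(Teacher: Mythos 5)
Your proof is correct, and it reaches the conclusion by a genuinely different construction than the paper. The paper works mode by mode: it keeps each scalar mode $\Phi^{\omega}_i$ as a $1\times 1$ block whenever it never vanishes, and only inflates the degenerate modes into $2\times 2$ rotation-like blocks $\bigl[\begin{smallmatrix}\Phi^{\omega}_i & -\hat{\Phi}^{\omega}_i\\ \hat{\Phi}^{\omega}_i & \Phi^{\omega}_i\end{smallmatrix}\bigr]$, where $\hat{\Phi}^{\omega}_i$ is a freely chosen auxiliary signal that is nonzero at least when $\Phi^{\omega}_i$ vanishes, so the block determinant $(\Phi^{\omega}_i)^2+(\hat{\Phi}^{\omega}_i)^2$ stays positive; the final $\Lambda$ is block-diagonal and $\hat P,\omega_0$ are zero-padded. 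This buys dimension economy ($\nu\le 2\tilde\nu$, with $\nu=\tilde\nu$ when nothing degenerates, and only the offending modes doubled — this is exactly the construction reused in the circuit example, where only $\hat\Lambda_r$ is a $2\times2$ block), at the mild cost of having to exhibit a suitable companion signal $\hat{\Phi}^{\omega}_i$ for each degenerate mode. Your global doubling with constant identity couplings avoids any auxiliary signal entirely and gives the uniform bound $\det\Lambda\ge 1$ (in fact $\Lambda^{\top}\Lambda \succeq I$ block-wise, so $\|\Lambda^{-1}\|\le 1$, which is a pleasant bonus in view of the bounded-inverse requirement of Assumption~1), but it always pays $\nu=2\tilde\nu$ even when no mode ever vanishes. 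One small remark: your "delicate point" about $\pm i$ eigenvalues is moot in this setting, because the $\Phi^{\omega}_i$ are scalar-valued, so $\Lambda_0$ is a real diagonal matrix and $\det(\Lambda_0^2+I)=\prod_i\bigl((\Phi^{\omega}_i)^2+1\bigr)\ge 1$ already; the transpose trick is extra generality (correct, and the right insurance if the blocks were genuinely matrix-valued) rather than a necessity here.
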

\begin{proof}
We start by proving that for each $\Phi^{\omega}_i$, there always exists an invertible $\Lambda_{i}(t, t_0) \in \mathbb{R}^{\nu_i \times \nu_i}$ with $\nu_i \in \{1, 2\}$ that can generate $\tilde{\omega}_{i}$. In fact, if $\Phi^{\omega}_i(t, t_0) \neq 0$ for all times, then $\Lambda_{i}(t, t_0) = \Phi^{\omega}_i(t, t_0)$ with $\Lambda_{i}$ invertible for all times. The less obvious case is $\Phi^{\omega}_i(t, t_0) = 0$ for some values of $t$. In this case, an invertible $\Lambda_{i}$ can be constructed as\footnote{ A practical example of this procedure is given in Section~\ref{sec:example}.}
\begin{equation*}
    \Lambda_{i}(t, t_0)=\left[\begin{array}{cr}
    \Phi^{\omega}_i(t, t_0) & -\hat{\Phi}^{\omega}_i(t, t_0) \\
    \hat{\Phi}^{\omega}_i(t, t_0) & \Phi^{\omega}_i(t, t_0)
    \end{array}\right],
\end{equation*}
with $\hat{\Phi}^{\omega}_i(t, t_0) \neq 0$ at least when $\Phi_i^\omega(t, t_0)=0$. Then, if $\hat{\nu} \geq 1$, $\Lambda$ can be constructed by $\Lambda = \operatorname{diag}(\Lambda_{1}, \Lambda_{2}, \cdots, \Lambda_{\hat{\nu}})$ and $\hat{P}$ and $\omega_0$ can be selected by inflating $\tilde{P}$ and $\tilde{\omega}_0$ with zeros.
\end{proof}

In addition, we need to consider the boundedness of $\Lambda$ and its inverse. The previous full-information study~\cite{ref:niu2024full} assumed that both $\Lambda$ and its inverse are bounded. In this article, we keep the boundedness of $\Lambda^{-1}$ to guarantee that the exogenous signal does not vanish to zero, but we relax the boundedness of $\Lambda$ to be just \textit{finite-time bounded}\footnote{This means that $\Lambda(t, t_0)$ is bounded for any finite-time $t$.}. This means that the exogenous signals can diverge to infinity, even faster than exponentially. This relaxed assumption is an extension of the traditional linear output regulation problem, in which the exosystem in implicit form~(\ref{equ:implicitGen}) is assumed to satisfy $\sigma(S) \in \mathbb{C}_{\geq 0}$~\cite[Assumption 1.1]{ref:huang2004nonlinear}. Note also that this assumption is less restrictive when compared with the standard hypothesis of the traditional nonlinear output regulation problem. In that case the exosystem $\dot{\omega} = s(\omega)$ is assumed to be Poisson stable~\cite[Section 8.1]{ref:isidori1995nonlinear}, \textit{i.e.}, all the eigenvalues of the matrix $[\frac{\partial s}{\partial \omega}]_{\omega = 0}$ lie on the imaginary axis and every trajectory of $\omega$ is periodic.


The aforementioned properties of the exogenous signal generator are formalized in the following assumption\footnote{Notably, we do not assume the semigroup property, namely we do not assume that $\Lambda(t_2,t_0)=\Lambda(t_2,t_1)\Lambda(t_1,t_0)$ for all $t_2 > t_1 > t_0$.}.
\begin{assumpt}\label{asmp:exoProp}
The matrix-valued function $\Lambda$ is piecewise continuous, non-singular, and finite-time bounded with $\Lambda^{-1}$ bounded for all times.
\end{assumpt}

Before looking into the regulation problem, we provide a preliminary technical result to characterise the asymptotic behaviour of a class of systems. This result plays an important role in solving the regulation problem in the later sections. To this end, we consider a linear time-varying system of the form
\begin{equation}\label{equ:expStableSystem}
    \dot{x}_g(t) = A_g(t)x_g(t) + B_g(t)\omega(t),
\end{equation}
where $x(t) \in \mathbb{R}^{g}$, $A_g(t) \in \mathbb{R}^{g \times g}$ and $B_g(t) \in \mathbb{R}^{g \times \nu}$ are bounded piecewise continuous, and $\omega$ is generated by the explicit exosystem~(\ref{equ:explicitGen}). Denote the state transition matrix of the system $\dot{x}_g(t) = A_g(t)x_g(t)$ as $\Phi_{g}(t, t_0)$, \textit{i.e.}, $\dot{\Phi}_{g}(t, t_0) = A_g(t)\Phi_{g}(t, t_0)$. Then the state trajectory of system~(\ref{equ:expStableSystem}) can be asymptotically characterized by the trajectories of $\omega$ under a certain stability condition, as shown next.

\begin{lem}\label{lem:moment}
Consider the interconnection of systems~(\ref{equ:expStableSystem}) and~(\ref{equ:explicitGen}). Suppose Assumption~\ref{asmp:exoProp} holds and system~(\ref{equ:expStableSystem}) with $\omega \equiv 0$ is exponentially stable. Then for any $\Pi_{g}(t_0) \in \mathbb{R}^{g \times \nu}$, the matrix-valued function $\Pi_{g}(t) \in \mathbb{R}^{g \times \nu}$ expressed by
\begin{equation}\label{equ:PIintegral}
\!\!\!\Pi_{g}(t) \!\!=\!\! \bigg(\!\! \Phi\!_{g}(t, t_0) \Pi_{g}(t_0) \!+\!\!\! \int_{t_0}^t \!\!\!\Phi_{g}(t, \tau) B_g(\tau) \Lambda(\tau, t_0) d \tau \!\! \bigg)\! \Lambda(t, t_0)\!^{-\!1}\!\!\!\!
\end{equation}
is bounded piecewise continuous and such that $\lim _{t \rightarrow+\infty}x_{g}(t)-\Pi_{g}(t) \omega(t)=\bm{0}_{g \times 1}$. Moreover, the matrix-valued function $\Psi_{g}(t) := \Pi_{g}(t)\Lambda(t, t_0) \in \mathbb{R}^{g \times \nu}$ is the unique solution to the differential equation
\begin{equation}\label{equ:PsiSystem}
    \dot{\Psi}_{g}(t) = A_g(t)\Psi_{g}(t) + B_g(t)\Lambda(t, t_0),
\end{equation}
and is such that $\lim _{t \rightarrow+\infty}x_{g}(t)-\Psi_{g}(t) \omega_0=\bm{0}_{n \times 1}$.
\end{lem}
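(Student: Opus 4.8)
The plan is to make $\Psi_g(t):=\Pi_g(t)\Lambda(t,t_0)$ the primary object, since multiplying~\eqref{equ:PIintegral} on the right by $\Lambda(t,t_0)$ clears the inverse and yields
\[
\Psi_g(t)=\Phi_g(t,t_0)\Pi_g(t_0)+\int_{t_0}^t\Phi_g(t,\tau)B_g(\tau)\Lambda(\tau,t_0)\,d\tau,
\]
which I recognise as the variation-of-constants formula for~\eqref{equ:PsiSystem} with initial value $\Psi_g(t_0)=\Pi_g(t_0)$ (consistent with evaluating~\eqref{equ:PIintegral} at $t=t_0$, where $\Phi_g(t_0,t_0)=\Lambda(t_0,t_0)=I$). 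First I would differentiate this expression, using $\dot\Phi_g(t,\tau)=A_g(t)\Phi_g(t,\tau)$ and the Leibniz rule, to confirm that $\Psi_g$ indeed satisfies~\eqref{equ:PsiSystem}; uniqueness then follows from the standard existence--uniqueness theory for linear ODEs with piecewise-continuous (Carath\'eodory) coefficients, so $\Psi_g$ is the only solution carrying the initial value $\Pi_g(t_0)$.

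Second, I would establish the asymptotic characterisation, which is the cleanest part. Writing the genuine solution of~\eqref{equ:expStableSystem} by variation of constants and using $\omega(\tau)=\Lambda(\tau,t_0)\omega_0$ gives $x_g(t)=\Phi_g(t,t_0)x_g(t_0)+\big[\int_{t_0}^t\Phi_g(t,\tau)B_g(\tau)\Lambda(\tau,t_0)\,d\tau\big]\omega_0$. Substituting the integral from the formula for $\Psi_g$ above, the two expressions collapse into the compact identity
\[
x_g(t)-\Psi_g(t)\omega_0=\Phi_g(t,t_0)\big(x_g(t_0)-\Pi_g(t_0)\omega_0\big).
\]
Exponential stability of $\dot x_g=A_g(t)x_g$ gives $\|\Phi_g(t,t_0)\|\le M e^{-\lambda(t-t_0)}$ for some $M,\lambda>0$, so the right-hand side vanishes as $t\to+\infty$; since $\Psi_g(t)\omega_0=\Pi_g(t)\Lambda(t,t_0)\omega_0=\Pi_g(t)\omega(t)$, this simultaneously proves $x_g-\Pi_g\omega\to\bm{0}$ and $x_g-\Psi_g\omega_0\to\bm{0}$. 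Notably, this step requires neither boundedness of $\Pi_g$ nor of $\Lambda$.

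Finally, for boundedness and piecewise continuity of $\Pi_g$ I would return to~\eqref{equ:PIintegral} and estimate term by term. Piecewise continuity is inherited from that of $\Phi_g$, $B_g$, $\Lambda$ and $\Lambda^{-1}$, together with continuity of the integral. For boundedness, the first term obeys $\|\Phi_g(t,t_0)\Pi_g(t_0)\Lambda(t,t_0)^{-1}\|\le M e^{-\lambda(t-t_0)}\|\Pi_g(t_0)\|\,L$ with $L:=\sup_t\|\Lambda(t,t_0)^{-1}\|<\infty$, which decays to zero. The crux is the integral term, where the factor $\Lambda(\tau,t_0)\Lambda(t,t_0)^{-1}$ appears: here I expect the main obstacle, because this factor cannot be collapsed to $\Lambda(\tau,t)$ (the semigroup property is explicitly not assumed), and $\Lambda$ itself is only finite-time bounded, hence possibly unbounded. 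The intended mechanism is that the exponentially decaying kernel penalises precisely the region $\tau\ll t$ where $\Lambda(\tau,t_0)$ could be large relative to $\Lambda(t,t_0)$, yielding
\[
\left\|\int_{t_0}^t\Phi_g(t,\tau)B_g(\tau)\Lambda(\tau,t_0)\Lambda(t,t_0)^{-1}d\tau\right\|\le M\bar B L\int_{t_0}^t e^{-\lambda(t-\tau)}\|\Lambda(\tau,t_0)\|\,d\tau,
\]
or, more sharply, keeping $\Lambda(\tau,t_0)\Lambda(t,t_0)^{-1}$ grouped and bounding its norm uniformly for $t_0\le\tau\le t$. Controlling the growth of $\Lambda$ against the decay rate $\lambda$ is the delicate point on which the whole boundedness claim rests, and is where I would concentrate the effort.
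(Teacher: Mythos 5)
Your first two steps coincide with the paper's proof: the identity $x_g(t)-\Psi_g(t)\omega_0=\Phi_g(t,t_0)\bigl(x_g(t_0)-\Pi_g(t_0)\omega_0\bigr)$ together with exponential decay of $\Phi_g$ is exactly how the paper handles the asymptotics, and the paper likewise obtains uniqueness of $\Psi_g$ from absolute continuity of the variation-of-constants expression plus Carath\'eodory's theorem. So far, no divergence of substance.

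The gap is in the third step, and you have correctly located but not closed it. Your displayed estimate
$M\bar B L\int_{t_0}^t e^{-\lambda(t-\tau)}\|\Lambda(\tau,t_0)\|\,d\tau$
is a dead end, not merely a loose intermediate bound: Assumption~\ref{asmp:exoProp} only requires $\Lambda$ to be \emph{finite-time} bounded, and the paper explicitly targets exogenous signals that diverge \emph{faster than exponentially}, in which case $e^{-\lambda(t-\tau)}\|\Lambda(\tau,t_0)\|$ is not integrable uniformly in $t$ and the right-hand side blows up. Your ``more sharply'' alternative --- keeping $\Lambda(\tau,t_0)\Lambda(t,t_0)^{-1}$ grouped and bounding it uniformly over $t_0\le\tau\le t$ --- is the correct route, but you offer no argument for why this grouped factor is uniformly bounded, and that is the one genuinely nontrivial claim in the lemma (it does not follow from submultiplicativity, which just reproduces the failed bound, nor from a semigroup identity, which is expressly not assumed). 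The paper's device is the auxiliary matrix
\begin{equation*}
W_{\Lambda}(t,t_0):=\lim_{t_f\to+\infty}\Lambda(t,t_0)\Lambda(t_f,t_0)^{-1},
\end{equation*}
which is finite-time bounded (since $\Lambda$ is finite-time bounded and $\Lambda^{-1}$ is bounded) and satisfies $\lim_{t\to+\infty}W_{\Lambda}(t,t_0)=I_{\nu}$, hence is bounded for all times by some $h_2$; passing to the limit $t\to+\infty$ in~(\ref{equ:PIintegral}) then replaces the grouped factor by $W_\Lambda(\tau,t_0)$ and gives $\lim_{t\to+\infty}\|\Pi_g(t)\|\le \alpha h_1 h_2/\beta$, which combined with finite-time existence yields boundedness for all times. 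Without this construction (or an equivalent uniform bound on $\Lambda(\tau,t_0)\Lambda(t,t_0)^{-1}$), your proof establishes everything in the lemma except the boundedness of $\Pi_g$, which is precisely the property the rest of the paper relies on when invoking this result.
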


\begin{proof}
See Appendix~\ref{equ:lemMomentProof} for a proof.
\end{proof}

\begin{remark}
    We highlight that the matrix-valued function $\Psi_{g}(t) := \Pi_{g}(t)\Lambda(t, t_0)$ is absolutely continuous (even when $\Lambda(t, t_0)$ is discontinuous) and the solution to the differential equation~(\ref{equ:PsiSystem}) 
    exists and is unique according to Carathéodory's theorem. This property, applied to analogous matrices, is used extensively in the paper.
\end{remark}

\begin{remark}
A result similar to that of Lemma~\ref{lem:moment} was first introduced in~\cite[Theorem 5.1]{ref:scarciotti2015model, ref:scarciotti2017nonlinear} in the context of model reduction by moment matching. However, that result required several additional technical assumptions (\textit{e.g.} polynomial boundedness) which have been dropped in the present formulation. As such, Lemma~\ref{lem:moment} is more general as it requires fewer and less restrictive assumptions. 
\end{remark}


\section{Full-Information Problem}\label{sec:FullInfoProb}
In this section, we study the full-information output regulation problem for linear systems with non-periodic non-smooth exogenous signals, and we consider a dynamic state feedback controller as the regulator to solve the problem. More specifically, we look for a dynamic state feedback controller of the form
\begin{equation}\label{equ:StateFdRegulator}
u(t) = K x(t) + \Gamma(t) \omega(t),
\end{equation}
with $K \in \mathbb{R}^{1 \times n}$, and $\Gamma(t) \in \mathbb{R}^{1 \times v}$ being a matrix-valued function to be designed. Note that differently from the standard linear theory of smooth regulation, $\Gamma$ is a time-varying matrix. The reason for this choice will be clear later. The problem is formulated as follows.

\begin{problem}[Full-information Non-smooth Output Regulation Problem]\label{prob:ORFullInfo}
Consider system~(\ref{equ:system}) interconnected with the exosystem~(\ref{equ:explicitGen}) under Assumption~\ref{asmp:exoProp}. The output regulation problem consists in designing a state-feedback regulator~(\ref{equ:StateFdRegulator}) such that:
\begin{itemize}[leftmargin=28pt]
    \item[\textbf{($\mathbf{S_F}$)}] The origin of the closed-loop system obtained by interconnecting system~(\ref{equ:system}), the exosystem (\ref{equ:explicitGen}), and the regulator with $\omega(t) \equiv 0$ is asymptotically stable.
    \item[\textbf{($\mathbf{R_F}$)}] The closed-loop system obtained by interconnecting system~(\ref{equ:system}), the exosystem~(\ref{equ:explicitGen}), and the regulator satisfies $\lim_{t \rightarrow +\infty} e(t) = 0$
    uniformly for any $x(t_0) \in \mathbb{R}^{n}$ and $\omega(t_0) \in \mathbb{R}^{\nu}$.
\end{itemize}
\end{problem}

To ensure that the problem can be solved, a standard assumption on system~(\ref{equ:system}) is introduced.
\begin{assumpt}\label{asmp:systemProp}
The pair $(A, \; B)$ is stabilizable.
\end{assumpt}

By interconnecting system~(\ref{equ:system}), exosystem~(\ref{equ:explicitGen}) and the state-feedback controller~(\ref{equ:StateFdRegulator}), we obtain the closed-loop system
\begin{equation}\label{equ:systemClosedLoopStateFd}
    \begin{aligned}
    \dot{x}(t) &= A_c x(t) + P_c(t) \omega(t),\\
    e(t) &= C_c x(t) + Q_c(t) \omega(t),\\
    \omega(t) &= \Lambda\left(t, t_0\right) \omega_0
    \end{aligned}
\end{equation}
with $A_c = A+B K$, $P_c(t) = P+B \Gamma(t)$, $C_c = C+D K$, and $Q_c(t) = Q+D \Gamma(t)$. Note that to solve the problem using regulator~(\ref{equ:StateFdRegulator}), we require that $\Gamma$ is piecewise continuous and bounded as $\omega$ is assumed to be piecewise continuous and finite-time bounded. The boundedness requirement of $\Gamma$ may not be clear at this stage, but this property is consistent with the traditional linear output regulation problem. In that case, there exists a constant gain matrix $\bar{\Gamma}$ solving the problem even when $\omega$ generated by~(\ref{equ:implicitGen}) is diverging, \textit{i.e.}, some of the eigenvalues of $S$ contain positive real part.

\subsection{Solution of the Full-Information Problem}\label{sec:solutionFIP}
Now we provide the solution to Problem~\ref{prob:ORFullInfo} in the form of regulator equations and we stress that the result of this section is instrumental for the development of solutions in the error-feedback case. 

\begin{thm}\label{thm:solutionWithD}
Consider Problem~\ref{prob:ORFullInfo}. Suppose Assumptions~\ref{asmp:exoProp} and~\ref{asmp:systemProp} hold. Then there exist a matrix $K$ and a bounded piecewise continuous $\Gamma(\cdot)$ such that the regulator~(\ref{equ:StateFdRegulator}) solves the full-information output regulation problem if and only if there exist bounded piecewise continuous matrix-valued functions $\Pi_x(t) \in \mathbb{R}^{n \times \nu}$ and $\Delta(t) \in \mathbb{R}^{1 \times \nu}$ that solve
\begin{equation}\label{equ:reguEquation}
    \begin{aligned}
    \Pi_x(t) \!\!&=\!\! \bigg(\!\! e^{A (t-t_0)} \Pi_x(t_0) \!+\!\!\! \int_{t_0}^t \!\!\!e^{A(t-\tau)}\! P_{\Delta}(\tau) \Lambda(\tau, t_0) d \tau \!\! \bigg)\! \Lambda(t, t_0)\!^{-\!1}\!\!\!, \\
    \bm{0}_{1 \times \nu} \!&= \lim_{t \rightarrow +\infty} C\Pi_x(t) + D\Delta(t) + Q.
    \end{aligned}
\end{equation}
for all $t \geq t_0$, where $P_{\Delta}(t) = P+B\Delta(t)$.
\end{thm}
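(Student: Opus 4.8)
The plan is to prove the two directions of the equivalence separately, using the change of variables that converts the closed-loop regulation problem into the language of Lemma~\ref{lem:moment}. The key observation is that the closed-loop dynamics~(\ref{equ:systemClosedLoopStateFd}) have the exact structure of system~(\ref{equ:expStableSystem}) with $A_g(t) = A_c = A+BK$ (constant), $B_g(t) = P_c(t) = P + B\Gamma(t)$, and transition matrix $\Phi_g(t,t_0) = e^{A_c(t-t_0)}$. Stability requirement \textbf{($\mathbf{S_F}$)} is equivalent to choosing $K$ so that $A_c$ is Hurwitz, which is possible precisely under Assumption~\ref{asmp:systemProp}; this reduces everything to the asymptotic analysis of the error under a Hurwitz closed loop.

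\begin{proof}
($\Leftarrow$) Suppose bounded piecewise continuous $\Pi_x(t)$ and $\Delta(t)$ solve~(\ref{equ:reguEquation}). By Assumption~\ref{asmp:systemProp} choose $K$ with $A_c = A+BK$ Hurwitz, so \textbf{($\mathbf{S_F}$)} holds. Define the time-varying feedforward gain by $\Gamma(t) = \Delta(t) - K\Pi_x(t)$, which is bounded and piecewise continuous. With this choice $P_c(t) = P + B\Gamma(t) = P + B\Delta(t) - BK\Pi_x(t) = P_\Delta(t) - BK\Pi_x(t)$. The strategy is to apply Lemma~\ref{lem:moment} with $A_g = A_c$, $B_g(t) = P_c(t)$: the lemma yields a bounded steady-state matrix $\Pi_c(t)$ solving the associated integral equation with $\lim_{t\to+\infty} x(t) - \Pi_c(t)\omega(t) = \bm{0}$. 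The main verification is that the given $\Pi_x(t)$ coincides with this $\Pi_c(t)$; I would check that $\Pi_x$, when written through $\Psi_x := \Pi_x \Lambda$, satisfies the same linear ODE~(\ref{equ:PsiSystem}) $\dot{\Psi}_x = A_c\Psi_x + P_c(t)\Lambda(t,t_0)$ as the lemma's unique solution, which follows by differentiating the integral form of~(\ref{equ:reguEquation}) and substituting $P_c = P_\Delta - BK\Pi_x$; uniqueness (Carathéodory) then forces $\Pi_x = \Pi_c$. Consequently $\lim_{t\to+\infty} x(t) - \Pi_x(t)\omega(t) = \bm{0}$. Finally, writing $e(t) = C_c x(t) + Q_c(t)\omega(t)$ and substituting $u = Kx + \Gamma\omega$ gives $e = Cx + D\Gamma\omega + Q\omega + DKx$; combined with the asymptotic relation and the second equation of~(\ref{equ:reguEquation}), the error converges to $\lim_{t}(C\Pi_x(t) + D\Delta(t) + Q)\omega(t) = 0$, establishing \textbf{($\mathbf{R_F}$)}. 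Boundedness of $\Lambda^{-1}$ (Assumption~\ref{asmp:exoProp}) ensures $\omega$ does not vanish, so the limit genuinely requires the bracketed term to vanish.

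($\Rightarrow$) Conversely, suppose some $K$ and bounded piecewise continuous $\Gamma$ solve the problem. Then \textbf{($\mathbf{S_F}$)} forces $A_c$ Hurwitz, so Lemma~\ref{lem:moment} applies to the closed loop and produces a bounded piecewise continuous $\Pi_x(t)$ given by the integral formula~(\ref{equ:reguEquation}) (with $P_\Delta$ replaced by $P_c$) such that $x(t) - \Pi_x(t)\omega(t) \to \bm{0}$. I then define $\Delta(t) = \Gamma(t) + K\Pi_x(t)$, which is bounded piecewise continuous and recovers $P_c = P_\Delta - BK\Pi_x$, so the first equation of~(\ref{equ:reguEquation}) holds. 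The asymptotic error expression above together with \textbf{($\mathbf{R_F}$)}, $e(t)\to 0$, and the fact that $\omega(t)$ is bounded away from zero in the relevant directions (via $\Lambda^{-1}$ bounded, letting $\omega_0$ range over $\mathbb{R}^\nu$) yields the second equation of~(\ref{equ:reguEquation}).

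The main obstacle I anticipate is the rigorous handling of the limit $\lim_{t\to+\infty}(C\Pi_x(t) + D\Delta(t) + Q)\omega(t) = 0$ as an equation on the matrix-valued coefficient rather than merely on its product with $\omega$. Because $\omega(t) = \Lambda(t,t_0)\omega_0$ can be unbounded (finite-time bounded only) and the coefficient is time-varying, one cannot simply ``divide out'' $\omega$ pointwise; the argument must exploit that $\omega_0$ is arbitrary together with the boundedness of $\Lambda^{-1}$ to conclude convergence of the coefficient matrix to zero uniformly, which is exactly where the relaxed Assumption~\ref{asmp:exoProp} does its work.
\end{proof}
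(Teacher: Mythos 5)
Your proposal is correct and takes essentially the same route as the paper's own proof: the same choice of $K$ under Assumption~\ref{asmp:systemProp}, the same change of variables $\Gamma = \Delta - K\Pi_x$ (and $\Delta = \Gamma + K\Pi_x$ for necessity), the same identification of $\Psi_x = \Pi_x\Lambda$ as the unique Carath\'eodory solution of $\dot{\Psi}_x = A_c\Psi_x + P_c(t)\Lambda(t,t_0)$ to pass between the open-loop and closed-loop integral equations via Lemma~\ref{lem:moment}, and the same error decomposition $e = C_c(x - \Pi_x\omega) + (C_c\Pi_x + Q_c)\omega$. Even the final subtlety you flag --- concluding that the coefficient matrix itself vanishes from $\lim_{t\to+\infty}\bigl((C+DK)\Pi_x(t) + Q + D\Gamma(t)\bigr)\omega(t) = 0$ by letting $\omega_0$ range over $\mathbb{R}^{\nu}$ and using the boundedness of $\Lambda^{-1}$ --- is handled in the paper exactly as you propose.
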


\begin{proof}
First note that, under Assumption~\ref{asmp:systemProp}, there exists a matrix $K$ such that $\sigma(A+BK) \subset \mathbb{C}_{<0}$, \textit{i.e.}, condition \textbf{($\mathbf{S_F}$)} holds. Then we need to show that there exists a bounded piecewise continuous $\Gamma$ such that condition \textbf{($\mathbf{R_F}$)} holds if and only if there exist bounded piecewise continuous solutions $\Pi_x$ and $\Delta$ solving~(\ref{equ:reguEquation}). Now let $\Psi_x(t) \!=\! \Pi_x(t)\Lambda(t, t_0)$ and $\Gamma(t) = \Delta(t) - K\Pi_x(t)$. Then Lemma~\ref{lem:moment} implies that $\Psi_x$ is the unique solution to
\begin{equation*}
\dot{\Psi}_x(t) = A\Psi_x(t) + P_{\Delta}(t)\Lambda(t, t_0) = A_c\Psi_x(t) + P_c(t)\Lambda(t, t_0),
\end{equation*}
with $A_c = A + BK$ and $P_c(t) = P+B \Gamma(t)$. This result indicates that $\Psi_x$ can be equivalently expressed as
\begin{equation*}
    \Psi_x(t) = e^{A_c(t-t_0)}\Psi_x(t_0) + \int_{t_0}^t e^{A_c(t-\tau)} P_c(\tau) \Lambda(\tau, t_0) d \tau.
\end{equation*}
Therefore, by invertibility of $\Lambda$ and the selection $\Gamma = \Delta - K\Pi_x$, the equations (\ref{equ:reguEquation}) are equivalent to
\begin{equation}\label{equ:solveRWithD}
    \begin{aligned}
    \!\!\Pi_x\!(t) \!\!&=\!\! \bigg(\!\! e^{A_c (t-t_0)} \Pi_x(t_0) \!\!+\!\!\! \int_{t_0}^t \!\!\!e^{A_c\!(t-\tau)}\! P_c(\tau)\! \Lambda(\tau, t_0) d \tau \!\! \bigg)\! \Lambda(t, t_0)\!^{-\!1}\!\!\!\!, \\
    \!\!\bm{0}_{1 \times \nu} \!&= \lim_{t \rightarrow +\infty}(C + DK)\Pi_x(t) + Q + D\Gamma(t).
    \end{aligned}
\end{equation}

\textit{Sufficiency:} Suppose there exist bounded piecewise continuous $\Pi_x$ and $\Delta$ that solve (\ref{equ:reguEquation}). Then $\Pi_x$ and $\Gamma = \Delta - K\Pi_x$ also solve~(\ref{equ:solveRWithD}). By Lemma~\ref{lem:moment}, $\lim _{t \rightarrow+\infty}(x(t)-\Pi_x(t) \omega(t))=\bm{0}_{n \times 1}$. The regulation error of the closed-loop system~(\ref{equ:systemClosedLoopStateFd}) can be written as
\begin{equation}\label{equ:clErrorinPi}
    e(t) = C_c\bigl(x(t) - \Pi_x(t)\omega(t)\bigl) + \bigl(C_c\Pi_x(t) + Q_c(t)\bigl)\omega(t),
\end{equation}
where we recall that $C_c = C+D K$, and $Q_c(t) = Q+D \Gamma(t)$. Then when the second equation in (\ref{equ:solveRWithD}) holds, the regulation error given by~(\ref{equ:clErrorinPi}) is such that $\lim_{t \rightarrow +\infty} e(t) = 0$, \textit{i.e.}, condition \textbf{($\mathbf{R_F}$)} is satisfied.

\textit{Necessity:} Assume that there exist a matrix $K$ and a bounded piecewise continuous matrix $\Gamma$ satisfying conditions \textbf{($\mathbf{S_F}$)} and \textbf{($\mathbf{R_F}$)}. Then under condition \textbf{($\mathbf{S_F}$)}, Lemma~\ref{lem:moment} yields that for any bounded piecewise continuous $\Gamma$, there exists a bounded piecewise continuous $\Pi_x$ solving the first equation in~(\ref{equ:solveRWithD}) and such that $\lim _{t \rightarrow+\infty}x(t)-\Pi_x(t) \omega(t)=\bm{0}_{n \times 1}$. Since condition \textbf{($\mathbf{R_F}$)} also holds, by~(\ref{equ:clErrorinPi}), we have
\begin{equation}\label{equ:clErrorinPiLim}
    \begin{aligned}
    \lim_{t\rightarrow+\infty} \!\! e(t) 
    =& \!\!\lim_{t\rightarrow+\infty} \bigl((C \!+\! DK)\Pi_x(t) + (Q \!+\! D\Gamma(t)\bigl)\omega(t) = 0.
    \end{aligned}
\end{equation} 
By Assumption~\ref{asmp:exoProp}, $\Lambda(t, t_0)$ does not vanish as $t$ goes to infinity (because its inverse is bounded). Considering the arbitrarity of $\omega(t_0)$, this shows that for~(\ref{equ:clErrorinPiLim}) to hold, the second equation of~(\ref{equ:solveRWithD}) must be satisfied. Since~(\ref{equ:solveRWithD}) and~(\ref{equ:reguEquation}) are equivalent with $\Gamma = \Delta - K\Pi_x$, the selection $\Delta = K\Pi_x+\Gamma$ shows that there exist bounded and piecewise continuous $\Pi_x$ and $\Delta$ solving~(\ref{equ:reguEquation}).
\end{proof}


By Theorem~\ref{thm:solutionWithD}, if bounded and piecewise continuous matrices $\Pi_x$ and $\Delta$ solving~(\ref{equ:reguEquation}) can be found, then the control law
\begin{equation}\label{equ:controlLaw}
u(t) = K x(t) + (\Delta(t) - K\Pi_x(t)) \omega(t)
\end{equation}
solves Problem~\ref{prob:ORFullInfo} with $K$ being any matrix such that the closed-loop system is asymptotically stable, \textit{i.e.}, $\sigma(A+BK) \subset \mathbb{C}_{<0}$. Moreover, since Lemma~\ref{lem:moment} implies that $\lim _{t \rightarrow+\infty}x(t)-\Pi_x(t) \omega(t)=\bm{0}_{n \times 1}$, Theorem~\ref{thm:solutionWithD} and the control law~(\ref{equ:controlLaw}) indicate that there exists a trajectory $u^{*} = \Delta\omega$ such that any input $u$ in~(\ref{equ:StateFdRegulator}) that solves the output regulation problem satisfies $\lim_{t \rightarrow+\infty}u(t)-u^{*}(t)=0$. In addition, since $\Delta$ is bounded and piecewise continuous with $\omega$ finite-time bounded and piecewise continuous, the expected input $u^{*}$ that solves the problem is also finite-time bounded and piecewise continuous.

By mimicking the common terminology used in the classical output regulation problem~\cite{ref:huang2004nonlinear}, we refer to~(\ref{equ:reguEquation}) as \textit{regulator equations} in this non-smooth setting. However, differently from time-varying/hybrid output regulation problems~\cite{ref:zhang2006linear,ref:marconi2013internal}, the regulator equations~(\ref{equ:reguEquation}) are not differential-algebraic equations (DAE). Consequently, it is not straightforward to check the solvability conditions and compute solutions of the regulator equations. To remedy this problem, we introduce the following corollary that transforms~(\ref{equ:reguEquation}) into differential-algebraic equations.
\begin{corol}\label{corol:equivDAE}
Suppose Assumptions~\ref{asmp:exoProp} and~\ref{asmp:systemProp} hold. There exist bounded piecewise continuous matrices $\Pi_x$ and $\Delta$ that solve the regulator equations~(\ref{equ:reguEquation}) if and only if there exist a positive constant $\hat{t}$ and matrix-valued functions $\Psi_x(t) \in \mathbb{R}^{n \times \nu}$ and $\hat{\Delta}(t) \in \mathbb{R}^{1 \times \nu}$ solving
\begin{subequations}\label{equ:solutionDAE}
    \begin{align}
    \dot{\Psi}_x(t) &= A\Psi_x(t) + (B\hat{\Delta}(t) + P)\Lambda(t, t_0), \label{equ:solutionDAE_a}\\
    \bm{0}_{1 \times \nu} &= C \Psi_x(t) + (D\hat{\Delta}(t) + Q)\Lambda(t, t_0), \label{equ:solutionDAE_b}
    \end{align}
\end{subequations}
for all $t \geq \hat{t}$, where $\hat{\Delta}$ is bounded and piecewise continuous.
\end{corol}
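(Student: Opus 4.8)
The plan is to adopt the substitution $\Psi_x(t)=\Pi_x(t)\Lambda(t,t_0)$ as the common dictionary between the two formulations, exactly as in the proof of Theorem~\ref{thm:solutionWithD}. First I would isolate two elementary equivalences that collapse the whole statement to a single discrepancy. Read with $\Delta=\hat\Delta$, the integral identity in the first line of~(\ref{equ:reguEquation}) is nothing but the variation-of-constants formula for the differential equation~(\ref{equ:solutionDAE_a}), so, by the same Carath\'eodory argument underlying Lemma~\ref{lem:moment}, it holds for a bounded piecewise-continuous $\Pi_x$ if and only if $\Psi_x=\Pi_x\Lambda$ solves~(\ref{equ:solutionDAE_a}). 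Moreover, since $\Lambda(t,t_0)$ is non-singular for all times (Assumption~\ref{asmp:exoProp}), right-multiplying~(\ref{equ:solutionDAE_b}) by $\Lambda(t,t_0)^{-1}$ shows the algebraic constraint is equivalent to $C\Pi_x(t)+D\hat\Delta(t)+Q=\bm 0_{1\times\nu}$. Hence, once the dynamic parts are matched through $\Psi_x=\Pi_x\Lambda$, the \emph{sole} difference between~(\ref{equ:reguEquation}) and~(\ref{equ:solutionDAE}) concerns the residual $\rho(t):=C\Pi_x(t)+D\Delta(t)+Q$: the regulator equations require $\rho(t)\to\bm 0_{1\times\nu}$ only in the limit, whereas the differential-algebraic equations require $\rho(t)=\bm 0_{1\times\nu}$ identically for all $t\ge\hat t$.

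For sufficiency I would take a solution $(\Psi_x,\hat\Delta)$ of~(\ref{equ:solutionDAE}) on $[\hat t,+\infty)$ with $\hat\Delta$ bounded and set $\Delta:=\hat\Delta$. Since $\rho$ vanishes identically on $[\hat t,+\infty)$, the limit condition in the second line of~(\ref{equ:reguEquation}) holds trivially, and the first line follows from~(\ref{equ:solutionDAE_a}) by variation of constants. The only delicate point is the boundedness of $\Pi_x=\Psi_x\Lambda^{-1}$, since the constraint alone yields just $C\Pi_x=-(D\hat\Delta+Q)$ bounded rather than $\Pi_x$ itself. I would address this by passing to the stabilized coordinates $A_c=A+BK$ with $\sigma(A+BK)\subset\mathbb{C}_{<0}$ (Assumption~\ref{asmp:systemProp}) and invoking Lemma~\ref{lem:moment}, exactly as boundedness of the steady-state map is obtained inside Theorem~\ref{thm:solutionWithD}, using that the corollary asks only for the \emph{existence} of a bounded solution.

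The necessity direction is where the real work lies, and I expect it to be the main obstacle: from a \emph{bounded} pair $(\Pi_x,\Delta)$ with $\rho(t)\to\bm 0_{1\times\nu}$ I must produce a possibly different pair $(\Psi_x,\hat\Delta)$, with $\hat\Delta$ bounded, whose residual is \emph{exactly} zero for all $t\ge\hat t$. Because $\rho\to\bm 0_{1\times\nu}$ does not force $\rho$ to be eventually identically zero, the same solution cannot be reused; instead I would construct an exact error-zeroing trajectory by inverting the output map. When $D\neq0$ the constraint solves for the input, $\hat\Delta=-D^{-1}(C\Psi_x\Lambda^{-1}+Q)$, turning~(\ref{equ:solutionDAE}) into the closed inverse-dynamics equation $\dot\Psi_x=(A-BD^{-1}C)\Psi_x+(P-BD^{-1}Q)\Lambda$; one then selects an initial condition at a suitable $\hat t$ and uses the established convergence $\rho\to\bm 0_{1\times\nu}$ to certify that the reconstructed $\hat\Delta$ remains bounded. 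When $D=0$ the same idea must be pushed through the relative-degree/zero-dynamics structure of $(A,B,C)$, which is precisely where the non-smoothness of $\Lambda$ forbids differentiating the constraint and the integral viewpoint of the paper becomes indispensable.

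The load-bearing step in necessity is therefore the boundedness of the reconstructed $\hat\Delta$ together with the identification of the threshold $\hat t$. The finite time $\hat t$ appears because the exact constraint can be enforced only after the transient of the inverse dynamics, started from a mismatched initial condition, has elapsed; and the boundedness of $\hat\Delta$ must be extracted from the decay of $\rho$ even though $\Lambda$ is merely finite-time bounded and may grow faster than exponentially. I would carry out this estimate in the stabilized coordinates, where the boundedness of $\Lambda^{-1}$ (Assumption~\ref{asmp:exoProp}) and Lemma~\ref{lem:moment} supply uniform bounds, and where the equivalence with Theorem~\ref{thm:solutionWithD} ensures that a bounded solution is available precisely when the regulator equations~(\ref{equ:reguEquation}) are solvable.
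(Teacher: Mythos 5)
Your reduction step coincides with the paper's: the published proof of Corollary~\ref{corol:equivDAE} consists exactly of the dictionary $\Psi_x = \Pi_x\Lambda(t,t_0)$, the Carath\'eodory/variation-of-constants correspondence inherited from the proof of Theorem~\ref{thm:solutionWithD}, and right-multiplication by $\Lambda$. But the paper keeps $\hat\Delta = \Delta$ throughout and treats the passage from the limit condition $\lim_{t\to+\infty}\bigl(C\Psi_x(t) + (D\hat\Delta(t)+Q)\Lambda(t,t_0)\bigr) = \bm{0}_{1\times\nu}$ to exact satisfaction of~(\ref{equ:solutionDAE_b}) for all $t\ge\hat t$ as a one-clause reformulation (``which is equivalent to the existence of two initial conditions\dots''), never reconstructing a new input. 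So where you locate ``the real work'', the paper sees a restatement. Your instinct that ``$\rho\to\bm 0$'' and ``$\rho\equiv\bm 0$ for $t\ge\hat t$'' are not interchangeable for a fixed solution is sound, but the repair you propose does not go through under the corollary's hypotheses.

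Concretely, the load-bearing step of your necessity direction fails. In the $D\neq 0$ branch, exactness is indeed free: any solution of $\dot\Psi_x = (A-BD^{-1}C)\Psi_x + (P-BD^{-1}Q)\Lambda$ together with $\hat\Delta = -D^{-1}(C\Psi_x\Lambda^{-1}+Q)$ satisfies~(\ref{equ:solutionDAE}) identically; the issue is boundedness of $\hat\Delta$. Setting $\Psi := \Pi_x\Lambda$ for the given bounded approximate pair and $\rho := C\Pi_x + D\Delta + Q$, one finds $\dot\Psi = A_\pi\Psi + (P-BD^{-1}Q)\Lambda + BD^{-1}\rho\Lambda$ with $A_\pi := A-BD^{-1}C$, so every exact solution differs from $\Psi$ by $e^{A_\pi(t-\hat t)}N - \int_{\hat t}^{t} e^{A_\pi(t-\tau)}BD^{-1}\rho(\tau)\Lambda(\tau,t_0)\,d\tau$; after right-multiplication by $\Lambda^{-1}$, the decay $\rho\to\bm 0$ does \emph{not} bound this convolution when $A_\pi$ has eigenvalues in $\mathbb{C}_{\geq 0}$ (compare a slowly decaying scalar $\rho(\tau)=e^{-\epsilon\tau}$ against an unstable mode $e^{\lambda(t-\tau)}$, $\lambda>0$: the integral grows like $e^{\lambda t}$, and the finite homogeneous freedom $N$ cannot cancel it in general). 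What your ``certification'' actually needs is the non-resonance/minimum-phase-type property of Theorem~\ref{thm:solCondBoth}, which the corollary does not assume --- it only assumes Assumptions~\ref{asmp:exoProp} and~\ref{asmp:systemProp}. The $D=0$ branch is not a proof at all: you defer it to ``the integral viewpoint of the paper'', i.e.\ the appendix machinery, which additionally requires Assumptions~\ref{asmp:pwDiff} and~\ref{asmp:unitaryRD} plus non-resonance, likewise unavailable here. Finally, your sufficiency fix for the boundedness of $\Pi_x = \Psi_x\Lambda^{-1}$ is circular as described: applying Lemma~\ref{lem:moment} in the coordinates $A_c = A+BK$ requires a bounded input matrix $P+B\Gamma$ with $\Gamma = \hat\Delta - K\Psi_x\Lambda^{-1}$, whose boundedness is the very thing in question (the paper itself simply sets $\Pi_x = \Psi_x\Lambda^{-1}$, $\Delta = \hat\Delta$ and does not dwell on this point).
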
 
\begin{proof}
Under Assumption~\ref{asmp:exoProp}, $\Lambda$ does not decay to zero when $t \rightarrow +\infty$ as $\Lambda^{-1}$ is bounded. Then by right-multiplying $\Lambda(t)$ to~(\ref{equ:reguEquation}), the proof of Theorem~\ref{thm:solutionWithD} implies that there exist piecewise continuous $\Pi_x$ and $\Delta$ solving~(\ref{equ:reguEquation}) if and only if there exist $\Psi_x$ and $\hat{\Delta}$ (with $\hat{\Delta} = \Delta$) solving
\begin{equation*}
    \begin{aligned}
    \dot{\Psi}_x(t) &= A\Psi_x(t) + (B\hat{\Delta}(t) + P)\Lambda(t, t_0), \\
    \bm{0}_{1 \times \nu} &= \lim_{t \rightarrow +\infty} C \Psi_x(t) + (D\hat{\Delta}(t) + Q)\Lambda(t, t_0), \\
    \end{aligned}
\end{equation*}
which is equivalent to the existence of two initial conditions $\Psi_x(\hat{t})$ and $\hat{\Delta}(\hat{t})$ at a time instant $\hat{t} \geq t_0$ such that $\Psi_x(t)$ and $\hat{\Delta}(t)$ solve~(\ref{equ:solutionDAE}) for all $t \geq \hat{t}$.
\end{proof}

\begin{remark}
From a computational point of view, Corollary~\ref{corol:equivDAE} provides an alternative way of finding solutions to the regulator equations~(\ref{equ:reguEquation}). If one can find $\Psi_x$ and $\hat{\Delta}$ solving~(\ref{equ:solutionDAE}) for all $t \geq \hat{t}$ for some $\hat{t} \geq t_0$, then $\Pi_x = \Psi_x\Lambda^{-1}$ and $\Delta = \hat{\Delta}$ are the solutions to~(\ref{equ:reguEquation}). The simulation examples in this paper use this computational method. In addition, under certain assumptions (see \ifthenelse{\boolean{TAC}}{\cite[Appendix~C]{ref:Niu2025arXiv}}{ Appendix~\ref{sec:Appsolvability}}), the solvability of~(\ref{equ:solutionDAE}) is not related to the value of $\hat{t}$, \textit{i.e.}, we can set $\hat{t} = 0$ without loss of generality.
\end{remark}

\subsection{Solvability of the Regulator Equations}\label{sec:solvability}
As the solution of the output regulation problem depends on
solving the regulator equations~(\ref{equ:reguEquation}), it is necessary to study under which conditions such equations are solvable for any matrices $P$ and $Q$. Solvability of~(\ref{equ:reguEquation}) is technically involved. \ifthenelse{\boolean{TAC}}{Because of space constraints, and to}{To}
avoid distracting the reader from the main results of the paper (on output regulation), we have provided here a discussion and the main results, while we have moved the auxiliary technical derivation in \ifthenelse{\boolean{TAC}}{\cite[Appendix~C]{ref:Niu2025arXiv}.}{Appendix~\ref{sec:Appsolvability}.}

Given the generality of $\Lambda$ considered so far, the presence of the feedforward term $D$ plays a pivotal role in the tracking/rejection ability of the linear system~(\ref{equ:system}). This is formalised in the next result.


\begin{thm}\label{thm:zeroError}
Consider the interconnection of systems (\ref{equ:system}) and (\ref{equ:explicitGen}) with $D = 0$. Suppose Assumptions~\ref{asmp:exoProp} and~\ref{asmp:systemProp} hold. Then there exists a finite-time bounded piecewise continuous control input $u$ of the form~(\ref{equ:StateFdRegulator}) such that conditions \textbf{($\mathbf{S_F}$)} and \textbf{($\mathbf{R_F}$)} hold only if there exists a positive constant $\hat{t}$ such that the function $Q\omega(\cdot)$ is locally Lipschitz continuous for all $t \geq \hat{t}$. 
\end{thm}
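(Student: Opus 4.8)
The plan is to run the necessity argument through the differential–algebraic reformulation of the regulator equations and then to exploit the fact that the integral smoothing inherent in $\Psi_x$ cannot match a non-Lipschitz $Q\omega$ when $D=0$. Concretely, I would first invoke Theorem~\ref{thm:solutionWithD}: the existence of a regulator of the form~(\ref{equ:StateFdRegulator}) achieving \textbf{($\mathbf{S_F}$)} and \textbf{($\mathbf{R_F}$)} is equivalent to the solvability of the regulator equations~(\ref{equ:reguEquation}) by bounded piecewise continuous $\Pi_x$ and $\Delta$. I would then apply Corollary~\ref{corol:equivDAE} to pass to the differential–algebraic system~(\ref{equ:solutionDAE}), which yields a constant $\hat t$ and bounded piecewise continuous $\Psi_x,\hat\Delta$ satisfying~(\ref{equ:solutionDAE_a})--(\ref{equ:solutionDAE_b}) for all $t\ge\hat t$. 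Setting $D=0$ in the algebraic equation~(\ref{equ:solutionDAE_b}) gives the \emph{exact} identity $C\Psi_x(t)=-Q\Lambda(t,t_0)$ for all $t\ge\hat t$, hence $Q\omega(t)=Q\Lambda(t,t_0)\omega_0=-C\Psi_x(t)\,\omega_0$ on $[\hat t,+\infty)$.

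The core of the proof is then a regularity statement about $\Psi_x$. By the remark following Lemma~\ref{lem:moment}, $\Psi_x$ is absolutely continuous (even though $\Lambda$ may be discontinuous) and it satisfies~(\ref{equ:solutionDAE_a}) in the Carath\'eodory sense. On any compact interval $[a,b]\subset[\hat t,+\infty)$, $\Psi_x$ is continuous and hence bounded, $\hat\Delta$ is bounded by construction, and $\Lambda$ is bounded by finite-time boundedness (Assumption~\ref{asmp:exoProp}); consequently the right-hand side of~(\ref{equ:solutionDAE_a}) is essentially bounded on $[a,b]$, so $\dot\Psi_x$ is essentially bounded there. An absolutely continuous function with an essentially bounded derivative is Lipschitz on that interval, whence $\Psi_x$ is locally Lipschitz on $[\hat t,+\infty)$. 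Through the exact identity $Q\Lambda(t,t_0)=-C\Psi_x(t)$, the map $t\mapsto Q\Lambda(t,t_0)$ inherits local Lipschitz continuity, and therefore so does $t\mapsto Q\omega(t)=-C\Psi_x(t)\,\omega_0$ for all $t\ge\hat t$, which is the claim.

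I expect the delicate point to be the passage from the \emph{asymptotic} error condition \textbf{($\mathbf{R_F}$)} to an \emph{exact} pointwise algebraic identity: condition \textbf{($\mathbf{R_F}$)} only forces $C\Pi_x+Q\to 0$, and a mere limit would not transfer the regularity of $\Psi_x$ onto $Q\omega$. It is precisely Corollary~\ref{corol:equivDAE} that upgrades this limit to the exact relation~(\ref{equ:solutionDAE_b}) valid for all $t\ge\hat t$, so the argument hinges on that reduction. The remaining conceptual obstacle is to recognise that the relevant object is $\Psi_x=\Pi_x\Lambda$, which is absolutely continuous, rather than $\Pi_x$ itself, which is only piecewise continuous; it is the integral in~(\ref{equ:PIintegral}) that regularises $\Psi_x$. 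The mechanism behind the result is then transparent: with $D=0$ the error channel has no instantaneous feedthrough, so it cannot reproduce the non-Lipschitz features (e.g.\ jumps) of $Q\omega$, and any such feature is incompatible with asymptotic regulation.
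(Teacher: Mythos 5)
Your proof is correct, and it rests on the same two-step mechanism as the paper's — first upgrade the asymptotic condition \textbf{($\mathbf{R_F}$)} to an \emph{exact} error-zeroing identity holding for all $t \geq \hat{t}$, then observe that an absolutely continuous solution of an ODE with locally bounded right-hand side is locally Lipschitz and transfer that regularity to $Q\omega$ through the exact identity — but you run it at a different level. The paper argues on the closed-loop trajectory itself: it uses Lemma~\ref{lem:moment} and Theorem~\ref{thm:solutionWithD} to pick an initial condition $x(\hat{t})$ on the steady state so that $e(t) = Cx(t) + Q\omega(t) = 0$ for all $t \geq \hat{t}$, notes that $\dot{x} = Ax + Bu + P\omega$ is finite-time bounded (since $x$, $u$, $\omega$ are), hence $x$ is locally Lipschitz, and concludes $\|Q(\omega(t_1)-\omega(t_2))\| \leq \|C\|L_x|t_1 - t_2|$. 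You instead work with the matrix solution $\Psi_x = \Pi_x\Lambda$ of the differential--algebraic system~(\ref{equ:solutionDAE}) and read off $Q\Lambda(t,t_0) = -C\Psi_x(t)$ from~(\ref{equ:solutionDAE_b}) with $D=0$. The two routes are equivalent in substance (the paper's special trajectory is exactly $x(t) = \Psi_x(t)\omega_0$), but yours buys two things: it makes explicit where the delicate limit-to-exact upgrade happens — namely in Corollary~\ref{corol:equivDAE}, which the paper's proof invokes only implicitly through the phrase ``Lemma~\ref{lem:moment} and Theorem~\ref{thm:solutionWithD} implies \ldots $e(t) = 0$ for all $t \geq \hat{t}$'' — and it establishes local Lipschitz continuity of $Q\Lambda(\cdot,t_0)$ itself, hence of $Q\omega$ uniformly over all $\omega_0$ in a single pass, whereas the trajectory argument is run for a fixed $\omega_0$. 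Conversely, the paper's version is shorter and dispenses with the Carath\'eodory and absolute-continuity bookkeeping by appealing directly to finite-time boundedness of $\dot{x}$. I see no gap in your argument beyond its reliance on Corollary~\ref{corol:equivDAE}, which is exactly the reliance the paper's own proof has.
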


\begin{proof}
See Appendix~\ref{AppB} for a proof.
\end{proof}

The significance of Theorem~\ref{thm:zeroError} is that the posed regulation problem cannot be solved (without feedforward term and with a finite-time bounded regulator\footnote{Note that if $u$ is allowed to be an impulsive control, then the problem with discontinuous signals may be solved without feedforward term. This is left as a future research direction.}) for exogenous signals with $Q\omega$ presenting discontinuities (\textit{e.g.} possibly time-varying square waves) even as $t \rightarrow +\infty$. Note that this makes sense: a discontinuous signal cannot be tracked by $Cx$, which is a continuous signal, without the help of a feedforward term $Du$ that can cancel the jump. In this section, we mainly focus on the case $D = 0$, as the case $D\ne 0$ is much simpler. We introduce the following assumption that helps us to guarantee the existence of a solution to the regulation problem when $D = 0$.

\begin{assumpt}\label{asmp:pwDiff}
The matrix-valued function $Q\Lambda(t, t_0)$ is piecewise differentiable\footnote{By definition, this means that for any $t_b > t_a \geq t_0$, there exists a finite subdivision $t_a = \hat{t}_0 < \hat{t}_1 < \cdots <\hat{t}_{k-1} < \hat{t}_{k} = t_b$ of $[t_a,\;t_b]$ such that $Q\Lambda(t, t_0)$ is continuously differentiable in each subinterval $[t_{i-1},\; t_{i}]$ for any $i = 1, 2, \cdots, k$. Note that the derivative at $t_{i-1}$ is understood as the right derivative and the derivative at $t_{i}$ is understood as the left derivative~\cite[Definition 3.1]{ref:do1992riemannian}.} and $Q\dot{\Lambda}(t, t_0)\Lambda^{-1}(t, t_0)$ is bounded for all $t \in \mathcal{T}$, where $\mathcal{T}$ denotes any time interval in which $Q\Lambda(t, t_0)$ is continuously differentiable.
\end{assumpt}

Assumption~\ref{asmp:pwDiff} is a stronger assumption compared with the local Lipschitz continuity requirement on $Q\omega$. On the one hand, the piecewise differentiability assumption implies that $Q\omega$ has a finite number of non-differentiable points in any finite interval and is semi-differentiable at each non-differentiable point, \textit{i.e.}, both left and right derivatives exist, although different. This differentiability requirement does not practically restrict the applicability of the result when considering exogenous signals such as (possibly time-varying) triangular waveforms. On the other hand, the boundedness of $Q\dot{\Lambda}(t, t_0)\Lambda^{-1}(t, t_0)$ can be seen as a direct extension of the continuous-time output regulation problem to the non-smooth case. More specifically, if the exosystem is as in~(\ref{equ:timeVaryingGen}) with $\Lambda$ the corresponding state-transition matrix, then the boundedness of $Q\dot{\Lambda}(t, t_0)\Lambda^{-1}(t, t_0) = Q\tilde{S}(t)$ naturally holds as $\tilde{S}(t)$ is classically assumed to be bounded for all times in the time-varying setting~\cite{ref:zhang2009adaptive}. It should be pointed out that this assumption on $Q\omega$ does not imply that $\omega$ cannot be discontinuous. In fact, the solvability can still be satisfied when the jumps of $\omega$ occur in the kernel of the matrix $Q$ for all times. In other words, we do not exclude discontinuous signals from $P\omega$.

We also note that when $D = 0$, the solvability of the problem is closely related to the relative degree\footnote{The smallest positive integer $r \leq n$ such that $C A^{r-1} B \neq 0$.} of system~(\ref{equ:system}). While this may seem surprising (because it is well known that the relative degree does not play a role in the solvability of the classical regulator equations), in our problem setting, the relative degree controls the ``degree of non-smoothness'' of the state $x$ (this concept is made precise in the analysis in \ifthenelse{\boolean{TAC}}{\cite[Appendix~C]{ref:Niu2025arXiv}.}{Appendix~\ref{sec:Appsolvability}.}). In other words, the relative degree determines the ability of the system~(\ref{equ:system}) with $\omega \equiv 0$ and a locally bounded input $u$ to produce a (possibly non-smooth) output that can cancel the (possibly non-smooth) output of system~(\ref{equ:system}) with $u \equiv 0$. Therefore, when $D = 0$, we study the solvability of the regulator equations~(\ref{equ:reguEquation}) with an eye to the relative degree of system~(\ref{equ:system}). 
To this end, we introduce the following assumption.
\begin{assumpt}\label{asmp:unitaryRD}
System~(\ref{equ:system}) has a unitary relative degree.
\end{assumpt}

\noindent At this point we introduce a new non-resonance condition.




\begin{define}[Non-smooth Non-resonance Condition]
Systems~(\ref{equ:system}) and~(\ref{equ:explicitGen}) are \textit{non-resonant} if the matrix-valued function
\begin{equation}\label{equ:nonSmNRC}
    \Omega(t)= \int_{t_0}^t \left(
    \Lambda(\tau, t_0)\Lambda(t, t_0)^{-1}\right)^{\top} \otimes e^{A_{z}(t - \tau)} d \tau,
\end{equation}
is bounded for all $t \geq t_0$, where $A_{z} \in \mathbb{R}^{(n-1) \times (n-1)}$ has eigenvalues identical to the transmission zeros of system~(\ref{equ:system}).
 \end{define}

\begin{remark}\label{remark:minphase}
By Lemma~\ref{lem:moment} a sufficient condition for the non-resonance condition to hold under Assumption~\ref{asmp:exoProp} is that the linear system~(\ref{equ:system}) is minimum phase. In fact, by vectorizing (\ref{eq-limPI_g}) in the proof of Lemma~\ref{lem:moment} yields (\ref{equ:nonSmNRC}) times $\text{vec}\left(B_g(\tau)\right)$, with $\Phi_g(t,\tau) = e^{A_z(t-\tau)}$. If system~(\ref{equ:system}) is minimum phase, $\Phi_g(t,\tau)$ is exponentially stable. Then the boundedness of $\Omega$ in~(\ref{equ:nonSmNRC}) follows from (\ref{eq-limPI_g}).
\end{remark}

Leveraging the non-resonance condition, it is possible to obtain the following result.

\begin{thm}\label{thm:solCondBoth}
Consider Problem~\ref{prob:ORFullInfo} driven by control law~(\ref{equ:controlLaw}). Suppose Assumptions~\ref{asmp:exoProp} and~\ref{asmp:systemProp} hold. There exist bounded piecewise continuous matrices $\Pi_x$ and $\Delta$ solving the regulator equations~(\ref{equ:reguEquation}) for any $P$ and $Q$ if and only if systems~(\ref{equ:system}) and~(\ref{equ:explicitGen}) are non-resonant and either of the following holds.
\begin{itemize}
\item $D = 0$, and Assumptions~\ref{asmp:pwDiff} and \ref{asmp:unitaryRD} hold.
\item $D \ne 0$.
\end{itemize}
\end{thm}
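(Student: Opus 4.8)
The plan is to reduce solvability of the regulator equations~(\ref{equ:reguEquation}) to that of the differential-algebraic system~(\ref{equ:solutionDAE}) through Corollary~\ref{corol:equivDAE}, then eliminate $\hat{\Delta}$ from the algebraic constraint~(\ref{equ:solutionDAE_b}) so that $\Psi_x$ (equivalently $\Pi_x = \Psi_x\Lambda^{-1}$) obeys a single closed linear time-varying ODE whose homogeneous dynamics are governed by the transmission zeros of~(\ref{equ:system}). Boundedness of $\Pi_x$ for every $P$ and $Q$ will then be characterised exactly by boundedness of the kernel $\Omega$ in~(\ref{equ:nonSmNRC}), i.e. by the non-resonance condition, via the vectorisation argument already sketched in Remark~\ref{remark:minphase}: for a constant forcing matrix the variation-of-constants integral vectorises to $\Omega(t)$ applied to the stacked forcing, so a bounded response for every forcing is equivalent to bounded $\Omega$. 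The two bullet cases differ only in how $\hat{\Delta}$ is eliminated.

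I would first treat the easier case $D\neq0$. Here~(\ref{equ:solutionDAE_b}) is solved algebraically for $\hat{\Delta}=-D^{-1}(C\Pi_x+Q)$; substituting into~(\ref{equ:solutionDAE_a}) gives $\dot{\Psi}_x=(A-BD^{-1}C)\Psi_x+(P-BD^{-1}Q)\Lambda$, whose system matrix $A-BD^{-1}C$ has spectrum equal to the transmission zeros (here $n$ of them, so the non-resonance condition is read with $A_z$ taken as this $n\times n$ matrix rather than the $(n-1)$-dimensional one of the definition). The vectorisation of Remark~\ref{remark:minphase} shows the particular part of $\Pi_x$ is $\Omega$ applied to the stacked forcing built from $P,Q$, while the homogeneous contribution is handled by the freedom in $\Pi_x(t_0)$, using that $\Lambda^{-1}$ is bounded under Assumption~\ref{asmp:exoProp}. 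Hence $\Pi_x$ is bounded for every $P,Q$ iff $\Omega$ is bounded; and since $\hat{\Delta}=-D^{-1}(C\Pi_x+Q)$ inherits boundedness from $\Pi_x$ at no regularity cost, no differentiability hypothesis is needed, giving the second bullet.

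Next I would handle the substantive case $D=0$, where~(\ref{equ:solutionDAE_b}) reads $C\Psi_x=-Q\Lambda$ and contains no $\hat{\Delta}$. I would differentiate this constraint once: piecewise differentiability of $Q\Lambda$ (Assumption~\ref{asmp:pwDiff}) makes $C\dot{\Psi}_x=-Q\dot{\Lambda}$ meaningful on each smoothness interval, and unitary relative degree, i.e. $CB\neq0$ (Assumption~\ref{asmp:unitaryRD}), lets me solve $\hat{\Delta}=-(CB)^{-1}(CA\Pi_x+CP+Q\dot{\Lambda}\Lambda^{-1})$. Passing to the normal-form coordinates associated with the unitary relative degree, the constraint pins the one-dimensional output component of $\Psi_x$ while the remaining $(n-1)$-dimensional component satisfies $\dot{\Psi}_\eta=A_z\Psi_\eta+(\text{const in }P,Q)\Lambda$, with $A_z$ the zero-dynamics matrix. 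The same vectorisation then shows the reduced state is bounded for all $P,Q$ iff $\Omega$ in~(\ref{equ:nonSmNRC}) is bounded, i.e. iff the systems are non-resonant; and the recovered $\hat{\Delta}$, which carries the term $Q\dot{\Lambda}\Lambda^{-1}$, is bounded iff that quantity is bounded, which is the second half of Assumption~\ref{asmp:pwDiff}. For necessity of Assumptions~\ref{asmp:pwDiff} and~\ref{asmp:unitaryRD} I would invoke Theorem~\ref{thm:zeroError} (which already forces $Q\omega$ to be locally Lipschitz when $D=0$) and argue that higher relative degree would require differentiating the constraint more than once, demanding regularity of $Q\omega$ beyond what Theorem~\ref{thm:zeroError} and Assumption~\ref{asmp:pwDiff} supply.

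The main obstacle I anticipate is the $D=0$ necessity direction together with the careful bookkeeping of the normal-form reduction: showing that boundedness must fail for some $P,Q$ precisely when $\Omega$ is unbounded, by choosing $P,Q$ that excite a resonant zero/exosystem interaction, and simultaneously that the differentiated constraint admits a bounded $\hat{\Delta}$ only if $Q\dot{\Lambda}\Lambda^{-1}$ is bounded. Because these manipulations are lengthy, I would keep the main proof at the level of the elimination-plus-vectorisation scheme above and relegate the detailed normal-form computation and the resonance-excitation construction to the solvability appendix referenced in the statement.
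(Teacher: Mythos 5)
Your proposal matches the paper's proof essentially step for step: the paper likewise reduces via Corollary~\ref{corol:equivDAE} to the DAE~(\ref{equ:solutionDAE}) and splits into the case $D \neq 0$ (eliminating $\hat{\Delta} = -D^{-1}(C\Psi_x\Lambda^{-1}+Q)$ to obtain the $n$-dimensional matrix $A - BCD^{-1}$, exactly the dimension subtlety you flag) and the case $D = 0$ (normal form with $T_2 = C$, the algebraic constraint pinning the output row $\Theta_y = -Q\Lambda$, the $(n-1)$-dimensional zero-dynamics integral equation, and $\hat{\Delta} = b^{-1}(Q_\Lambda - G_2 - A_{21}\bar{\Pi}_z)$ with $Q_\Lambda = Q\dot{\Lambda}\Lambda^{-1}$ bounded by Assumption~\ref{asmp:pwDiff}), concluding in both cases by the same vectorisation argument that boundedness for all $P, Q$ is equivalent to boundedness of $\Omega$ in~(\ref{equ:nonSmNRC}). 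The only deviation is your final paragraph: the paper does not prove necessity of Assumptions~\ref{asmp:pwDiff} and~\ref{asmp:unitaryRD} themselves (they are standing hypotheses of the case-split theorems in its appendix, with the relative-degree restriction supported separately by the smoothness-degree machinery of the auxiliary functions $V_j$, which yields only $r \leq j^{*}+1$), so that harder necessity claim is not needed for the statement as the paper proves it.
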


The proof of Theorem~\ref{thm:solCondBoth} is involved and requires several additional technical results. As already mentioned, we have placed the proof in \ifthenelse{\boolean{TAC}}{\cite[Appendix~C]{ref:Niu2025arXiv}.}{Appendix~\ref{sec:Appsolvability}.}

\begin{remark}
The non-smooth non-resonance condition is an extension of the non-resonance condition in the traditional linear output regulation problem~\cite[Assumption 1.4]{ref:huang2004nonlinear} to the non-smooth case. To be more specific, if $\Lambda(t, t_0) = e^{S(t-t_0)}$ with $S$ in~(\ref{equ:implicitGen}), then the boundedness of $\Omega$ in~(\ref{equ:nonSmNRC}) implies (but is not implied by) the non-singularity of $I \otimes A_{z} -S^{\top} \otimes I$. This is equivalent to $\sigma(A_{z}) \cap \sigma(S) = \varnothing$, which is exactly the non-resonance condition defined in the traditional linear output regulation problem, as the eigenvalues of $A_{z}$ coincides with the transmission zeros of system~(\ref{equ:system}). On the other hand, the boundedness of $\Omega$ in~(\ref{equ:nonSmNRC}) is a stronger condition when compared with the traditional non-resonance condition. This is due to the extension of the Sylvester equation to the non-periodic non-smooth case. Additional insights are provided in \ifthenelse{\boolean{TAC}}{\cite[Appendix~C]{ref:Niu2025arXiv}.}{Appendix~\ref{sec:Appsolvability}.}
\end{remark}

\section{Error-Feedback Problem}\label{sec:ErrorFdProb}
In this section, we study the non-smooth output regulation problem when the states of the system are not known, but the regulation error $e(t)$ is assumed to be available to the regulator. In this context, we look at the construction of an error-feedback regulator of the form
\begin{equation}\label{equ:errorFdRegulator}
\begin{aligned}
    \dot{\xi}(t) &= \mathcal{G}_\xi(t)\xi(t) + \mathcal{G}_e(t) e(t), \\
    u(t) &= K_\xi(t) \xi(t) + K_e(t) e(t).
\end{aligned}
\end{equation}
with $\xi(t) \in \mathbb{R}^q$, and bounded piecewise continuous matrix-valued functions $\mathcal{G}_\xi(t) \in \mathbb{R}^{q \times q}$, $\mathcal{G}_e(t) \in \mathbb{R}^{q \times 1}$, $K_\xi(t) \in \mathbb{R}^{1 \times q}$, and $K_e(t) \in \mathbb{R}$. Then the error-feedback problem is formulated as follows.

\begin{problem}[Error-feedback Non-smooth Output Regulation Problem]\label{prob:ORErrorFd}
Consider system~(\ref{equ:system}) interconnected with the exosystem~(\ref{equ:explicitGen}) under Assumption~\ref{asmp:exoProp}. The output regulation problem consists in designing a regulator~(\ref{equ:errorFdRegulator}) such that:
\begin{itemize}[leftmargin=28pt]
    \item[\textbf{($\mathbf{S_E}$)}] The origin of the closed-loop system obtained by interconnecting system~(\ref{equ:system}), the exosystem (\ref{equ:explicitGen}), and the regulator with $\omega(t) \equiv 0$ is exponentially stable.
    \item[\textbf{($\mathbf{R_E}$)}] The closed-loop system obtained by interconnecting system~(\ref{equ:system}), the exosystem~(\ref{equ:explicitGen}), and the regulator satisfies 
    $\lim_{t \rightarrow +\infty} e(t) = 0,$
    uniformly for any $\left(w(t_0), x(t_0), \xi(t_0)\right) \in \mathbb{R}^{\nu+n+q}$.
\end{itemize}
\end{problem}

Note that, due to the unbounded nature of the exogenous signal $\omega$, Problem~\ref{prob:ORErrorFd} requires, in ($\mathbf{S_E}$), the exponential stability of the closed-loop system rather than simply uniform asymptotic stability which is required in the time-varying output regulation problems~\cite{ref:zhang2006linear,ref:zhang2009adaptive}. It will be clear later that this exponential stability is crucial for guaranteeing bounded solutions in the error-feedback setting 
and can be easily achieved by a high-gain stabilisation method.

To study this problem, we first characterise the ``internal model property'' of the regulator, and then we propose solutions to Problem~\ref{prob:ORErrorFd} by designing a regulator who possesses such property. We mainly focus our attention on the case with zero feedforward matrix, (\textit{i.e.}, $D = 0$) as the systems with a non-zero relative degree are more commonly encountered in real-life applications such as electrical engineering, aeronautics, and mechanical manipulations~\cite{ref:juang2001identification}. At the end of this section, we will discuss the solution to the problem when the relative degree is zero ($D \neq 0$).


\subsection{Internal Model Property}
\label{sec:IMProp}

Denoting $\zeta = \operatorname{col}(x, \xi)$, the closed-loop system~(\ref{equ:explicitGen})--(\ref{equ:system})--(\ref{equ:errorFdRegulator}) can be written as
\begin{equation}\label{equ:systemClosedLoopErrorFd}
    \begin{aligned}
    \dot{\zeta}(t) &= \mathcal{A}_{c}(t)\zeta(t) + \mathcal{P}_{c}(t) \omega(t),\\
    e(t) &= C x(t) + Q \omega(t), \\
    \omega(t) &= \Lambda\left(t, t_0\right) \omega_0,
    \end{aligned}
\end{equation}
with
\begin{equation*}
\begin{aligned}
\mathcal{A}_{c}(t)\!\!=\!\!\left[\!\!\!\begin{array}{cc}
A+B K_e(t) C & B K_\xi(t) \\
\mathcal{G}_e(t) C & \mathcal{G}_\xi(t)
\end{array}\!\!\!\right]\!\!, 
& \,\,\,\mathcal{P}_{c}(t)\!\!=\!\!\left[\!\!\!\begin{array}{c}
P + B K_e(t) Q \\
\mathcal{G}_e(t) Q
\end{array}\!\!\!\right]\!\!. 
\end{aligned}
\end{equation*}
For simplicity, we let $\Phi_{c}(t, t_0)$ be the state transition matrix of the system $\dot{\zeta}(t) = \mathcal{A}_{c}(t)\zeta(t)$, \textit{i.e.}, $\dot{\Phi}_{c}(t, t_0) = \mathcal{A}_{c}(t)\Phi_{c}(t, t_0)$. Then we show that the error-feedback regulator~(\ref{equ:errorFdRegulator}) solving Problem~\ref{prob:ORErrorFd} must possess an internal model property.

\begin{thm}\label{thm:IMProperty}
Consider Problem~\ref{prob:ORErrorFd} and let Assumptions~\ref{asmp:exoProp} and~\ref{asmp:pwDiff} hold. Suppose there exists an error-feedback controller~(\ref{equ:errorFdRegulator}) such that the condition \textbf{($\mathbf{S_E}$)} is satisfied. Then such a controller satisfies condition \textbf{($\mathbf{R_E}$)} if and only if there exist a positive constant $\hat{t}$ and bounded piecewise continuous $\Pi_x(t) \in \mathbb{R}^{n \times \nu}$, $\Pi_\xi(t) \in \mathbb{R}^{q \times \nu}$, and $\Delta(t) \in \mathbb{R}^{1 \times \nu}$ solving the regulator equations~(\ref{equ:reguEquation}) and the equation
\begin{subequations}\label{equ:IMProperty}
\begin{align}
    \frac{d}{dt}\left(\Pi_\xi(t)\Lambda(t, t_0) \right) &= \mathcal{G}_\xi(t)\Pi_\xi(t)\Lambda(t, t_0), \label{equ:IMProperty_a}\\
    \Delta(t) &= K_\xi(t)\Pi_\xi(t), \label{equ:IMProperty_b}
\end{align}
\end{subequations}
for all $t \geq \hat{t}$.
\end{thm}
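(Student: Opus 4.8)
The plan is to relate condition ($\mathbf{R_E}$) to the block structure of the closed-loop steady-state map, using Lemma~\ref{lem:moment} to reduce regulation to an algebraic condition on the error map, and the differential-algebraic reformulation of Corollary~\ref{corol:equivDAE} to convert asymptotic conditions into the exact equalities required on $[\hat t,+\infty)$. Throughout I would write $\Pi=\operatorname{col}(\Pi_x,\Pi_\xi)$ and $\Psi:=\Pi\Lambda=\operatorname{col}(\Psi_x,\Psi_\xi)$, partitioned as $\zeta=\operatorname{col}(x,\xi)$, so that $\Psi_x=\Pi_x\Lambda$ and $\Psi_\xi=\Pi_\xi\Lambda$. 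Under ($\mathbf{S_E}$) the matrices $\mathcal{A}_c,\mathcal{P}_c$ are bounded piecewise continuous and $\dot\zeta=\mathcal{A}_c(t)\zeta$ is exponentially stable, so Lemma~\ref{lem:moment} applies to~(\ref{equ:systemClosedLoopErrorFd}): any $\Psi$ solving $\dot\Psi=\mathcal{A}_c\Psi+\mathcal{P}_c\Lambda$ is bounded and satisfies $\zeta-\Pi\omega\to\bm{0}$. Since $e=C(x-\Pi_x\omega)+(C\Pi_x+Q)\omega$ and $x-\Pi_x\omega\to\bm{0}$, condition ($\mathbf{R_E}$) is equivalent to the vanishing of the steady-state error map, $C\Psi_x+Q\Lambda=(C\Pi_x+Q)\Lambda\to\bm{0}$, where I use that $\Lambda^{-1}$ is bounded (Assumption~\ref{asmp:exoProp}), so $\omega$ does not vanish and $\omega_0$ is arbitrary.

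The key observation is a purely algebraic collapse of the block equations once the error map is identically zero. Writing $\dot\Psi=\mathcal{A}_c\Psi+\mathcal{P}_c\Lambda$ block-wise and substituting $C\Psi_x+Q\Lambda=\bm{0}$, the terms $BK_e(C\Psi_x+Q\Lambda)$ and $\mathcal{G}_e(C\Psi_x+Q\Lambda)$ both drop out, so the $x$-block becomes $\dot\Psi_x=A\Psi_x+BK_\xi\Psi_\xi+P\Lambda$ and the $\xi$-block becomes $\dot\Psi_\xi=\mathcal{G}_\xi\Psi_\xi$. Setting $\Delta:=K_\xi\Pi_\xi$, so that $BK_\xi\Psi_\xi=B\Delta\Lambda$ with $P_\Delta=P+B\Delta$, turns the $x$-block into the differential form of the first equation of~(\ref{equ:reguEquation}), while $C\Psi_x+Q\Lambda=\bm{0}$ gives its second (limit) equation with $D=0$; simultaneously the $\xi$-block is exactly~(\ref{equ:IMProperty_a}) and the definition of $\Delta$ is exactly~(\ref{equ:IMProperty_b}). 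Thus, modulo exactness of the error-map constraint, the internal-model equations~(\ref{equ:IMProperty}) and the regulator equations~(\ref{equ:reguEquation}) are two halves of the closed-loop steady-state equation.

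For sufficiency I would start from bounded $\Pi_x,\Pi_\xi,\Delta$ satisfying~(\ref{equ:reguEquation}) and~(\ref{equ:IMProperty}) for $t\ge\hat t$. Applying Corollary~\ref{corol:equivDAE} to~(\ref{equ:reguEquation}) yields, after possibly enlarging $\hat t$, a bounded $\Psi_x^\sharp=\Pi_x^\sharp\Lambda$ solving $\dot\Psi_x^\sharp=A\Psi_x^\sharp+(B\Delta+P)\Lambda$ together with the exact constraint $C\Psi_x^\sharp+Q\Lambda=\bm{0}$. Pairing $\Psi_x^\sharp$ with $\Psi_\xi=\Pi_\xi\Lambda$ from~(\ref{equ:IMProperty_a}) and using~(\ref{equ:IMProperty_b}) in the form $BK_\xi\Psi_\xi=B\Delta\Lambda$, the block computation above run in reverse shows that $\Psi^\sharp=\operatorname{col}(\Psi_x^\sharp,\Psi_\xi)$ solves $\dot\Psi^\sharp=\mathcal{A}_c\Psi^\sharp+\mathcal{P}_c\Lambda$ on $[\hat t,+\infty)$. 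Lemma~\ref{lem:moment} with initial time $\hat t$ then gives $\zeta-\Pi^\sharp\omega\to\bm{0}$, and since $C\Pi_x^\sharp+Q=(C\Psi_x^\sharp+Q\Lambda)\Lambda^{-1}=\bm{0}$, I conclude $e=C(x-\Pi_x^\sharp\omega)\to 0$, i.e.\ ($\mathbf{R_E}$).

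For necessity I would take the closed-loop steady-state map $\Psi$ from Lemma~\ref{lem:moment}; condition ($\mathbf{R_E}$) forces $C\Psi_x+Q\Lambda\to\bm{0}$. The main obstacle is to upgrade this asymptotic vanishing into an exact identity $C\Psi_x^*+Q\Lambda=\bm{0}$ on some $[\hat t,+\infty)$, that is, to produce a steady-state solution living exactly on the error-zero manifold. This is precisely the asymptotic-to-exact passage established for the plant in Corollary~\ref{corol:equivDAE}, now invoked for the exponentially stable closed-loop system~(\ref{equ:systemClosedLoopErrorFd}), and it is where boundedness of $\Lambda^{-1}$, the exponential stability in ($\mathbf{S_E}$), and the regularity granted by Assumption~\ref{asmp:pwDiff} enter. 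Once such a $\Psi^*=\operatorname{col}(\Psi_x^*,\Psi_\xi^*)$ is available, the algebraic collapse of the second paragraph applies verbatim: the $x$-block yields bounded $\Pi_x:=\Psi_x^*\Lambda^{-1}$ and $\Delta:=K_\xi\Psi_\xi^*\Lambda^{-1}$ solving~(\ref{equ:reguEquation}), while the $\xi$-block yields $\Pi_\xi:=\Psi_\xi^*\Lambda^{-1}$ (bounded because $\Psi^*$ and $\Lambda^{-1}$ are bounded) solving~(\ref{equ:IMProperty_a}), with~(\ref{equ:IMProperty_b}) holding by construction. I expect this asymptotic-to-exact step, rather than the algebra, to be the delicate part of the proof.
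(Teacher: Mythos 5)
Your proposal is correct and follows essentially the same route as the paper's proof: Lemma~\ref{lem:moment} applied to the closed-loop system~(\ref{equ:systemClosedLoopErrorFd}), the block-wise collapse of $\dot{\Psi}=\mathcal{A}_c\Psi+\mathcal{P}_c\Lambda$ under the exact constraint $C\Psi_x+Q\Lambda=\bm{0}_{1\times\nu}$ with $\Delta=K_\xi\Pi_\xi$, and Corollary~\ref{corol:equivDAE} to pass between the limit-form regulator equations and the exact DAE. The asymptotic-to-exact step you flag as delicate is precisely where the paper also concentrates its argument, handling it through the decomposition $e=C\tilde{x}+(C\Psi_x+Q\Lambda)\omega_0$ and exponential stability of $\dot{\tilde{\zeta}}=\mathcal{A}_c(t)\tilde{\zeta}$, so your treatment matches the paper's in substance.
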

\begin{proof}
Consider the closed-loop system~(\ref{equ:systemClosedLoopErrorFd}) with a controller~(\ref{equ:errorFdRegulator}) such that  condition \textbf{($\mathbf{S_E}$)} is satisfied. Then Lemma~\ref{lem:moment} yields that there exists a bounded piecewise continuous $\Pi_\Phi(t) \in \mathbb{R}^{(n+q) \times \nu}$ in
\begin{equation*}
\!\!\Pi_\Phi\!(t) \!\!=\!\! \bigg(\!\! \Phi_{c}\!(t, t_0) \Pi_\Phi\!(t_0) \!+\!\!\! \int_{t_0}^t \!\!\!\!\Phi_{c}(t, \tau) \mathcal{P}_c(\tau) \Lambda(\tau, t_0) d \tau \!\! \bigg)\! \Lambda(t, t_0)\!^{-\!1}\!\!\!\!
\end{equation*}
such that $\lim_{t \rightarrow +\infty} \zeta(t) - \Pi_\Phi(t) \omega(t) = 0$, or equivalently, $\lim_{t\rightarrow+\infty} \zeta(t) - \Psi_\Phi(t)\omega_0 = 0$ with $\Psi_\Phi(t) = \Pi_\Phi(t)\Lambda(t, t_0)$ being the solution to the differential equation
\begin{equation*}
    \dot{\Psi}_\Phi(t) 
    = \mathcal{A}_c(t)\Psi_\Phi(t) + \mathcal{P}_c(t)\Lambda(t, t_0).
\end{equation*}
Let $\Psi_\Phi = [\Psi_x^{\top}, \Psi_\xi^{\top}]^{\top}$, with $\Psi_x(t) \in \mathbb{R}^{n \times \nu}$ and $\Psi_\xi(t) \in \mathbb{R}^{q \times \nu}$, and $\Pi_\Phi = [\Pi_x^{\top}, \Pi_\xi^{\top}]^{\top}$. By denoting $\Delta(t) = K_\xi(t)\Pi_\xi(t) = K_\xi(t) \Psi_\xi(t)\Lambda(t, t_0)^{-1}$, the last differential equation is equivalent to
\begin{subequations}\label{equ:IMPropertyPsi}
\begin{align}
    \dot{\Psi}_x(t) =& \;A\Psi_x(t) + B\Delta(t)\Lambda(t, t_0) + P\Lambda(t, t_0)\nonumber\\ &+ BK_e(t)C \Psi_x(t) + BK_e(t)Q\Lambda(t, t_0), \label{equ:IMPropertyPsi_a}\\
    \dot{\Psi}_\xi(t) =& \;\mathcal{G}_\xi(t)\Psi_\xi(t) + \mathcal{G}_e(t)(C \Psi_x(t) + Q\Lambda(t, t_0)), \label{equ:IMPropertyPsi_b}\\
    \Delta(t) =& \;K_\xi(t)\Psi_\xi(t)\Lambda(t, t_0)^{-1},\label{equ:IMPropertyPsi_c}
\end{align}
\end{subequations}
for all $t \geq \hat{t}$. By using the transformation
\begin{equation*}
\tilde{\zeta}(t):= \left[\begin{array}{c}
\tilde{x}(t) \\
\tilde{\xi}(t)
\end{array}\right] = \left[\begin{array}{c}
x(t)-\Psi_x(t) \omega_0 \\
\xi(t)-\Psi_\xi(t) \omega_0
\end{array}\right]
\end{equation*}
on the first and third equation in (\ref{equ:systemClosedLoopErrorFd}) yields the system
\begin{subequations}
\begin{align}  
    \dot{\tilde{\zeta}}(t) &= \mathcal{A}_c(t)\tilde{\zeta}(t), \label{equ:closeLoopErrorFdStable} \\
    \omega(t) &= \Lambda(t, t_0) \omega_0,
\end{align}
\end{subequations}
for all $t \geq t_0$ with an arbitrary initial condition $\tilde{\zeta}_0 = \tilde{\zeta}(t_0)$. Accordingly, the regulation error $e$ in (\ref{equ:systemClosedLoopErrorFd}) can be expressed as
\begin{equation}\label{equ:errorErrorFd}
    e(t) = C \tilde{x}(t) + (C\Psi_x(t) + Q\Lambda(t, t_0))\omega_0.
\end{equation}
Then, by the exponential stability of~(\ref{equ:closeLoopErrorFdStable}), condition \textbf{($\mathbf{R_E}$)} (\textit{i.e.}, $\lim_{t\rightarrow+\infty} e(t) = 0$) is satisfied for any arbitrary $\omega_0$ if and only if there exists a positive constant $\hat{t}$ such that
\begin{equation}\label{equ:algebriacPsi}
    C\Psi_x(t) + Q\Lambda(t, t_0) = 0
\end{equation}
for all $t \geq \hat{t}$. Finally,~(\ref{equ:IMPropertyPsi_a}), ~(\ref{equ:IMPropertyPsi_b}), and~(\ref{equ:algebriacPsi}) are equivalent to (\ref{equ:solutionDAE_a}), (\ref{equ:IMProperty_a}) (with $\Pi_\xi(t) = \Psi_\xi(t)\Lambda(t, t_0)^{-1}$), and (\ref{equ:solutionDAE_b}) (with $D = 0$), respectively. By Corollary~\ref{corol:equivDAE},~(\ref{equ:solutionDAE}) is equivalent to (\ref{equ:reguEquation}) with $\Pi_x(t) = \Psi_x(t)\Lambda(t, t_0)^{-1}$, and $\hat{\Delta}=\Delta$ given in (\ref{equ:IMProperty_b}). This concludes the proof. 
\end{proof}

By analogy with the terminology used in the time-invariant setting (see~\cite{ref:francis1976internal, ref:byrnes2003limit}), we say that an error-feedback regulator~(\ref{equ:errorFdRegulator}) that satisfies~(\ref{equ:IMProperty}) possesses an internal model property, as defined next.
\begin{define}[Explicit Internal Model Property]
A regulator that has the ability to generate all possible signals produced by the output $y_\omega$ of the system
\begin{equation}\label{equ:IMsystem}
\begin{aligned}  
    \omega(t) &= \Lambda(t, t_0) \omega_0, \\
    y_\omega(t) &= \Delta(t)\omega(t)
\end{aligned}
\end{equation}
when the regulation error $e$ is identically zero, is said to have the \textit{explicit internal model property}.
\end{define}

The explicit internal model property 
is guaranteed by the existence of a bounded piecewise continuous $\Pi_\xi$ solving~(\ref{equ:IMProperty}). In fact, by the proof of Theorem~\ref{thm:IMProperty},  the state $\xi$ of a regulator~(\ref{equ:errorFdRegulator}) that exponentially stabilises the closed-loop system and for which there exists a bounded continuous $\Pi_\xi$ solving~(\ref{equ:IMProperty}), has the property that 
$\lim_{t \rightarrow +\infty} \xi - \Pi_\xi\omega = 0$ while $\lim_{t \rightarrow +\infty} e = 0$. Then~(\ref{equ:IMProperty}) requires that the output of regulator~(\ref{equ:errorFdRegulator}) satisfies $\lim_{t \rightarrow +\infty} u = \lim_{t \rightarrow +\infty} K_\xi \xi + K_e e = \lim_{t \rightarrow +\infty} K_\xi \Pi_{\xi} \omega = \Delta \omega$, \textit{i.e.},  a regulator has the explicit internal model property.

\begin{remark}\label{mrk:IMPvsTimeVarying}
Consider the time-varying linear output regulation problem with periodic smooth exogenous signals generated by a time-varying implicit generator of the form~(\ref{equ:timeVaryingGen}), see~\cite{ref:zhang2006linear,ref:zhang2009adaptive}. In that framework, the internal model property is characterized by DAEs composed of a time-varying Sylvester equation and an algebraic equation that coincides with the second equation of~(\ref{equ:IMProperty}). In our framework, the first equation of~(\ref{equ:IMProperty}), when $\Lambda$ is differentiable, can be shown to be equivalent to the time-varying Sylvester equation
\begin{equation*}
    \dot{\Pi}_{\xi}(t) = \mathcal{G}_{\xi}(t)\Pi_\xi(t) - \Pi_{\xi}(t)\dot{\Lambda}(t, t_0)\Lambda(t, t_0)^{-1}.
\end{equation*}
In particular, if $\Lambda$ is the transition matrix of generator~(\ref{equ:timeVaryingGen}), this differential equation coincides with the time-varying Sylvester equation that characterizes the internal model property in the smooth setting~\cite[equation~(10)]{ref:zhang2009adaptive} (noting that $\dot \Lambda (t,t_0) \Lambda(t,t_0)^{-1} = \tilde{S}(t)$).
\end{remark}

\subsection{Achieving Internal Model Property}
\label{sec:AchievingIMProp}
Now, with the solution $\Delta$ of the regulator equations, we study how to design the matrices $\mathcal{G}_\xi(t)$ and $K_\xi(t)$ of the controller~(\ref{equ:errorFdRegulator})
to secure the internal model property. In the smooth output regulation problems, this is normally done following the idea called \textit{canonical realization} proposed in the nonlinear case in~\cite{ref:serrani2001semi} and reformulated in the smooth periodic output regulation problem in~\cite[Definition 3.2]{ref:zhang2009adaptive}. In those problems, they considered a model 
\begin{subequations}\label{equ:IMsystemSmooth}
    \begin{align}
    \dot{\tilde{\eta}}(t) &= \tilde{\Phi}_{\mathrm{im}}(t)\tilde{\eta}(t),
    \label{equ:IMsystemSmooth1}  \\
    \tilde{v}(t) &= \tilde{\Xi}_{\mathrm{im}}(t)\tilde{\eta}(t),
    \end{align}
\end{subequations}
with bounded $\tilde{\Phi}_{\mathrm{im}}: \mathbb{R} \to \mathbb{R}^{s_1 \times s_1}$ and $\tilde{\Xi}_{\mathrm{im}}: \mathbb{R} \to \mathbb{R}^{1 \times s_1}$ having the internal model property. Then, they designed a canonical realization of (\ref{equ:IMsystemSmooth}) by immersing it into a linear time-varying system $\left(F_{\mathrm{im}}^{*}(\cdot), G_{\mathrm{im}}^{*}(\cdot), H_{\mathrm{im}}^{*}(\cdot)\right)$ of the form
\begin{equation}\label{equ:canonicalImmersion}
\begin{aligned}
    \dot{\eta}^{*}(t) &= (F_{\mathrm{im}}^{*}(t) + G_{\mathrm{im}}^{*}(t) H_{\mathrm{im}}^{*}(t))\eta^{*}(t), \\
    v^{*}(t) &= H_{\mathrm{im}}^{*}(t) \eta^{*}(t),
\end{aligned}
\end{equation}
as this form enables the stabilization problem to be solved in a relatively simple manner (this will be clear later). In the non-smooth case, we can pursue a similar strategy by considering a model in explicit form
\begin{subequations}\label{equ:IMsystemNonsmooth}
\begin{align}  
    \hat{\eta}(t) &= \hat{\Phi}_{\mathrm{im}}(t, t_0) \hat{\eta}_0,  \label{equ:IMsystemNonsmooth1} \\
    \hat{v}(t) &= \hat{\Xi}_{\mathrm{im}}(t)\hat{\eta}(t),
\end{align}
\end{subequations}
with piecewise continuous $\hat{\Phi}_{\mathrm{im}}: \mathbb{R} \times \mathbb{R} \to \mathbb{R}^{s_2 \times s_2}$ and $\hat{\Xi}_{\mathrm{im}}: \mathbb{R} \to \mathbb{R}^{1 \times s_2}$, where $\hat{\Phi}_{\mathrm{im}}$ has the same properties as $\Lambda$ under Assumption~\ref{asmp:exoProp}, having the explicit internal model property. A trivial example for such a model would be $\hat{\Phi}_{\mathrm{im}} = \Lambda$ and $\hat{\Xi}_{\mathrm{im}} = \Delta$.
Since the regulator~(\ref{equ:errorFdRegulator}) is in differential form, even in this possibly non-smooth case, we want to construct an immersion of (\ref{equ:IMsystemNonsmooth})
into a system of the form~(\ref{equ:canonicalImmersion}). To this end, and because it will be used also later, we first review the definition of canonical realisation for the implicit internal model~(\ref{equ:IMsystemSmooth}). Then we generalise the definition to the explicit case.

\begin{define}[Canonical Realization]\label{def:canonicalReal}
Consider an internal model of the form~(\ref{equ:IMsystemSmooth}). We say that the pair $(\tilde{\Phi}_{\mathrm{im}}(\cdot),\; \tilde{\Xi}_{\mathrm{im}}(\cdot))$ admits a \textit{canonical realization} if there exist a bounded continuous mapping $M: \mathbb{R} \to \mathbb{R}^{m \times s_1}$, with $m \geq s_1$, and a time-varying triple $\left(F_{\mathrm{im}}(\cdot), G_{\mathrm{im}}(\cdot), H_{\mathrm{im}}(\cdot)\right) \in \mathbb{R}^{m \times m} \times \mathbb{R}^{1 \times m} \times \mathbb{R}^{m \times 1}$ such that 
\begin{itemize}
    \item[i)] The system
    \begin{equation}\label{equ:FimSystem}
        \dot{x}_f(t) = F_{\mathrm{im}}(t)x_f(t)
    \end{equation}
    is exponentially stable.
    \item[ii)] $M$ has constant positive rank $p_1 \leq s_1$, and satisfies
    \begin{equation}\label{equ:canonicalRealSylv}
        \begin{aligned}
        \dot{M}(t) \!+\! M(t) \tilde{\Phi}_{\mathrm{im}}(t) &= \left(F_{\mathrm{im}}(t) + G_{\mathrm{im}}(t) H_{\mathrm{im}}(t)\right)\! M(t), \\
        \tilde{\Xi}_{\mathrm{im}}(t) & =H_{\mathrm{im}}(t) M(t),
        \end{aligned}
    \end{equation}
    for all $t \geq \hat{t}$ with some positive constant $\hat{t}$. 
\end{itemize}
\end{define}
\begin{define}[Explicit Canonical Realization]\label{def:NonsmoothcanonicalReal}
Consider an internal model of the form~(\ref{equ:IMsystemNonsmooth}), we say that the pair $(\hat{\Phi}_{\mathrm{im}}(\cdot),\; \hat{\Xi}_{\mathrm{im}}(\cdot))$ admits a \textit{canonical realization} if there exist a bounded piecewise continuous mapping $\Pi_M: \mathbb{R} \to \mathbb{R}^{m \times s_2}$, with $m \geq s_2$, and a time-varying triple $\left(F_{\mathrm{im}}(\cdot), G_{\mathrm{im}}(\cdot), H_{\mathrm{im}}(\cdot)\right)$ such that
\begin{itemize}
    \item[i)] System~(\ref{equ:FimSystem}) is exponentially stable.
    \item[ii)] $\Pi_M$ has constant positive rank $p_2 \leq s_2$, and satisfies
    \begin{subequations}\label{equ:IMCanoniParamMoment}
        \begin{align}
        \Pi_M(t) &= \bigg(\Phi_F(t,\hat{t})\Pi_M(\hat{t}) \notag \\
        &\!\!\!\!\!\!\!\!\!\!\!\!
        +\!\! \int_{\hat{t}}^t \!\! \Phi_F(t,\tau)G_\Xi(\tau) \hat{\Phi}_\mathrm{im}(\tau, t_0) d \tau \!\bigg)\! \hat{\Phi}_\mathrm{im}(t, t_0)^{-1}\!\!, \label{equ:IMCanoniParamPi} \\
        \hat{\Xi}_\mathrm{im}(t) &= H_{\mathrm{im}}(t)\Pi_M(t), \label{equ:IMCanoniParamPiAlgeb}
        \end{align}
    \end{subequations}
    for all $t \geq \hat{t}$ with some positive constant $\hat{t}$, where $G_\Xi(t) = G_{\mathrm{im}} (t)\hat{\Xi}_\mathrm{im}(t)$ and $\Phi_F(t, t_0) \in \mathbb{R}^{m \times m}$ is the state-transition matrix of system~(\ref{equ:FimSystem}).  
\end{itemize}
\end{define}

\begin{remark}
Similarly to the discussion in Remark~\ref{mrk:IMPvsTimeVarying}, the explicit expression~(\ref{equ:IMCanoniParamPi}) can be seen as a non-smooth version of the time-varying Sylvester equation shown in~(\ref{equ:canonicalRealSylv}). In fact, if systems~(\ref{equ:IMsystemSmooth1}) and~(\ref{equ:IMsystemNonsmooth1}) are equivalent, \textit{i.e.}, $\hat{\Phi}_{\mathrm{im}}$ in~(\ref{equ:IMsystemNonsmooth1}) is the transition matrix of system~(\ref{equ:IMsystemSmooth1}), then~(\ref{equ:canonicalRealSylv}) and~(\ref{equ:IMCanoniParamMoment}) are equivalent with $M = \Pi_M$.
\end{remark}

Definitions~\ref{def:canonicalReal} and~\ref{def:NonsmoothcanonicalReal} propose canonical realisations of a valid internal model in the implicit/explicit form. We can show that such canonical realisations have the internal model property. To this end, consider~(\ref{equ:IMProperty}) and let $\mathcal{G}_\xi(t) = F_{\mathrm{im}}(t) + G_{\mathrm{im}}(t) K_\xi(t)$ with $m = q$ and $K_\xi(t) = H_{\mathrm{im}}(t)$. If $\left(F_{\mathrm{im}}(\cdot), G_{\mathrm{im}}(\cdot), H_{\mathrm{im}}(\cdot)\right)$ is a canonical realisation of the internal model~(\ref{equ:IMsystem}), then~(\ref{equ:IMProperty}) is equivalent to~(\ref{equ:IMCanoniParamMoment}) with $\Pi_\xi = \Pi_M$, $\hat{\Phi}_\mathrm{im} = \Lambda$, and $\hat{\Xi}_\mathrm{im} = \Delta$. Now that we have established that the explicit canonical realization has the explicit internal model property, we show how to design a realization $\left(F_{\mathrm{im}}(\cdot), G_{\mathrm{im}}(\cdot), H_{\mathrm{im}}(\cdot)\right)$ which is canonical according to Definition~\ref{def:NonsmoothcanonicalReal}.
To this end, we first choose a time-invariant matrix pair $\left(F_{\mathrm{im}}, G_{\mathrm{im}}\right) \in \mathbb{R}^{m \times m} \times \mathbb{R}^{m \times 1}$ with $F_{\mathrm{im}}$ Hurwitz. Then the only remaining task is the design of $H_{\mathrm{im}}$ such that~(\ref{equ:IMCanoniParamPiAlgeb}) holds. By drawing inspiration from the design of a hybrid internal model unit~\cite{ref:marconi2013internal}, we propose a sufficient condition for designing a canonical realization in the following proposition.

\begin{prop}\label{prop:pseudoInverse}
Consider system~(\ref{equ:IMsystemNonsmooth}) with the explicit internal model property. Suppose $\hat{\Phi}_{\mathrm{im}}$ has the same properties as $\Lambda$ under Assumption~\ref{asmp:exoProp}. Let $m \geq s_2$ and select matrices $F_{\mathrm{im}}$ and $G_{\mathrm{im}}$ such that $F_{\mathrm{im}}$ is Hurwitz and the pair $\left(F_{\mathrm{im}}, G_{\mathrm{im}}\right)$ is controllable. If there exists a positive constant $\hat{t}$ and a positive integer $p_2 \leq s_2$ such that the solution $\Pi_{M}$ in~(\ref{equ:IMCanoniParamMoment}) satisfies $\operatorname{rank} (\Pi_{M}(t)) = p_2$ for all $t \geq \hat{t}$, then there exists a bounded piecewise continuous matrix-valued function $H_{\mathrm{im}}(t) \in \mathbb{R}^{1 \times m}$ such that $\hat{\Xi}_\mathrm{im}(t) = H_{\mathrm{im}}(t) \Pi_{M}(t)$ for all $t \geq \hat{t}$. In fact, it is possible to take $H_{\mathrm{im}}(t) = \hat{\Xi}_\mathrm{im}(t) \Pi_{M}(t)^{\dagger}$. 
\end{prop}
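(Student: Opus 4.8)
The plan is to treat~(\ref{equ:IMCanoniParamPiAlgeb}) pointwise in $t$ as a linear algebraic equation and solve it with the Moore--Penrose pseudoinverse. For each fixed $t \geq \hat{t}$, the identity $\hat{\Xi}_{\mathrm{im}}(t) = H_{\mathrm{im}}(t)\Pi_M(t)$ is a linear system in the unknown row vector $H_{\mathrm{im}}(t) \in \mathbb{R}^{1 \times m}$. By the standard theory of the matrix equation $XA = B$ (equivalently, transposing to $\Pi_M(t)^{\top}H_{\mathrm{im}}(t)^{\top} = \hat{\Xi}_{\mathrm{im}}(t)^{\top}$), a solution exists if and only if $\hat{\Xi}_{\mathrm{im}}(t)\bigl(I_{s_2} - \Pi_M(t)^{\dagger}\Pi_M(t)\bigr) = \bm{0}_{1 \times s_2}$, i.e.\ iff $\hat{\Xi}_{\mathrm{im}}(t)$ is fixed by right multiplication by the orthogonal projector $\Pi_M(t)^{\dagger}\Pi_M(t)$ onto the row space of $\Pi_M(t)$; and in that case $H_{\mathrm{im}}(t) = \hat{\Xi}_{\mathrm{im}}(t)\Pi_M(t)^{\dagger}$ is the minimum-norm solution. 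So the whole statement reduces to two things: (a) this compatibility condition holds for all $t \geq \hat{t}$, and (b) the resulting $H_{\mathrm{im}}(\cdot)$ is bounded and piecewise continuous.

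For (a) I would invoke the explicit internal model property assumed of~(\ref{equ:IMsystemNonsmooth}). The point is that $\Pi_M$ is generated in~(\ref{equ:IMCanoniParamPi}) precisely by the driving term $G_\Xi = G_{\mathrm{im}}\hat{\Xi}_{\mathrm{im}}$, so that passing to $\Psi_M := \Pi_M\hat{\Phi}_{\mathrm{im}}$ (which, by the same argument as Lemma~\ref{lem:moment}, solves $\dot{\Psi}_M = F_{\mathrm{im}}\Psi_M + G_{\mathrm{im}}\hat{\Xi}_{\mathrm{im}}\hat{\Phi}_{\mathrm{im}}$) the output $\hat{v} = \hat{\Xi}_{\mathrm{im}}\hat{\eta}$ that the internal model reproduces is exactly the signal factored through $\Pi_M$; the internal model property then forces $\hat{\Xi}_{\mathrm{im}}(t)$ into the row space of $\Pi_M(t)$ for all $t \geq \hat{t}$. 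I expect this compatibility step to be the main obstacle, since it is the only place where the internal model property, rather than pure linear algebra, is used, and it must be argued uniformly in $t$.

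For (b) I would use the constant-rank hypothesis $\operatorname{rank}(\Pi_M(t)) = p_2$ for all $t \geq \hat{t}$: on any interval where $\Pi_M$ is continuous and of constant rank the map $\Pi_M \mapsto \Pi_M^{\dagger}$ is continuous, so $\Pi_M^{\dagger}$, and hence $H_{\mathrm{im}} = \hat{\Xi}_{\mathrm{im}}\Pi_M^{\dagger}$, is piecewise continuous, inheriting only the discontinuity points of $\Pi_M$. Boundedness then follows from the boundedness of $\hat{\Xi}_{\mathrm{im}}$ together with that of $\Pi_M^{\dagger}$; this last point is the delicate one, because constant rank on the \emph{unbounded} interval $[\hat{t}, \infty)$ does not by itself prevent the nonzero singular values of $\Pi_M$ from decaying to zero, so it must be complemented by the boundedness of $\Pi_M$ (from Lemma~\ref{lem:moment} under Assumption~\ref{asmp:exoProp}) and by those singular values staying bounded away from zero. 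A final one-line verification, $H_{\mathrm{im}}\Pi_M = \hat{\Xi}_{\mathrm{im}}\Pi_M^{\dagger}\Pi_M = \hat{\Xi}_{\mathrm{im}}$, using the projector identity together with the compatibility from (a), closes the argument.
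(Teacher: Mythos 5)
Your reduction of the claim to the pointwise compatibility condition $\hat{\Xi}_{\mathrm{im}}(t)\bigl(I_{s_2}-\Pi_M(t)^{\dagger}\Pi_M(t)\bigr)=\bm{0}_{1\times s_2}$, followed by the minimum-norm choice $H_{\mathrm{im}}=\hat{\Xi}_{\mathrm{im}}\Pi_M^{\dagger}$, is exactly the paper's route: the paper sets $\Psi_M:=\Pi_M\hat{\Phi}_{\mathrm{im}}$ and $R:=\hat{\Xi}_{\mathrm{im}}\hat{\Phi}_{\mathrm{im}}$, notes $\operatorname{rank}\Psi_M=\operatorname{rank}\Pi_M=p_2$ by invertibility of $\hat{\Phi}_{\mathrm{im}}$, and reduces everything to $R(t)\bigl(I_{s_2}-\Psi_M(t)^{\dagger}\Psi_M(t)\bigr)=0$, which is equivalent to your condition. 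The genuine gap is in your step (a). You propose to derive the compatibility from the explicit internal model property of~(\ref{equ:IMsystemNonsmooth}); but that property concerns only the pair $(\hat{\Phi}_{\mathrm{im}},\hat{\Xi}_{\mathrm{im}})$ and its ability to reproduce the error-zeroing output --- it asserts nothing about the row space of the particular solution $\Pi_M$ of~(\ref{equ:IMCanoniParamMoment}), and the row-space membership of $\hat{\Xi}_{\mathrm{im}}$ is precisely the statement to be proved, so your sketch is circular at the crux. The mechanism actually used (the paper argues ``analogously to the proof of \cite[Proposition~3]{ref:marconi2013internal}'') combines the constant-rank hypothesis with the \emph{controllability} of $(F_{\mathrm{im}},G_{\mathrm{im}})$: since, as you yourself note, $\dot{\Psi}_M=F_{\mathrm{im}}\Psi_M+G_{\mathrm{im}}R$, if at some $t^{*}\geq\hat{t}$ there were a vector $w$ with $\Psi_M(t^{*})w=0$ but $R(t^{*})w\neq 0$, the forcing $G_{\mathrm{im}}Rw$ would push $\Psi_M w$ off zero into directions not already spanned, making $\operatorname{rank}\Psi_M$ exceed $p_2$ immediately after $t^{*}$ and contradicting constant rank. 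Tellingly, your proposal never invokes controllability of $(F_{\mathrm{im}},G_{\mathrm{im}})$, which appears in the hypotheses precisely for this step; it is indispensable, since with $G_{\mathrm{im}}=0$ one gets $\Psi_M(t)=\Phi_F(t,\hat{t})\Psi_M(\hat{t})$, which has constant rank for any full-rank initialization while $\hat{\Xi}_{\mathrm{im}}$ remains unconstrained, so the compatibility condition generically fails --- no argument ignoring $(F_{\mathrm{im}},G_{\mathrm{im}})$ can close this step.

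Two secondary remarks. First, your continuity argument (continuity of the pseudoinverse on sets of constant rank) and the closing verification $H_{\mathrm{im}}\Pi_M=\hat{\Xi}_{\mathrm{im}}\Pi_M^{\dagger}\Pi_M=\hat{\Xi}_{\mathrm{im}}$ are correct and consistent with the paper's concluding remark that $H_{\mathrm{im}}$ inherits piecewise continuity from $\hat{\Xi}_{\mathrm{im}}$ and $\Pi_M$. Second, your observation that constant rank on the unbounded interval $[\hat{t},\infty)$ does not by itself keep $\Pi_M(t)^{\dagger}$ bounded (the nonzero singular values could decay to zero) is a fair and careful point which the paper's proof passes over in silence, delegating it to the analogy with the hybrid case; but flagging the needed lower bound on the singular values is not the same as establishing it, so in your write-up the claimed boundedness of $H_{\mathrm{im}}$, like the compatibility condition, remains asserted rather than proved.
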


\begin{proof}
Let $\Psi_{M} := \Pi_M \hat{\Phi}_\mathrm{im}$. Since $\hat{\Phi}_{\mathrm{im}}$ is invertible, $\operatorname{rank} (\Psi_{M}(t)) = \operatorname{rank} (\Pi_{M}(t)) = p_2$ for all $t \geq \hat{t}$. Now denote $R(t) := \hat{\Xi}_\mathrm{im}(t)\hat{\Phi}_\mathrm{im}(t)$. Showing $H_{\mathrm{im}}(t) = \hat{\Xi}_\mathrm{im}(t) \Pi_{M}(t)^{\dagger}$ is equivalent to show $R(t) (I_{s_2} - \Psi_M(t)^{\dagger}\Psi_M(t)) = 0$ for all $t \geq \hat{t}$ with $\operatorname{rank}(\Psi_M) \equiv p_2 \leq s_2 \leq m$. Then this equation can be proved analogously to the proof of~\cite[Proposition 3]{ref:marconi2013internal}. Note that $H_{\mathrm{im}}$ is piecewise continuous as a consequence of the piecewise continuity of $\hat{\Xi}_\mathrm{im}$ and $\Pi_{M}$.
\end{proof}

\begin{remark}
    The idea of Proposition~\ref{prop:pseudoInverse} is similar to the design method of Kazantzis–Kravaris/Luenberger (KKL) observers for nonlinear systems~\cite{ref:kazantzis1998nonlinear,ref:kreisselmeier2003nonlinear,ref:andrieu2006existence}. More specifically, the KKL observer requires finding an immersion of a given nonlinear system into a dynamical system of the form
    \begin{equation*}
        \dot{z}_{o} = A_o z_o + By_o,  \qquad \hat{x}_o = T^{\mathrm{inv}}_o(z_o)
    \end{equation*}
    where $A_o$ is Hurwitz, $B_o$ is such that $(A_o, B_o)$ is controllable, and $T^{\mathrm{inv}}_o$ is a mapping designed to recover the nonlinear dynamics of the original system. In our work, similarly, the construction of a non-smooth canonical realisation of an explicit-form internal model~(\ref{equ:IMsystemNonsmooth}) boils down to finding a matrix-valued function $H_{\mathrm{im}}$ that recovers the non-smooth dynamics of the internal model~(\ref{equ:IMsystemNonsmooth}). 
\end{remark}

\subsection{Robust stabilisation}
\label{sec-robstab}
In this part, we show that when certain solvability conditions hold and the internal model unit admits a canonical realization, a standard high-gain feedback can be used to (robustly) stabilise the closed-loop system, \textit{i.e.}, condition \textbf{($\mathbf{S_E}$)} is satisfied. Considering the general case that the paper is focused on ($\omega$ is non-periodic, non-smooth, and can be diverging
), by Remark~\ref{remark:minphase}
the following assumption is introduced for sufficiently guaranteeing the existence of a solution to the regulator equations when the relative degree is one.
\begin{assumpt}\label{asmp:MinPhase}
    System~(\ref{equ:system}) is minimum phase.
\end{assumpt}

Now we study the stabilisation of the closed-loop system. To this end, we first assume that the system has a unitary relative degree, and then we discuss solutions to other cases. Note that without loss of generality, we assume that the high-frequency gain of the system $(A, B, C)$, denoted by $b$, is positive.
\begin{prop}\label{prop:stabilisation}
Consider system~(\ref{equ:system}). Suppose Assumptions~\ref{asmp:unitaryRD} and~\ref{asmp:MinPhase} hold. Given an internal model pair, either of the form~(\ref{equ:IMsystemSmooth}) or~(\ref{equ:IMsystemNonsmooth}), that admits a canonical realization $(F_{\mathrm{im}}, G_{\mathrm{im}}, H_{\mathrm{im}}(\cdot))$ with $F_{\mathrm{im}}$ Hurwitz and $H_{\mathrm{im}}(\cdot)$ bounded piecewise continuous, there exists a positive constant $\kappa$ such that for all $k > \kappa$, the regulator
\begin{equation}\label{equ:errorFdReguCanonical}
    \begin{aligned}
    \dot{\xi}(t) &= \left(F_{\mathrm{im}}+G_{\mathrm{im}} H_{\mathrm{im}}(t)\right) \xi(t)  -k G_{\mathrm{im}} e(t)\\
    u(t) &= H_{\mathrm{im}}(t) \xi(t) - k e(t)
    \end{aligned}
\end{equation}
with $\xi \in \mathbb{R}^{q}$ is such that condition \textbf{($\mathbf{S_E}$)} is satisfied.
\end{prop}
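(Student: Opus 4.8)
The plan is to show condition \textbf{($\mathbf{S_E}$)} by reducing the closed loop (with $\omega \equiv 0$ and $D=0$, so $e=Cx$) to a standard high-gain cascade and then concluding exponential stability with a composite Lyapunov function. First I would put the plant $(A,B,C)$ into its relative-degree-one normal form. Under Assumption~\ref{asmp:unitaryRD}, the constant coordinate change $x \mapsto \operatorname{col}(e,z)$ with $e=Cx$ and $z$ chosen so that the $z$-channel is input-free gives
\begin{equation*}
\dot{e} = S_{11}e + S_{12}z + b\,u, \qquad \dot{z} = A_z z + S_{21}e,
\end{equation*}
where $b=CB>0$ is the high-frequency gain and $\sigma(A_z)$ coincides with the transmission zeros of~(\ref{equ:system}); by Assumption~\ref{asmp:MinPhase}, $A_z$ is Hurwitz. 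Substituting the regulator~(\ref{equ:errorFdReguCanonical}) (with $e=Cx$) the filter equation becomes $\dot{\xi}=(F_{\mathrm{im}}+G_{\mathrm{im}}H_{\mathrm{im}}(t))\xi - kG_{\mathrm{im}}e$ and $\dot{e}=(S_{11}-kb)e+S_{12}z+bH_{\mathrm{im}}(t)\xi$.

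The crucial step is a further constant change of variables $\hat{\xi}:=\xi - b^{-1}G_{\mathrm{im}}e$, whose purpose is to remove the high-gain coupling $-kG_{\mathrm{im}}e$ from the filter. A direct computation cancels the $k$-terms and yields
\begin{equation*}
\dot{\hat{\xi}} = F_{\mathrm{im}}\hat{\xi} + b^{-1}\bigl(F_{\mathrm{im}}G_{\mathrm{im}}-S_{11}G_{\mathrm{im}}\bigr)e - b^{-1}G_{\mathrm{im}}S_{12}z,
\end{equation*}
so that $k$ no longer appears in the $\hat{\xi}$-equation, while $\dot{e}=(S_{11}+H_{\mathrm{im}}(t)G_{\mathrm{im}}-kb)e+S_{12}z+bH_{\mathrm{im}}(t)\hat{\xi}$. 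In coordinates $(z,\hat{\xi},e)$ the system is thus a cascade in which the $(z,\hat{\xi})$-part is governed by the two Hurwitz matrices $A_z$ and $F_{\mathrm{im}}$ and is coupled to the scalar fast variable $e$, which now carries the entire high-gain term $-kb$.

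I would then close with a composite Lyapunov function $V=z^\top P_z z + \hat{\xi}^\top P_\xi \hat{\xi} + c\,e^2$, where $P_z,P_\xi$ are positive definite and solve the Lyapunov equations for $A_z$ and $F_{\mathrm{im}}$, and $c>0$ is fixed afterwards. Since $H_{\mathrm{im}}(\cdot)$ is bounded, all cross-coupling coefficients are bounded uniformly in $t$, so by Young's inequality the indefinite cross terms are dominated by the negative terms $-|z|^2-|\hat{\xi}|^2$ (from $A_z,F_{\mathrm{im}}$) together with $2(S_{11}+H_{\mathrm{im}}(t)G_{\mathrm{im}}-kb)\,e^2$, whose coefficient is made arbitrarily negative by enlarging $k$. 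Choosing $c$ first and then $\kappa$ large enough, one obtains $\dot{V}\le -\alpha V$ for almost all $t$ and all $k>\kappa$; since the coefficients are only piecewise continuous, a Gr\"onwall/comparison argument in the Carath\'eodory sense then yields uniform exponential stability of the origin, i.e. condition~\textbf{($\mathbf{S_E}$)}. The argument is identical whether the internal model is of the implicit form~(\ref{equ:IMsystemSmooth}) or the explicit form~(\ref{equ:IMsystemNonsmooth}), since it uses only the Hurwitzness of $F_{\mathrm{im}}$ and the boundedness of $H_{\mathrm{im}}(\cdot)$ delivered by the canonical realization.

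The hard part will be the coordinate change $\hat{\xi}=\xi-b^{-1}G_{\mathrm{im}}e$: without it, the term $-kG_{\mathrm{im}}e$ generates $k$-dependent cross terms in $\dot{V}$ that cannot be absorbed, and a naive singular-perturbation reading is misleading because the closed filter matrix $\mathcal{G}_\xi=F_{\mathrm{im}}+G_{\mathrm{im}}H_{\mathrm{im}}(t)$ is in general \emph{not} Hurwitz (it is constructed to reproduce the possibly diverging $\Lambda$). Stability therefore genuinely rests on the high gain combined with this decoupling, not on the internal model dynamics themselves.
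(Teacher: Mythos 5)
Your proposal is correct and follows essentially the same route as the paper: the same normal-form transformation with $T_2=C$ isolating the error dynamics, the same key decoupling change of coordinates (your $\hat{\xi}=\xi-b^{-1}G_{\mathrm{im}}e$ is exactly the paper's $\chi=\xi-(1/b)G_{\mathrm{im}}\hat{y}$, cancelling the $k$-dependent coupling and producing the identical cascade with $A_z$ and $F_{\mathrm{im}}$ Hurwitz and the high gain confined to the scalar $e$-equation). The only difference is cosmetic: where the paper closes by invoking the standard high-gain result (Isidori's Lemma~1.5.4), you unpack that lemma's composite Lyapunov argument explicitly, which is the same mathematics.
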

\begin{proof}
Define a linear transformation $T := [T_1^{\top},\;T_2^{\top}]^{\top}$ with $T_2 = C$ and $T_1$ being selected such that $T$ is non-singular. Then the change of variable $x \mapsto \theta :=T x = \operatorname{col}(z, \hat{y})$, with $z \in \mathbb{R}^{n-1}$ and $\hat{y} \in \mathbb{R}$, transforms system~(\ref{equ:system}) with $\omega \equiv 0$ into
\begin{equation}\label{equ:systemNormalForm}
    \begin{aligned}
    \dot{z}(t) &= A_{11}z(t) + A_{12}\hat{y}(t), \\ 
    \dot{\hat{y}}(t) &= A_{21}z(t) + A_{22}\hat{y}(t) + bu(t), \\
    e(t) &= \hat{y}(t).
    \end{aligned}
\end{equation}
where
\begin{equation*}
    T A T^{-1} =\left[\begin{array}{ll}
        A_{11} & A_{12} \\
        A_{21} & A_{22}
    \end{array}\right].
\end{equation*}
Then the change of variable $\xi \mapsto \chi = \xi - (1/b)G_{im}\hat{y}$ transforms systems~(\ref{equ:errorFdReguCanonical}) and~(\ref{equ:systemNormalForm}) into
\begin{equation*}
    \begin{aligned}
    & \dot{\chi}(t)=F_{\mathrm{im}} \chi(t) + \mathcal{A}_{\chi 1} z(t) + \mathcal{A}_{\chi 2} \hat{y}(t) \\
    & \dot{z}(t) = A_{11} z(t) +A_{12} \hat{y}(t) \\
    & \dot{\hat{y}}(t) = - b k\hat{y}(t) + A_y(t) \hat{y}(t) + b H_{\mathrm{im}}(t) \chi(t) + A_{21} z(t)
    \end{aligned}
\end{equation*}
with $A_{\chi 1} = -(1/b)G_{\mathrm{im}}A_{21}$, $\mathcal{A}_{\chi 2} = (1/b)(F_{\mathrm{im}}G_{\mathrm{im}} - G_{\mathrm{im}}A_{22})$, and $A_{y}(t) = A_{22} - bH_{\mathrm{im}}(t)G_{\mathrm{im}}$. By the boundedness of $H_{\mathrm{im}}$, the exponential stability of this closed-loop system can be proved by following the proof of the standard high-gain feedback design method~\cite[Lemma 1.5.4]{ref:isidori2003robust}.
\end{proof}

Proposition~\ref{prop:stabilisation} presents a stabilisation method when system~(\ref{equ:system}) has a unitary relative degree. Note that the stabilisation does not destroy the internal model property possessed by the canonical realisation, see~\cite[Proposition 3.3]{ref:zhang2009adaptive}. This implies that regulator~(\ref{equ:errorFdReguCanonical}) solves Problem~\ref{prob:ORErrorFd} if the internal model property (condition \textbf{($\mathbf{R_E}$)}) is satisfied by the internal model pair (and therefore the canonical realisation). In fact, this stabilisation method could be extended to a higher relative degree by means of high-gain observers or feedback transformations~\cite{ref:isidori2000tool}. For example, such extension has been introduced in the time-varying setting~\cite[Proposition 3.4]{ref:zhang2009adaptive}, which can be directly adapted to this non-smooth case.

Note that the stabilisation method discussed so far is robust, \textit{i.e.}, the methods still work even with the presence of system parameter uncertainties, so long as the sign of the high-frequency gain is known. Alternatively, a \textit{structually stable synthesis}~\cite{ref:zhang2009adaptive} can be constructed by regarding regulator~(\ref{equ:errorFdRegulator}) as the parallel interconnection of an internal model unit and a stabilisation unit. Therefore, the same stabilisation methods proposed in Proposition~\ref{prop:stabilisation} will be used in the next section, which is focused on the internal model principle.

Finally, in the case that system~(\ref{equ:system}) has a zero relative degree ($D \neq 0$), the same error-feedback output regulation problem can be solved by the use of an adaptive observer under an extra persistence of excitation condition, see~\cite{ref:niu2024adaptive}.

\section{Internal Model Principle}\label{sec:IMP}
The error-feedback controller proposed in Section~\ref{sec:ErrorFdProb} can solve the problem only when all the system parameters are perfectly known, \textit{i.e.}, there are no parameter uncertainties. This deficiency is due to the lack of robustness in the internal model. Thus, in this section, we formulate the robust non-smooth output regulation problem, and we solve the problem in the presence of model parameter uncertainty by looking into the construction of a robust internal model. To this end, we propose two design methods. First, we show that the augmentation-based internal model proposed in the hybrid regulation framework can also be extended to this non-smooth, non-periodic case. Due to the high-dimensional nature of the augmentation method, we then propose a new design method based on the concept of \textit{integral-based immersion}.

\subsection{Problem Formulation}
To study the problem with model parameter variations, we consider the uncertain LTI system of the form 
\begin{equation}\label{equ:UncertainSystem}
    \begin{aligned}
    \dot{x}(t) &= A(\mu)x(t) + B(\mu)u(t) + P(\mu)\omega(t),\\
    e(t) &= C(\mu)x(t) + Q(\mu)\omega(t),
    \end{aligned}
\end{equation}
with $\omega$ in~(\ref{equ:explicitGen}) and $\mu$ an unknown constant vector that belongs to a given compact set $\mathcal{K}_\mu \in \mathbb{R}^{n_\mu}$. Due to the presence of $\mu$, all model parameter values are not precisely known. In this context, we define the robust output regulation problem.

\begin{problem}[Robust Non-smooth Output Regulation Problem]\label{prob:ORRobust}
Consider system~(\ref{equ:UncertainSystem}) interconnected with the exosystem~(\ref{equ:explicitGen}) under Assumption~\ref{asmp:exoProp}. The output regulation problem consists in designing a regulator~(\ref{equ:errorFdRegulator}) such that for all $\mu \in \mathcal{K}_\mu$:
\begin{itemize}[leftmargin=28pt]
    \item[\textbf{($\mathbf{S_R}$)}] The origin of the closed-loop system obtained by interconnecting system~(\ref{equ:UncertainSystem}), the exosystem (\ref{equ:explicitGen}), and the regulator with $\omega(t) \equiv 0$ is exponentially stable.
    \item[\textbf{($\mathbf{R_R}$)}] The closed-loop system obtained by interconnecting system~(\ref{equ:UncertainSystem}), the exosystem~(\ref{equ:explicitGen}), and the regulator satisfies 
    $ \lim_{t \rightarrow +\infty} e(t) = 0, $
    uniformly for any $\left(w(t_0), x(t_0), \xi(t_0)\right) \in \mathbb{R}^{\nu+n+q}$.
\end{itemize}
\end{problem}

Since the model parameters are dependent on the uncertainty vector $\mu$, the robust output regulation problem normally cannot be solved by the error-feedback controller proposed in Section~\ref{sec:ErrorFdProb}. To see this, with the presence of $\mu$ in system~(\ref{equ:UncertainSystem}), we note that the solutions to the corresponding regulator equations of the form~(\ref{equ:reguEquation}), if they exist (for instance under the assumptions of Theorem~\ref{thm:solCondBoth}), are $\mu$-dependent. This means that if there exist bounded piecewise continuous mappings $\Pi^{*}_x: \mathbb{R} \times \mathbb{R}^{n_\mu} \to \mathbb{R}^{n \times \nu}$ and $\Delta^{*}: \mathbb{R} \times \mathbb{R}^{n_\mu} \to \mathbb{R}^{1 \times \nu}$ solving the uncertain regulator equations
\begin{equation}\label{equ:reguEquationUncertain}
    \begin{aligned}
    \Pi_x^{*}(t, \mu) &= \bigg(e^{A(\mu)(t-t_0)}\Pi_x^{*}(t_0)\\ &\!\!\!\!\!\!\!\!\!\!\!\!\!\!\!+\!\!
    \int_{t_0}^t \!\!\! e^{A(\mu)(t-\tau)}\!\! \left(P(\mu) \!+\! B(\mu)\Delta\!^{*}(\tau, \mu)\right)\! \Lambda(\tau, t_0) d \tau \!\!\bigg)\!  \Lambda(t, t_0)^{-1}\!\!, \\
    \!\!\!\!\bm{0}_{1 \times \nu} &= \lim_{t \rightarrow +\infty}C(\mu)\Pi_x(t, \mu) + Q(\mu),
    \end{aligned}
\end{equation}
for all $t \geq t_0$ and all $\mu \in \mathcal{K}_\mu$, then the internal model property, by Theorem~\ref{thm:IMProperty}, requires the design of $\mathcal{G}^{\mathrm{im}}_\xi(\cdot):=\mathcal{G}_\xi(\cdot)$ and $K^{\mathrm{im}}_\xi(\cdot):=K_\xi(\cdot)$ solving (\ref{equ:IMProperty}) for all $\mu \in \mathcal{K}_\mu$, namely solving
\begin{equation}\label{equ:IMPropertyRobust}
\begin{aligned}
    \frac{d}{dt}\left(\Pi^{*}_\xi(t, \mu)\Lambda(t, t_0) \right) &= \mathcal{G}^{\mathrm{im}}_\xi(t)\Pi^{*}_\xi(t, \mu)\Lambda(t, t_0), \\
    \Delta^{*}(t,\mu) &= K^{\mathrm{im}}_\xi(t)\Pi_\xi^{*}(t, \mu),
\end{aligned}
\end{equation}
for all $t \geq \hat{t}$ and all $\mu \in \mathcal{K}_\mu$, where $\Pi^{*}_{\xi}: \mathbb{R} \times \mathbb{R}^{n_\mu} \to \mathbb{R}^{q \times \nu}$ is bounded piecewise continuous. The satisfaction of the internal model property~(\ref{equ:IMPropertyRobust}) for all $\mu \in \mathcal{K}_\mu$ can be seen as the \textit{internal model principle} in this non-smooth output regulation setting. The principle implies that all controllers that solve the problem can reproduce an unknown error-zeroing input $u^{*}(t) = \Delta^{*}(t,\mu)\omega(t)$ as $t \rightarrow +\infty$, and, more specifically, should incorporate a robust internal model of the form
\begin{equation}\label{equ:IMsystemRobust}
\begin{aligned}  
    \omega(t) &= \Lambda(t, t_0) \omega_0, \\
    y_r(t) &= \Delta^{*}(t, \mu)\omega(t).
\end{aligned}
\end{equation}

This robust internal model property is normally not possessed by the error-feedback controller designed in the previous section. There are special cases, however, where the normal error-feedback controller is in fact robust. For example, if the solution to the uncertain regulator equations $\Delta^{*}$ satisfies
\begin{equation}\label{equ:ErrorFdRobustCond}
    \!\!\Delta\!^{*}\!(t, \mu)\Lambda(t, t_0) \!=\! \Delta(t)U(\mu)\Lambda(t, t_0) \!=\! \Delta(t)\Lambda(t, t_0)U(\mu)
\end{equation}
with $U(\mu) \in \mathbb{R}^{\nu \times \nu}$ and $\Delta$ the nominal solution to the regulator equations~(\ref{equ:reguEquation}) based on the nominal parameter values. 
Let $\mathcal{G}_\xi^{\mathrm{im}}(\cdot)$, $K_{\xi}^{\mathrm{im}}(\cdot)$ be
such that~(\ref{equ:IMProperty})
admits a bounded piecewise continuous solution $\Pi_\xi$, then it is straightforward to see that $\Pi^{*}_\xi(t, \mu) = \Pi_\xi(t) U(\mu)$ is a solution to~(\ref{equ:IMPropertyRobust}). However, condition~(\ref{equ:ErrorFdRobustCond}) is quite restrictive due to the commutativity requirement between $\Lambda$ and $U$. The rest of the section will then propose methods of establishing a valid robust internal model unit for more general cases.


\subsection{Augmentation-based Internal Model}
\label{sec:HybridIMP}
As mentioned in Section~\ref{sec:intro}, existing studies into the output regulation problems that involve non-smooth dynamics of systems are conducted in the hybrid framework with extra assumptions on the periodicity of the exogenous signal, see~\cite{ref:marconi2013internal}. Now we show that the method of constructing a robust internal model proposed in that framework can be extended to our non-periodic setting. 

The robust internal model generalised from the hybrid framework is based on how the uncertain parameters $\mu$ of system~(\ref{equ:UncertainSystem}) influence the solution to the regulator equations. More specifically, it assumes that the solution to the uncertain regulator equations~(\ref{equ:reguEquationUncertain}), $\Delta^{*}(t, \mu)$, satisfies
\begin{equation}\label{equ:HybridUncertAssump}
    \Delta^{*}(t, \mu) = \Delta^{\prime}(t)U^{\prime}(\mu),
\end{equation}
with $\Delta^{\prime}(t) \in \mathbb{R}^{1 \times l}$, $l > 0$ a known vector function and $U^{\prime}(\mu) \in \mathbb{R}^{l \times \nu}$ an uncertain matrix. Compared with~(\ref{equ:ErrorFdRobustCond}), this hypothesis~(\ref{equ:HybridUncertAssump}) does not rely on any commutativity properties and is less restrictive. If~(\ref{equ:HybridUncertAssump}) holds, an augmented system of the form
\begin{equation}\label{equ:IMsystemAugRobust}
\begin{aligned}  
    \hat{\omega}(t) &= \left(I_{\nu l} \otimes \Lambda(t, t_0)\right) \hat{\omega}_0, \\
    y_\omega(t) &= \Delta^{\prime}(t)\left(I_{l} \otimes (\operatorname{vec}I_\nu)^{\top}\right)\hat{\omega}(t),
\end{aligned}
\end{equation}
is able to generate the output of system~(\ref{equ:IMsystemRobust}) when initialised by $\hat{\omega}_0 = \left(\operatorname{vec} (U^{{\prime}\top}) \otimes I_\nu\right) w_0$. The reason for this choice is justified by the following proposition.
\begin{prop}
Let the pair $(\mathcal{G}^{\mathrm{im}}_\xi(\cdot), K^{\mathrm{im}}_\xi(\cdot))$  be such that
\begin{equation*}
    \begin{aligned}
    \frac{d}{dt}\left(\Pi^{\prime}_\xi(t)(I_{\nu l} \otimes \Lambda(t, t_0)) \right) &= \mathcal{G}^{\mathrm{im}}_\xi(t)\Pi^{\prime}_\xi(t)(I_{\nu l} \otimes \Lambda(t, t_0)) \\
    \Delta^{\prime}(t)\left(I_{l} \otimes\left(\operatorname{vect} I_\nu\right)^\top\right) &= K^{\mathrm{im}}_\xi(t) \Pi^{\prime}_\xi(t)   
    \end{aligned}
\end{equation*}
for some $\Pi^{\prime}_\xi:\mathbb{R} \to \mathbb{R}^{q \times \nu^2 l}$. Then,
\begin{equation*}
    \Pi_\xi^{*}(t, \mu)=\Pi^{\prime}_\xi(t)\left(\operatorname{vec} (U^{{\prime}\top}) \otimes I_\nu\right)
\end{equation*}
is a solution to (\ref{equ:IMPropertyRobust}).
\end{prop}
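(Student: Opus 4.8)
The plan is to verify directly that the claimed $\Pi_\xi^*(t,\mu)$ satisfies both equations of the internal model property~(\ref{equ:IMPropertyRobust}), by substituting the definition $\Pi_\xi^*(t,\mu)=\Pi'_\xi(t)\left(\operatorname{vec}(U'^{\top})\otimes I_\nu\right)$ into each equation and reducing it to the two hypotheses on the pair $(\mathcal{G}_\xi^{\mathrm{im}},K_\xi^{\mathrm{im}})$. The central observation that makes everything work is a Kronecker-product identity: right-multiplying $I_{\nu l}\otimes\Lambda(t,t_0)$ by the constant matrix $\operatorname{vec}(U'^{\top})\otimes I_\nu$ should produce $\left(\operatorname{vec}(U'^{\top})\otimes I_\nu\right)\Lambda(t,t_0)$, because the mixed-product rule $(A\otimes B)(C\otimes D)=(AC)\otimes(BD)$ gives $(I_{\nu l}\operatorname{vec}(U'^{\top}))\otimes(\Lambda\cdot I_\nu)=\operatorname{vec}(U'^{\top})\otimes\Lambda$. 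Establishing and exploiting this commutation-type identity is the technical heart of the argument.

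First I would address the differential equation, \textit{i.e.}\ the first line of~(\ref{equ:IMPropertyRobust}) with $\Pi_\xi$ replaced by $\Pi_\xi^*$. Since $\operatorname{vec}(U'^{\top})\otimes I_\nu$ is constant (it depends only on the unknown $\mu$, not on $t$), I would pull it out of the time-derivative and use the Kronecker identity above to rewrite $\Pi_\xi^*(t,\mu)\Lambda(t,t_0)=\Pi'_\xi(t)\left(I_{\nu l}\otimes\Lambda(t,t_0)\right)\left(\operatorname{vec}(U'^{\top})\otimes I_\nu\right)$. Differentiating and invoking the first hypothesis on $(\mathcal{G}_\xi^{\mathrm{im}},\Pi'_\xi)$ then yields $\frac{d}{dt}\bigl(\Pi_\xi^*(t,\mu)\Lambda(t,t_0)\bigr)=\mathcal{G}_\xi^{\mathrm{im}}(t)\,\Pi'_\xi(t)\left(I_{\nu l}\otimes\Lambda(t,t_0)\right)\left(\operatorname{vec}(U'^{\top})\otimes I_\nu\right)=\mathcal{G}_\xi^{\mathrm{im}}(t)\,\Pi_\xi^*(t,\mu)\Lambda(t,t_0)$, which is exactly the required equation.

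Next I would treat the algebraic equation, the second line of~(\ref{equ:IMPropertyRobust}), namely $\Delta^*(t,\mu)=K_\xi^{\mathrm{im}}(t)\Pi_\xi^*(t,\mu)$. Using the structural hypothesis~(\ref{equ:HybridUncertAssump}), $\Delta^*(t,\mu)=\Delta'(t)U'(\mu)$, so it suffices to show $\Delta'(t)U'(\mu)=K_\xi^{\mathrm{im}}(t)\Pi'_\xi(t)\left(\operatorname{vec}(U'^{\top})\otimes I_\nu\right)$. By the second hypothesis, $K_\xi^{\mathrm{im}}(t)\Pi'_\xi(t)=\Delta'(t)\left(I_l\otimes(\operatorname{vec}I_\nu)^{\top}\right)$, so the right-hand side becomes $\Delta'(t)\left(I_l\otimes(\operatorname{vec}I_\nu)^{\top}\right)\left(\operatorname{vec}(U'^{\top})\otimes I_\nu\right)$. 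The remaining task is a pure Kronecker-algebra identity: $\left(I_l\otimes(\operatorname{vec}I_\nu)^{\top}\right)\left(\operatorname{vec}(U'^{\top})\otimes I_\nu\right)=U'(\mu)$. This follows by the mixed-product rule together with the standard fact $(\operatorname{vec}I_\nu)^{\top}(u\otimes I_\nu)=u^{\top}$ applied blockwise across the $l$ columns of $U'^{\top}$, \textit{i.e.}\ the operation $(\operatorname{vec}I_\nu)^{\top}$ contracts each $\nu^2$-block of $\operatorname{vec}(U'^{\top})$ back into the corresponding row of $U'$.

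I expect the main obstacle to be getting the Kronecker bookkeeping exactly right, since the dimensions ($\nu l$, $\nu^2 l$, $\nu$) and the precise placements of $I_l$, $\operatorname{vec}I_\nu$, and $\operatorname{vec}(U'^{\top})$ must line up perfectly for the identity $\left(I_l\otimes(\operatorname{vec}I_\nu)^{\top}\right)\left(\operatorname{vec}(U'^{\top})\otimes I_\nu\right)=U'(\mu)$ to hold; a transpose misplaced or a $\operatorname{vec}$ taken of $U'$ rather than $U'^{\top}$ would break it. Everything else is a mechanical substitution once the two Kronecker identities are verified, so the proof is short but depends entirely on these index-level checks.
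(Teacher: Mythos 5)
Your proof is correct, and it is worth noting how it relates to the paper: the paper gives no in-text argument for this proposition at all, but simply defers to the proof of Proposition 6 in \cite{ref:marconi2013internal}, where the analogous statement is established for the hybrid internal model with \emph{periodic} jumps. Your direct substitution is the natural explicit-generator analogue of that argument and has the advantage of being self-contained, which is not a trivial gain here since translating the hybrid flow/jump formalism into this non-periodic explicit setting requires some care. Both Kronecker identities, which you rightly identify as the crux, hold with the stated placements. For the commutation identity, the mixed-product rule gives $(I_{\nu l}\otimes\Lambda(t,t_0))(\operatorname{vec}(U'^{\top})\otimes I_\nu)=\operatorname{vec}(U'^{\top})\otimes\Lambda(t,t_0)$, and writing $\Lambda=1\otimes\Lambda$ gives $(\operatorname{vec}(U'^{\top})\otimes I_\nu)\Lambda(t,t_0)=\operatorname{vec}(U'^{\top})\otimes\Lambda(t,t_0)$ as well, so indeed $\Pi^{*}_\xi(t,\mu)\Lambda(t,t_0)=\Pi'_\xi(t)(I_{\nu l}\otimes\Lambda(t,t_0))(\operatorname{vec}(U'^{\top})\otimes I_\nu)$; pulling the constant ($t$-independent, $\mu$-dependent) factor out of $\frac{d}{dt}$ is legitimate in the absolutely continuous (Carath\'eodory) sense in which the paper interprets these derivatives, $\Lambda$ being only piecewise continuous. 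For the contraction identity, with $u_i\in\mathbb{R}^{\nu}$ the $i$-th column of $U'^{\top}$ (so that $u_i^{\top}$ is the $i$-th row of $U'$), the matrix $I_l\otimes(\operatorname{vec}I_\nu)^{\top}$ is block-diagonal with $l$ copies of $(\operatorname{vec}I_\nu)^{\top}$, the matrix $\operatorname{vec}(U'^{\top})\otimes I_\nu$ stacks the blocks $u_i\otimes I_\nu$, and $(\operatorname{vec}I_\nu)^{\top}(u_i\otimes I_\nu)x=(\operatorname{vec}I_\nu)^{\top}\operatorname{vec}(xu_i^{\top})=\operatorname{tr}(xu_i^{\top})=u_i^{\top}x$, so the product is exactly $U'(\mu)$, with dimensions ($l\times l\nu^2$ times $\nu^2 l\times\nu$) consistent throughout. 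Your use of the structural hypothesis~(\ref{equ:HybridUncertAssump}) to replace $\Delta^{*}(t,\mu)$ by $\Delta'(t)U'(\mu)$ is precisely the standing assumption of this subsection, so nothing is missing: what the paper's citation buys is brevity and a pointer to the hybrid origin of the construction, while your route buys a complete verification within the paper's own framework.
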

\begin{proof}
    See the proof of~\cite[Proposition 6]{ref:marconi2013internal}.
\end{proof}

Therefore, if~(\ref{equ:HybridUncertAssump}) holds with $\Delta^{\prime}$ known, the design of a regulator~(\ref{equ:errorFdReguCanonical}) solving Problem~\ref{prob:ORRobust} follows the same way discussed in Section~\ref{sec:ErrorFdProb} with $\hat{\Phi}_{\mathrm{im}}(t) = I_{\nu l} \otimes \Lambda(t, t_0)$ and $\hat{\Xi}_{\mathrm{im}}(t) = \Delta^{\prime}(t)(I_{l} \otimes (\operatorname{vec}I_\nu)^{\top})$.

This augmentation-based internal model has two drawbacks: the first is that many uncertainties do not satisfy the structure in (\ref{equ:HybridUncertAssump}); the second is that the method produces controllers of high dimension, \textit{i.e.}, of
at least $\nu^2 l$. The hybrid study~\cite{ref:marconi2013internal} suggested that this dimension can be reduced depending on how the uncertainties $\mu$ affect the entries of $U^{\prime}(\mu)$. We adapt here that discussion as the resulting reduced controller will be implemented for comparison in the simulations in Section~\ref{sec:example}. Define a $\mu$-independent matrix $L \in \mathbb{R}^{\nu l \times l^{\prime}}$ with $l^\prime > 0$ such that
\begin{equation}\label{equ:AugMethodRedOrder}
   \bigcup_{\mu \in \mathcal{K}_\mu} \operatorname{span} \operatorname{vec}(U^{\prime}(\mu)) \subseteq \operatorname{Im} L. 
\end{equation}
Then for all $\mu \in \mathcal{K}_\mu$, there exist a vector $c_\mu(\mu) \in \mathbb{R}^{l^{\prime}}$ and a matrix $\bar{L} \in \mathbb{R}^{l \times \nu l^{\prime}}$ such that $\operatorname{vec}(U^{\prime}(\mu)) = Lc_\mu(\mu)$ and $U^{\prime}(\mu) = \bar{L}(c_\mu(\mu) \otimes I_{\nu})$. Then any output of system~(\ref{equ:IMsystemRobust}) can be generated by the augmented system of the form~(\ref{equ:IMsystemAugRobust}) with $I_{\nu l} \otimes \Lambda(t, t_0)$ and $I_{l} \otimes (\operatorname{vec}I_\nu)^{\top}$ replaced by $I_{l^{\prime}} \otimes \Lambda(t, t_0)$ and $\bar{L}$ when initialised by $\hat{\omega}_0 = (c_\mu(\mu) \otimes I_{\nu})\omega_0$. As such, if not all elements of $U^{\prime}(\mu)$ vary independently with $\mu$, then the controller dimension can be reduced to $\nu l^{\prime}$ with $l^{\prime} < \nu l$. 

However, depending on the structure of $U'(\mu)$, the regulator dimension established by the augmentation method can still be large in many cases. To eliminate this dimensional issue, in the next section we introduce another method of designing a robust internal model, which can have a smaller regulator dimension than the augmentation method.

\subsection{Immersion-Based Robust Internal Model}
\label{sec:ImmersionIMP}
In this section we develop an immersion-based robust internal model for the non-smooth case. Before doing so let us recall that
in the smooth linear case the expected internal model unit can naturally be immersed into an uncertainty-independent observable system by exploiting the Cayley-Hamilton theorem. Similarly, in the smooth nonlinear case the expected nonlinear internal model unit is assumed to admit an immersion into a linear observable system~\cite[Section 8.4]{ref:isidori1995nonlinear}. The concept of immersion was also extended to the linear time-varying case~\cite{ref:zhang2009adaptive}, which is classified into different types (strong, regular, weak) depending on the observability properties of the system obtained after immersion. Note that in both nonlinear and linear time-varying smooth cases, the immersion assumption is generally based on checking successive derivatives of the expected output of the internal model. This kind of differential-based hypothesis, however, is not pursuable in our case due to the non-smoothness of $\Delta^{*}$ and $\Lambda$ in the internal model~(\ref{equ:IMsystemRobust}). With this restriction, we first introduce the concept of immersion in the non-smooth case, and we then propose a novel \textit{integral-based} sufficient condition for such immersion to exist.

\begin{define}[Explicit Immersion]
The family of explicit-form autonomous systems~(\ref{equ:IMsystemRobust}) is said to be \textit{immersed} into an implicit-form system~(\ref{equ:IMsystemSmooth}) with piecewise continuous functions $\tilde{\Phi}_{\mathrm{im}}: \mathbb{R} \to \mathbb{R}^{s_1 \times s_1}$ and $\tilde{\Xi}_{\mathrm{im}}: \mathbb{R} \to \mathbb{R}^{1 \times s_1}$ if there exist a positive constant $\hat{t}$ and an absolutely continuous mapping $\Psi_\Upsilon: \mathbb{R} \times \mathbb{R}^{n_\mu} \to \mathbb{R}^{s_1 \times s_1}$ such that
\begin{equation}\label{equ:immersionMapping}
\begin{aligned}
    \dot{\Psi}_\Upsilon(t, \mu) &= \tilde{\Phi}_{\mathrm{im}}(t)\Psi_\Upsilon(t, \mu), \\
    \Delta^{*}(t, \mu)\Lambda(t, t_0) &= \tilde{\Xi}_{\mathrm{im}}(t)\Psi_\Upsilon(t, \mu),
\end{aligned}
\end{equation}
for all $t \geq \hat{t}$ and all $\mu \in \mathcal{K}_\mu$.
\end{define}

This explicit immersion suggests that the output response of an internal model of the form~(\ref{equ:IMsystemSmooth}) can produce any output response $y_r$ generated by system~(\ref{equ:IMsystemRobust}). In comparing (\ref{equ:immersionMapping}) with (\ref{equ:IMPropertyRobust}), we note that the term $\Psi_\Upsilon(t, \mu)$ plays the role of $\Pi^{*}_\xi(t, \mu)\Lambda(t, t_0)$, and consequently the term $\Upsilon(t, \mu) := \Psi_\Upsilon(t, \mu)\Lambda(t, t_0)^{-1}$ replaces $\Pi^{*}_\xi(t, \mu)$. Differently from (\ref{equ:IMPropertyRobust}), the expression (\ref{equ:immersionMapping}) does not require the boundedness of $\Upsilon(t, \mu)$.
We will show later that, by finding the canonical realization of the pair $(\tilde{\Phi}_{\mathrm{im}}(\cdot), \tilde{\Xi}_{\mathrm{im}}(\cdot))$, the unboundedness of $\Upsilon$ does not affect the existence of a bounded solution $\Pi^{*}_\xi$ to~(\ref{equ:IMPropertyRobust}), so long as $\tilde{\Phi}_{\mathrm{im}}$ and $\tilde{\Xi}_{\mathrm{im}}$ are bounded.

We recall, also, that in the time-varying smooth case, the pair $(\tilde{\Phi}_{\mathrm{im}}(\cdot), \tilde{\Xi}_{\mathrm{im}}(\cdot))$ can be uniformly observable (which corresponds to \textit{strong immersion} defined in that case). This uniform observability was important as it was shown to be equivalent to the existence of a transformation of the internal model~(\ref{equ:immersionMapping}) into an observability canonical form~\cite{ref:shieh1987transformations}, which, in that smooth context, was crucial for finding a valid canonical realization. However, the uniform observability is no longer satisfied by $\tilde{\Phi}_{\mathrm{im}}$ and $\tilde{\Xi}_{\mathrm{im}}$ in the non-smooth case, as $\tilde{\Phi}_{\mathrm{im}}$ and $\tilde{\Xi}_{\mathrm{im}}$ are not differentiable and do not admit an observability canonical form due to the piecewise continuity of $\Lambda$ and $\Delta^{*}$. As a consequence, we omit the discussion of the observability of the pair $(\tilde{\Phi}_{\mathrm{im}}(\cdot), \tilde{\Xi}_{\mathrm{im}}(\cdot))$ in the definition of immersion. 

Now we provide a sufficient condition for the existence of such explicit immersion. For brevity, denote $R^{*}(t, \mu) = \Delta^{*}(t,\mu)\Lambda(t, t_0)$.
\begin{lem}\label{lem:immersion}
Suppose Assumption~\ref{asmp:exoProp} holds. The non-smooth autonomous system~(\ref{equ:IMsystemRobust}) admits an explicit immersion if there exist a positive constant $\hat{t}$, a positive integer $d$, and bounded piecewise continuous functions $a_1(t), a_2(t), \ldots, a_{d}(t)$ such that\footnote{Recall that $\mathcal{I}^{[i]}_{t_0}$ indicates the iterated integral defined in the notation.}
\begin{equation}\label{equ:immersion}
    R^{*}(t, \mu) + \sum_{i=1}^{d} a_i(t) \mathcal{I}^{[i]}_{t_0}[R^{*}(t, \mu)] = \bm{0}_{1 \times \nu},
\end{equation}
for all $t \geq \hat{t}$ and all $\mu \in \mathcal{K}_\mu$.
\end{lem}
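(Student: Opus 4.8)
The plan is to prove this by direct construction, viewing the hypothesis~(\ref{equ:immersion}) as the integral analogue of a constant-coefficient scalar ODE and realising it through a companion-form linear time-varying system whose states are the iterated integrals of $R^{*}$. First I would introduce, for each $i = 1, \ldots, d$, the $1 \times \nu$ matrix-valued function $\psi_i(t, \mu) := \mathcal{I}^{[i]}_{t_0}[R^{*}(t, \mu)]$, and stack them as $\Psi_\Upsilon(t, \mu) := \operatorname{col}\left(\psi_1(t, \mu), \ldots, \psi_d(t, \mu)\right)$, so that the immersion dimension is $s_1 = d$. This choice of integral states (rather than derivative states) is exactly what makes the argument compatible with the non-smoothness of $\Delta^{*}$ and $\Lambda$.

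Next I would differentiate these states. By the definition of the iterated integral one has $\dot{\psi}_1(t, \mu) = R^{*}(t, \mu)$ and $\dot{\psi}_i(t, \mu) = \psi_{i-1}(t, \mu)$ for $i = 2, \ldots, d$, all holding almost everywhere in the Carath\'eodory sense since $R^{*}$ is piecewise continuous. The crucial step is to eliminate $R^{*}$ from the first of these equations using the hypothesis: for $t \geq \hat{t}$ we have $R^{*}(t, \mu) = -\sum_{i=1}^{d} a_i(t) \psi_i(t, \mu)$, so that $\dot{\psi}_1 = -\sum_{i=1}^{d} a_i(t)\psi_i$. Assembling all the equations gives $\dot{\Psi}_\Upsilon(t,\mu) = \tilde{\Phi}_{\mathrm{im}}(t)\Psi_\Upsilon(t,\mu)$, where $\tilde{\Phi}_{\mathrm{im}}(t) \in \mathbb{R}^{d \times d}$ is the companion matrix whose first row is $[-a_1(t), \ldots, -a_d(t)]$ and whose subdiagonal entries are all equal to one. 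Setting $\tilde{\Xi}_{\mathrm{im}}(t) := [-a_1(t), \ldots, -a_d(t)]$ then yields $\tilde{\Xi}_{\mathrm{im}}(t)\Psi_\Upsilon(t, \mu) = -\sum_{i=1}^{d} a_i(t)\psi_i = R^{*}(t, \mu) = \Delta^{*}(t, \mu)\Lambda(t, t_0)$ for $t \geq \hat{t}$, which is precisely the algebraic equation in the definition of explicit immersion.

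Finally I would check the regularity requirements. Since $R^{*}$ is finite-time bounded and piecewise continuous, hence locally integrable, each $\psi_i$ is absolutely continuous, so $\Psi_\Upsilon$ is absolutely continuous and the differential relation holds in the Carath\'eodory sense. Because the $a_i$ are bounded piecewise continuous, both $\tilde{\Phi}_{\mathrm{im}}$ and $\tilde{\Xi}_{\mathrm{im}}$ are bounded piecewise continuous as required; importantly, they are \emph{independent} of $\mu$, with the entire $\mu$-dependence confined to $\Psi_\Upsilon$, which is exactly what an immersion into an uncertainty-independent pair demands.

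I do not expect a genuine technical obstacle here: the content of the result is the \emph{conceptual} observation that one must integrate, not differentiate, to obtain the realisation. The only point requiring mild care is the time bookkeeping, since the states $\psi_i$ are integrated from $t_0$ whereas both the differential and algebraic relations are claimed only from $\hat{t}$ onward. I would handle this by evaluating $\Psi_\Upsilon$ at $\hat{t}$ and observing that, with this initial condition, $\Psi_\Upsilon$ solves the constructed ODE on $[\hat{t}, +\infty)$, so that no validity of~(\ref{equ:immersion}) before $\hat{t}$ is needed.
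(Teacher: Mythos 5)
Your proposal is correct and follows essentially the same argument as the paper: the paper's proof likewise exhibits the companion-form pair~(\ref{equ:immersionPair}) together with the stacked iterated integrals $\mathcal{I}^{[i]}_{t_0}[R^{*}]$ as the mapping $\Psi_\Upsilon$ in~(\ref{equ:ImmersionSolution}) solving~(\ref{equ:immersionMapping}) for $t \geq \hat{t}$. The only difference is cosmetic --- you order the states from $\mathcal{I}^{[1]}$ up while the paper orders them from $\mathcal{I}^{[d]}$ down, which merely permutes the companion structure --- and your explicit treatment of the Carath\'eodory regularity and of the $t \geq \hat{t}$ bookkeeping is a valid elaboration of details the paper leaves implicit.
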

\begin{proof}
The satisfaction of~(\ref{equ:immersion}) implies that if the pair $(\tilde{\Phi}_{\mathrm{im}}(\cdot),\; \tilde{\Xi}_{\mathrm{im}}(\cdot))$ are in the form
\begin{equation}\label{equ:immersionPair}
    \begin{aligned}
    \tilde{\Phi}_{\mathrm{im}}(t) & =\left[\begin{array}{cccc}
    0 & 1 & \cdots & 0 \\
    \vdots & \vdots & \ddots & \vdots \\
    0 & 0 & \cdots & 1 \\
    -a_d(t) & -a_{d-1}(t) & \cdots & -a_1(t)
    \end{array}\right], \\
    \tilde{\Xi}_{\mathrm{im}}(t) & =\left[\begin{array}{llll}
    -a_d(t) & -a_{d-1}(t) & \cdots & -a_1(t)
    \end{array}\right],
    \end{aligned}
\end{equation}
then there exists a mapping
\begin{equation}\label{equ:ImmersionSolution}
    \Psi_{\Upsilon}(t, \mu)\!=\!\!
    \left[\begin{array}{c}
    \mathcal{I}^{[d]}_{t_0}[R^{*}(t,\mu)] \\
    \mathcal{I}^{[d-1]}_{t_0}[R^{*}(t,\mu)] \\
    \vdots \\
    \mathcal{I}^{[1]}_{t_0}[R^{*}(t,\mu)]
    \end{array}\right],  
\end{equation}
that solves~(\ref{equ:immersionMapping}) for all $t \geq \hat{t}$ and all $\mu \in \mathcal{K}_\mu$.
\end{proof}

Lemma~\ref{lem:immersion} provides a way of finding the immersion based on integral relations when the internal model output dynamics $y_r$ in~(\ref{equ:IMsystemRobust}) is non-smooth. The analytical solution to (\ref{equ:immersion}), namely ($a_1, a_2, \ldots, a_{d}$), may be difficult (or impossible) to characterize, similarly to the strong immersion in the smooth case. However, it can be derived in some specific cases. For example, if~(\ref{equ:HybridUncertAssump}) is satisfied (\textit{i.e.}, the requirement of the augmentation-based internal model design), then 
an immersion-based internal model may be constructed. This is illustrated with an example. 
\begin{example}\label{eg:Immersion}
Consider a non-smooth autonomous system~(\ref{equ:IMsystemRobust}) with $t_0 = 0$, $\Lambda(t, 0) = \operatorname{diag}\left(\lambda_1(t, 0),\; \lambda_2(t, 0)\right)$ and $\Delta^{*}(t, \mu) = \left[\begin{array}{ll} \mu_1 \sin(t^{3/2}) - \mu_3 & \mu_2 \sqcap(t) \end{array}\right]$, where $\lambda_1(t, 0) = \sin(t^{3/2})+2$, $\lambda_2(t, 0) = \nabla(t) - 3$, 
$\sqcap(t) := \operatorname{sign}(\sin (t))$ is a periodic square wave, $\nabla(t) := \frac{4}{T}\int_0^t \sqcap(\tau)d\tau-1$ is a periodic triangular wave, and $(\mu_1, \mu_2) \in \mathbb{R} \times \mathbb{R}$. Note that $\Delta^{*}$ in this case satisfies~(\ref{equ:HybridUncertAssump}) with $\Delta^{\prime}(t) = [\sin(t^{3/2}),\; \sqcap(t),\; 1]$ and
$$
U^{\prime}(\mu) = \left[\begin{array}{cc}
    \mu_1 & 0 \\
    0 & \mu_2 \\
    -\mu_3 & 0
    \end{array}\right],
$$
which implies that the problem is not solvable by the non-robust error-feedback controller proposed in Section~\ref{sec:ErrorFdProb}, but can be solved by an augmentation-based internal model with a reduced dimension $\nu l^{\prime} = 6$ because $\nu = 2$ and $l^{\prime} = 3$. The problem can also be solved by the immersion-based method that, as we will see, will have a smaller internal model dimension. Let $F(t) = [\sin(t^{3/2})\lambda_1(t),\; \lambda_1(t),\; \sqcap(t)\lambda_2(t)]$ so that
\begin{equation*}
    R^{*}(t, \mu) = \Delta^{*}(t, \mu)\Lambda(t, 0) = F(t) \left[\begin{array}{rc}
    \mu_1 & 0 \\
    -\mu_3 & 0 \\
    0 & \mu_2
    \end{array}\right].
\end{equation*}
A valid explicit immersion can be established if~(\ref{equ:immersion}), with $R^{*}(t, \mu)$ replaced by $F(t)$, is satisfied by a group of bounded piecewise continuous functions $a_1, a_2, \ldots, a_{d}$. These functions, grouped as $v(t) = \operatorname{col}(a_{d}(t), a_{d-1}(t), \cdots, a_{1}(t))$, can be found if there exist a positive integer $d$ and a positive constant $\hat{t}$ such that for all $t \geq \hat{t}$,
\begin{equation}\label{equ:rankCompare}
    \operatorname{rank}\left(\hat{\Psi}_{\Upsilon}(t)^{\top}\right) = \operatorname{rank}\left(\left[\begin{array}{cc}
    \hat{\Psi}_{\Upsilon}(t)^{\top}  &  F(t)^{\top} 
    \end{array}\right]\right),
\end{equation}
where $$\hat{\Psi}_{\Upsilon}(t) \!=\! \left[\mathcal{I}^{[d]}_{0}[ F(t)]^{\top}\!,\;  \mathcal{I}^{[d-1]}_{0}[ F(t)]^{\top}\!,\; \cdots,\; \mathcal{I}^{[1]}_{0}[F(t)]^{\top}\right]^{\top}\!\!\!.$$ Then the functions $a_{i}$'s are the components of $v(t) = (\hat{\Psi}_{\Upsilon}(t)^\top)^{\dagger}F(t)^{\top}$. In this example, these functions exist when $d = 4$ and $\hat{t} = 5$. Their trajectories are depicted in Fig.~\ref{fig:Immersion_Example}. Since $d < \nu l^{\prime}$ in this case, the example shows that the immersion-based internal model can have a smaller dimension than the augmentation-based internal model. \ \hfill $\square$
\begin{figure}[t]
	\begin{centering}
		\includegraphics[width=\linewidth]{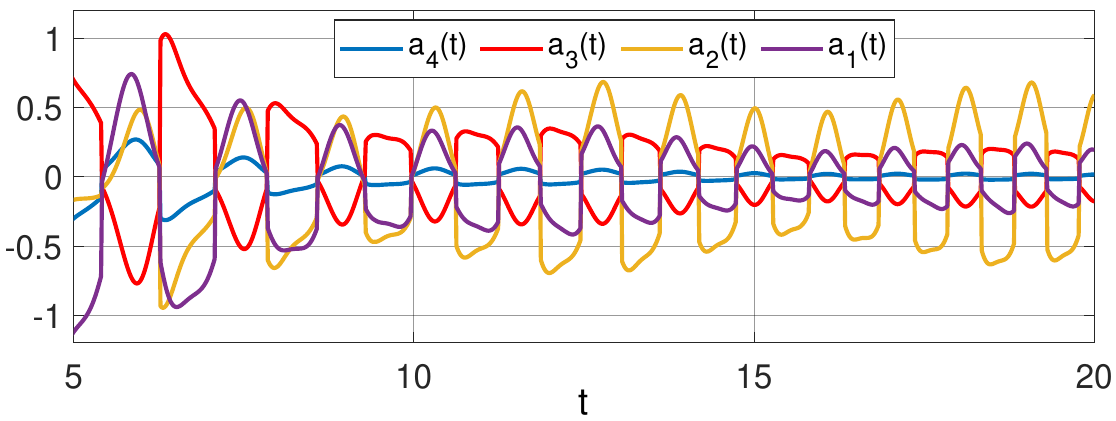}
		\caption{Time histories of the functions $a_1(t)$, $a_2(t)$, $a_3(t)$, and $a_4(t)$.}
		\label{fig:Immersion_Example}
	\end{centering}
\end{figure}
\end{example}


Example~\ref{eg:Immersion} shows that the immersion-based method can have a dimension lower than the argumentation-based regulator in some cases. However, this does not mean that the immersion-based method is always dimension-efficient, as the value of $d$ and the satisfaction of~(\ref{equ:rankCompare}) depend on the structures of both $\Delta^{*}$ and $\Lambda$.



Now we discuss how to find a canonical realisation of an explicit immersion established by the pair $(\tilde{\Phi}_{\mathrm{im}}(\cdot), \tilde{\Xi}_{\mathrm{im}}(\cdot))$ in~(\ref{equ:immersionPair}). As aforementioned, the pair $(\tilde{\Phi}_{\mathrm{im}}(\cdot), \tilde{\Xi}_{\mathrm{im}}(\cdot))$ does not admit an observability canonical form. Thus, the 
method to find a canonical realisation~(\ref{equ:canonicalRealSylv}) of an internal model of the form~(\ref{equ:IMsystemSmooth}) based on the observability canonical form in the smooth setting is no longer valid. However, inspired by Proposition~\ref{prop:pseudoInverse}, a realisation can be found by the following corollary.
\begin{corol}\label{corol:pseudoInverse}
Suppose Assumption~\ref{asmp:exoProp} holds. Let $m \geq s_1$ and select matrices $F_{\mathrm{im}}, G_{\mathrm{im}}$ such that $F_{\mathrm{im}}$ is Hurwitz and the pair $\left(F_{\mathrm{im}}, G_{\mathrm{im}}\right)$ is controllable. If there exist a positive constant $\hat{t}$ and a positive integer $p_1 \leq s_1$ such that the solution $M$ to
\begin{equation}\label{equ:canonicalRealSylvComb}
    \dot{M}(t)+M(t) \tilde{\Phi}_{\mathrm{im}}(t) =F_{\mathrm{im}}M(t) + G_{\mathrm{im}} \tilde{\Xi}_{\mathrm{im}}(t)
\end{equation}
satisfies $\operatorname{rank} (M(t)) = p_1$ for all $t \geq \hat{t}$, then there exists a bounded piecewise continuous matrix-valued function $H_{\mathrm{im}}(t) \in \mathbb{R}^{1 \times m}$ such that $\tilde{\Xi}_{\mathrm{im}}(t) = H_{\mathrm{im}}(t) M(t)$ for all $t \geq \hat{t}$. In fact, it is possible to take $H_{\mathrm{im}}(t) = \tilde{\Xi}_{\mathrm{im}}(t) M(t)^{\dagger}$. 
\end{corol}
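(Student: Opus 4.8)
The plan is to reduce the claim to a single consistency condition and then establish it by a controllability-versus-rank argument, exactly in the spirit of Proposition~\ref{prop:pseudoInverse}. Setting $H_{\mathrm{im}}(t) = \tilde{\Xi}_{\mathrm{im}}(t) M(t)^{\dagger}$, the identity $H_{\mathrm{im}}(t) M(t) = \tilde{\Xi}_{\mathrm{im}}(t)$ holds if and only if $\tilde{\Xi}_{\mathrm{im}}(t)$ lies in the row space of $M(t)$, i.e.
\begin{equation*}
\tilde{\Xi}_{\mathrm{im}}(t)\left(I_{s_1} - M(t)^{\dagger} M(t)\right) = \bm{0}_{1 \times s_1}, \qquad t \geq \hat{t}.
\end{equation*}
So the whole proof comes down to verifying this row-space inclusion; once it holds, $H_{\mathrm{im}} = \tilde{\Xi}_{\mathrm{im}} M^{\dagger}$ is the minimum-norm solution, and its piecewise continuity follows from that of $\tilde{\Xi}_{\mathrm{im}}$ and of $M^{\dagger}$.

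First I would fix any open subinterval on which the data $\tilde{\Phi}_{\mathrm{im}}$ and $\tilde{\Xi}_{\mathrm{im}}$ are continuous; there $M$ is $\mathcal{C}^1$ (it solves the ODE~(\ref{equ:canonicalRealSylvComb}) with continuous right-hand side) and, by the constant-rank hypothesis, $M^{\dagger}$ and hence the projector $W := I_{s_1} - M^{\dagger}M$ onto $\ker M$ are $\mathcal{C}^1$. Writing~(\ref{equ:canonicalRealSylvComb}) as $\dot{M} = F_{\mathrm{im}}M - M\tilde{\Phi}_{\mathrm{im}} + G_{\mathrm{im}}\tilde{\Xi}_{\mathrm{im}}$, I differentiate the identity $MW = 0$, substitute $\dot{M}$, and use $MW = 0$ to cancel the $F_{\mathrm{im}}MW$ term, obtaining
\begin{equation*}
G_{\mathrm{im}}\,\tilde{\Xi}_{\mathrm{im}}(t)W(t) = M(t)\left(\tilde{\Phi}_{\mathrm{im}}(t)W(t) - \dot{W}(t)\right).
\end{equation*}
The right-hand side has all its columns in $\operatorname{Im} M(t)$, while the left-hand side is the outer product $G_{\mathrm{im}}(\tilde{\Xi}_{\mathrm{im}}W)$. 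Hence, at any time where $\tilde{\Xi}_{\mathrm{im}}(t)W(t) \neq 0$, one must have $G_{\mathrm{im}} \in \operatorname{Im} M(t)$.

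The remaining step is a controllability contradiction. Supposing $G_{\mathrm{im}} \in \operatorname{Im} M(t)$ on a subinterval, set $g_k(t) := M(t)^{\dagger} F_{\mathrm{im}}^{k} G_{\mathrm{im}}$, so that $M(t) g_k(t) = F_{\mathrm{im}}^{k} G_{\mathrm{im}}$ whenever $F_{\mathrm{im}}^{k} G_{\mathrm{im}} \in \operatorname{Im} M(t)$. Differentiating this identity and substituting $\dot{M}$ as above yields $F_{\mathrm{im}}^{k+1}G_{\mathrm{im}} = M(\tilde{\Phi}_{\mathrm{im}} g_k - \dot{g}_k) - G_{\mathrm{im}}(\tilde{\Xi}_{\mathrm{im}} g_k)$, whose two terms both lie in $\operatorname{Im} M(t)$ (the second because $G_{\mathrm{im}} \in \operatorname{Im} M(t)$). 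By induction, $F_{\mathrm{im}}^{k}G_{\mathrm{im}} \in \operatorname{Im} M(t)$ for every $k \geq 0$, so controllability of $(F_{\mathrm{im}}, G_{\mathrm{im}})$ gives $\operatorname{Im} M(t) = \mathbb{R}^m$, i.e. $\operatorname{rank} M(t) = m$. Since $p_1 \leq s_1 \leq m$, this is compatible only with $p_1 = m$, which forces $s_1 = m$ and makes $M$ square and invertible (the trivial case where the inclusion is automatic). Therefore, whenever $p_1 < m$, the assumption $\tilde{\Xi}_{\mathrm{im}} W \neq 0$ is untenable, and the desired row-space inclusion holds on each subinterval of continuity, hence for all $t \geq \hat{t}$.

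I expect the main obstacle to be the regularity bookkeeping rather than the algebra: one must justify that $M^{\dagger}$ (and thus $W$ and $g_k$) is differentiable, which is precisely where the constant-rank hypothesis is essential, and one must patch the argument across the discontinuities of the piecewise-continuous data $\tilde{\Phi}_{\mathrm{im}}, \tilde{\Xi}_{\mathrm{im}}$. A secondary point is the claimed boundedness of $H_{\mathrm{im}} = \tilde{\Xi}_{\mathrm{im}} M^{\dagger}$: it follows from boundedness of $\tilde{\Xi}_{\mathrm{im}}$ together with that of $M^{\dagger}$, the latter requiring the nonzero singular values of $M(t)$ to remain bounded away from zero, which is the natural reading of the constant-rank condition in this time-varying, non-smooth setting. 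Note that Hurwitzness of $F_{\mathrm{im}}$ is not used in this recovery step—only controllability is—being instead reserved for the exponential-stability requirement~(i) of the canonical realization.
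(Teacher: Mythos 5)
Your proof is correct and follows essentially the same route as the paper: the paper's proof of Corollary~\ref{corol:pseudoInverse} is omitted as ``analogous to Proposition~\ref{prop:pseudoInverse}'', which itself reduces the claim to the row-space inclusion $\tilde{\Xi}_{\mathrm{im}}(t)\left(I_{s_1}-M(t)^{\dagger}M(t)\right)=\bm{0}_{1\times s_1}$ and defers to the controllability/constant-rank contradiction of \cite[Proposition 3]{ref:marconi2013internal} --- exactly the argument you reconstruct (differentiating $MW=0$ along~(\ref{equ:canonicalRealSylvComb}), deducing $G_{\mathrm{im}}\in\operatorname{Im}M(t)$, inducting to $F_{\mathrm{im}}^{k}G_{\mathrm{im}}\in\operatorname{Im}M(t)$, and contradicting $\operatorname{rank}M(t)=p_1$). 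Your regularity and boundedness caveats (differentiability of $M^{\dagger}$ under the constant-rank hypothesis, patching across discontinuities, and nonzero singular values of $M$ staying bounded away from zero for boundedness of $H_{\mathrm{im}}$) concern points the paper itself glosses over, so they do not constitute a gap relative to its proof.
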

\begin{proof}
    The proof is analogous to the proof of Proposition~\ref{prop:pseudoInverse} and is omitted.
\end{proof}


The last matter to clarify here is that, although the satisfaction of the explicit immersion~(\ref{equ:immersionMapping}) does not imply the existence of a bounded $\Pi^{*}_\xi$ solving~(\ref{equ:IMPropertyRobust}), the existence of a valid canonical realisation $\left(F_{\mathrm{im}}(\cdot), G_{\mathrm{im}}(\cdot), H_{\mathrm{im}}(\cdot)\right)$ of the pair $(\tilde{\Phi}_{\mathrm{im}}(\cdot), \tilde{\Xi}_{\mathrm{im}}(\cdot))$ in~(\ref{equ:immersionMapping}) does in fact imply the existence of a bounded $\Pi^{*}_\xi$ solving~(\ref{equ:IMPropertyRobust}). To show this, let $\mathcal{G}^{\mathrm{im}}_\xi(t) = F_\mathrm{im}(t) + G_\mathrm{im}(t)H_\mathrm{im}(t)$, $K^{\mathrm{im}}_\xi(t) = H_\mathrm{im}(t)$, and $\Psi^{*}_\xi(t, \mu) = M(t)\Psi_\Upsilon(t, \mu)$ with $M$ solving~(\ref{equ:canonicalRealSylvComb}) and $\Psi_\Upsilon(t, \mu)$ solving~(\ref{equ:immersionMapping}). Then, 
\begin{equation*}
    \begin{aligned}
        \dot{\Psi}^{*}_\xi(t, \mu) &= \dot{M}(t)\Psi_\Upsilon(t, \mu) + M(t)\dot{\Psi}_\Upsilon(t, \mu) \\
        &= (F_\mathrm{im}(t)M(t) + G_\mathrm{im}(t)\tilde{\Xi}_\mathrm{im}(t))\Psi_\Upsilon(t, \mu) \\
        &= F_\mathrm{im}(t)\Psi^{*}_\xi(t, \mu) + G_\mathrm{im}(t)\Delta^{*}(t, \mu)\Lambda(t, t_0)
    \end{aligned}
\end{equation*}
with $H_\mathrm{im}(t)\Psi^{*}_\xi(t, \mu) = \tilde{\Xi}_\mathrm{im}(t)\Psi_\Upsilon(t, \mu) = \Delta^{*}(t, \mu)\Lambda(t, t_0)$. Therefore, it is straightforward to see that $\Pi^{*}_\xi(t, \mu) = \Psi^{*}_\xi(t, \mu)\Lambda(t, t_0)^{-1} = M(t)\Psi_\Upsilon(t, \mu)\Lambda(t, t_0)^{-1}$ is the solution to~(\ref{equ:IMPropertyRobust}) and can be expressed by
\begin{equation}\label{equ:IMPropertyRobustSolution}
\begin{aligned}
    \Pi^{*}_\xi(t, \mu) &= \bigg(\Phi_F(t,\hat{t})\Pi^{*}_\xi(\hat{t}, \mu)\\
    &\!\!\!\!\!\!\!\!\!\!\!+\!\! 
    \int_{\hat{t}}^t \!\! \Phi_F(t,\tau)G_{\mathrm{im}}(\tau)\Delta(\tau, \mu) \Lambda(\tau, t_0) d \tau \!\bigg)\! \Lambda(t, t_0)^{-1}\!\!\!,
\end{aligned}
\end{equation}
for all $t \geq \hat{t}$, where $\Phi_{F}(t, t_0)$ is the transition matrix of system~(\ref{equ:FimSystem}). Since system~(\ref{equ:FimSystem}) is exponentially stable, Lemma~\ref{lem:moment} implies that under Assumption~\ref{asmp:exoProp}, $\Pi^{*}_\xi$ in~(\ref{equ:IMPropertyRobustSolution}) is bounded and piecewise continuous for all times.

\section{Illustrative Example}\label{sec:example}
In this section, we provide an example to illustrate the aforementioned theory. We consider a circuit regulation problem, 
and we first show that, even when the exogenous signal is diverging, the solutions to the regulator equations derived in Section~\ref{sec:FullInfoProb} are bounded. Then, we look at the problem with model parameter uncertainties, and we design two robust regulators using the augmentation-based method and the immersion-based method proposed in Section~\ref{sec:IMP}.

Consider the circuit depicted in Fig.~\ref{fig:Example_Circuit} consisting of an input voltage source $u(t)$ connected, via an input resistor $R_{in}$, to an RLC circuit composed of the series connection of an inductor $L_r$, a capacitor $C_L$, and a resistor $R_L$. The capacitor $C_L$ and resistor $R_L$ belong to an external load enclosed by a red dashed box in Fig~\ref{fig:Example_Circuit}. Meanwhile, the RLC circuit is also subject to the influence of an external circuit, which is equivalently represented by a voltage generator $d_{th}(t)$ in series with a resistor $R_{th}$ using Thévenin's theorem. The disturbance signal is assumed to be a time-varying rectangular wave $d_{th}(t) = (\sqcap(\frac{2 \pi}{3}t^{\frac{3}{2}}+\frac{\pi}{2}) + 2)\hat{d}_0$ where $\sqcap(t) := \operatorname{sign}(\sin (t))$. The aim of this example is to design the input voltage $u(t)$ to regulate the voltage of the load, $y(t)$, to track a given reference signal $r_{f}(t)$, which is assumed to be a diverging triangular wave $r_{f}(t) = (t+1)\nabla(\frac{2 \pi}{3} t +\frac{\pi}{2})\hat{r}_0$ where $\nabla(t) := \frac{4}{T}\int_0^t \sqcap(\tau)d\tau-1$. Note that the initial conditions $\hat{d}_0$ and $\hat{r}_0$ are unknown, while we pick $t_0 = 0$. 

\begin{figure}[t]
	\begin{centering}
		\includegraphics[width=0.8\linewidth]{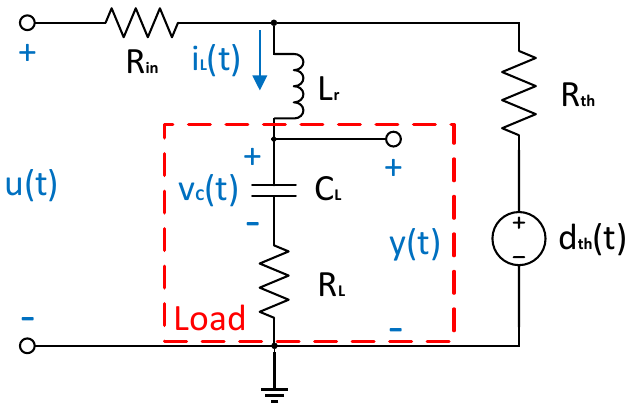}
		\caption{Schematics of the RLC circuit.}
		\label{fig:Example_Circuit}
	\end{centering}
\end{figure}
\begin{figure}[t]
	\begin{centering}
		\includegraphics[width=\linewidth]{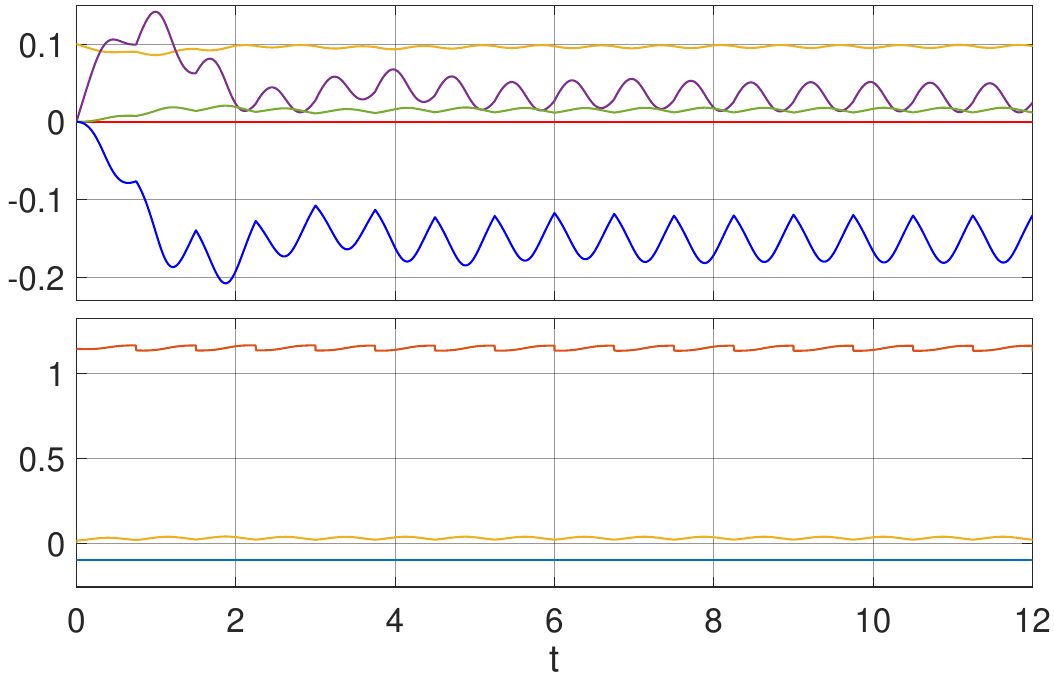}
		\caption{\textbf{Top:} time history of solution $\Pi_x(t)$ of the regulator equations. \textbf{Bottom:} time history of solution $\Delta(t)$ of the regulator equations.}
		\label{fig:Regulator_Equations}
	\end{centering}
\end{figure}

We first design the explicit-form exosystem that generates the exogenous signals $d_{th}$ and $r_{f}(t)$. Let $\Lambda_d(t, 0) = \sqcap(\frac{2 \pi}{3}t^{\frac{3}{2}}+\frac{\pi}{2}) + 2$ and $\Lambda_r(t, 0) = (t+1)\nabla(\frac{2 \pi}{3} t +\frac{\pi}{2})$. Following the proof of Proposition~\ref{prop:LambdaDesign}, we can define the exogenous signal $\omega(t) = [d_{th}(t),\; r_{f}(t),\;0]^{\top} = \Lambda(t, 0)\omega_0$ where $\Lambda = \text{blkdiag}(\Lambda_d,\; \hat{\Lambda}_r)$ with
\begin{equation}\label{equ:lambdaBlock}
        \hat{\Lambda}_r(t, 0) = \left[\begin{array}{rr}
        \Lambda_r(t, 0) & -\tilde{\Lambda}_r(t, 0) \\
        \tilde{\Lambda}_r(t, 0) & \Lambda_r(t, 0)
        \end{array}\right],
\end{equation}
where $\tilde{\Lambda}_r(t, 0) = (t+1)\nabla(\frac{2 \pi}{3} t)$ is selected to guarantee the invertibility of $\hat{\Lambda}_r$ (and therefore $\Lambda$). Then we are ready to derive the input-output dynamics of the circuit by a state-space model of the form~(\ref{equ:system}). By denoting $x_1(t)$ the current across the inductor $i_{L}(t)$, $x_2(t)$ the voltage across the capacitor $v_{C}(t)$, and $e(t) = y(t) - r_{f}(t)$ the regulation error, the circuit can be described by system~(\ref{equ:system}) with
\begin{equation*}
    \begin{aligned}
        A &=\left[\begin{array}{cc}
         \varepsilon_2   &   -\frac{1}{L_r} \\
         \frac{1}{C_L}   &  0 
        \end{array}\right]\!,  \quad
        B=\left[\begin{array}{c}
           R_{th}\varepsilon_1 \\
           0
        \end{array}\right]\!,\quad
        C = \left[\begin{array}{cc}
           R_{L} \\
           1
        \end{array}\right]^{\top}\!\!\!, \\
        D &= 0, \qquad
        P = \!\left[\begin{array}{cc}
             P_d & 0_{2 \times 2}\\
        \end{array}\right]\!, \qquad
        Q =\! \left[\begin{array}{ccc}
           0 & -1 & 0\\
        \end{array}\right]\!,
    \end{aligned}
\end{equation*}
where $\varepsilon_1 = ((R_{in}+ R_{th})L_r)^{-1}$, $\varepsilon_2 = -R_L/L_r - R_{in}R_{th}\varepsilon_1$, and $P_d = [\begin{array}{cc} R_{in}\varepsilon_1 & 0\end{array}]^{\top}$. In the following simulations, we set the nominal values of the electrical components as follows: $R_{in} = 0.5\;\Omega$, $R_{th} = 5\;\Omega$, $L_r = 0.1\;H$, $C_L = 0.3\;F$, and $R_L = 10\;\Omega$. When the real values of all electrical components are identical to their nominal values, $Q\Lambda$ satisfies Assumption~\ref{asmp:pwDiff}, and the system is minimum phase with a unitary relative degree. By Theorem~\ref{thm:solCondBoth}, the problem is solvable and the solutions to the regulator equations can be numerically computed by MATLAB Simulink. Fig.~\ref{fig:Regulator_Equations} depicts the time histories of solutions $\Pi_x$ (top graph) and $\Delta$ (bottom graph) of the regulator equations. The figure shows that the solutions to the regulator equations are bounded and piecewise continuous, even when $\Lambda$ is diverging.

\begin{figure}[t]
	\begin{centering}
		\includegraphics[width=\linewidth,height=12cm]{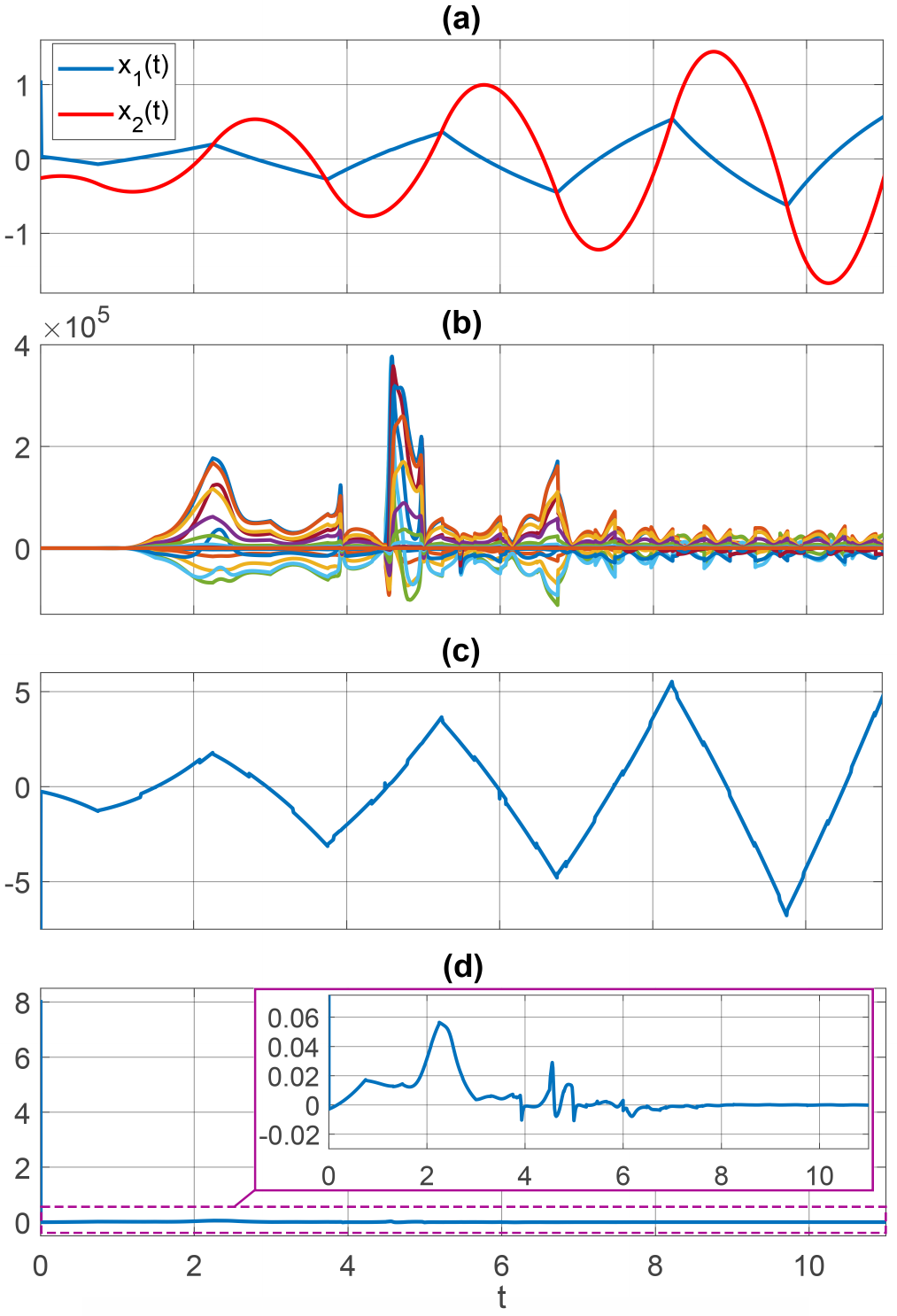}
		\caption{\textbf {(a)} Time history of system state $x(t)$. \textbf{(b)} Time history of $H_{\mathrm{im}}(t)$ for the canonical realization. \textbf{(c)} Time history of the input $u(t)$ generated by the robust augmentation-based regulator. \textbf{(d)} Time history of the regulation error $e(t)$.}
		\label{fig:Augmentation_Result}
	\end{centering}
\end{figure}

Now we consider the problem in which the true values of the load capacitor $C_{L}$ and load resistor $R_{L}$ are unknown. The only available information is that they vary around their nominal values: $C_L = \hat{C}_L \pm \mu_C$ and $R_L = \hat{R}_L \pm \mu_R$ with nominal values $\hat{C}_L = 0.3\; F$ and $\hat{R}_L = 10\;\Omega$ and the unknown deviations belonging to some known ranges, \textit{i.e.}, $\mu_C \in [-0.2, +0.2]$ and $\mu_R \in [-5, +5]$. With this at hand, the next step is to characterize how these unknown deviations affect the solution of the regulator equations $\Delta$. 
It is possible to show that the trajectories of $\Delta^{*}$ subject to any unknown derivations $\mu_C$ and $\mu_L$ within their range can be written as
\begin{equation}\label{equ:exampleDeltaMu}
\begin{aligned}
    \Delta^{*}(t, \mu) &= \left[\begin{array}{lll}
    \delta_1 & \mu_1\delta_2(t)+\mu_3 & \mu_2\delta_3(t)+\mu_4
    \end{array}\right], \\
    &= \Delta(t)\left[\begin{array}{ccc}
    1 & \mu_3/ \delta_1 & \mu_4/ \delta_1 \\
    0 & \mu_1 & 0 \\
    0 & 0 & \mu_2
    \end{array}\right] = \Delta(t)U^{\prime}(\mu).
\end{aligned}
\end{equation}
with $\Delta(t) = [\delta_1, \delta_2(t), \delta_3(t)]$ being the solution to the regulator equations with nominal values $C_L = 0.3\;F$ and $R_L = 10 \;\Omega$ (shown in Fig.~\ref{fig:Regulator_Equations}) and $\mu_1$, $\mu_2$, $\mu_3$, and $\mu_4$ unknown coefficients. Note that $\delta_1$ is a constant. By~(\ref{equ:exampleDeltaMu}), the problem cannot be solved by the normal error-feedback controller as $U^{\prime}(\mu)$ and $\Lambda$ are not commutative. However, this can be solved by either an augmentation-based robust regulator or an immersion-based robust regulator. Here, we implement both regulators for comparison. The real values of the load components are randomly selected as $C_L = 0.156\;F$ and $R_L = 7.891\;\Omega$. The initial values of $x$ and $\omega$ are randomly selected as $x_0 = [1.0562, \; -0.2586]^{\top}$ and $\omega_0 = [\hat{d}_0,\; \hat{r}_0,\; 0]^{\top}$, with $\hat{d}_0 = 0.8481$ and $\hat{r}_0 = -0.5202$.

\begin{figure}[t]
	\begin{centering}
		\includegraphics[width=\linewidth]{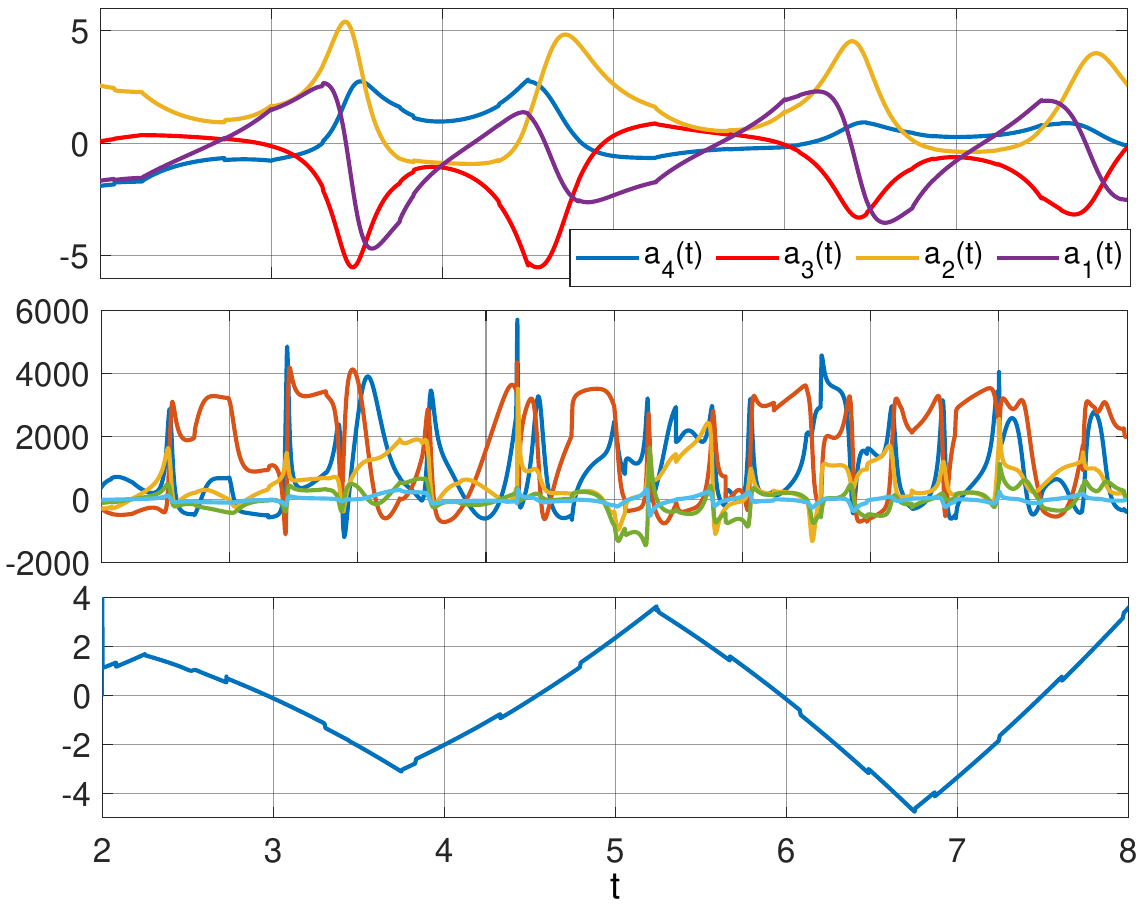}
		\caption{\textbf{Top:} Time histories of functions $a_1(t)$, $a_2(t)$, $a_3(t)$, $a_4(t)$. \textbf{Middle:} Time history of $H_{\mathrm{im}}(t)$ for the canonical realization. \textbf{Bottom:} Time history of the input $u(t)$ generated by the robust immersion-based regulator.}
	\label{fig:Immersion_Regulator}
	\end{centering}
\end{figure}
\begin{figure}[t]
	\begin{centering}
		\includegraphics[width=\linewidth]{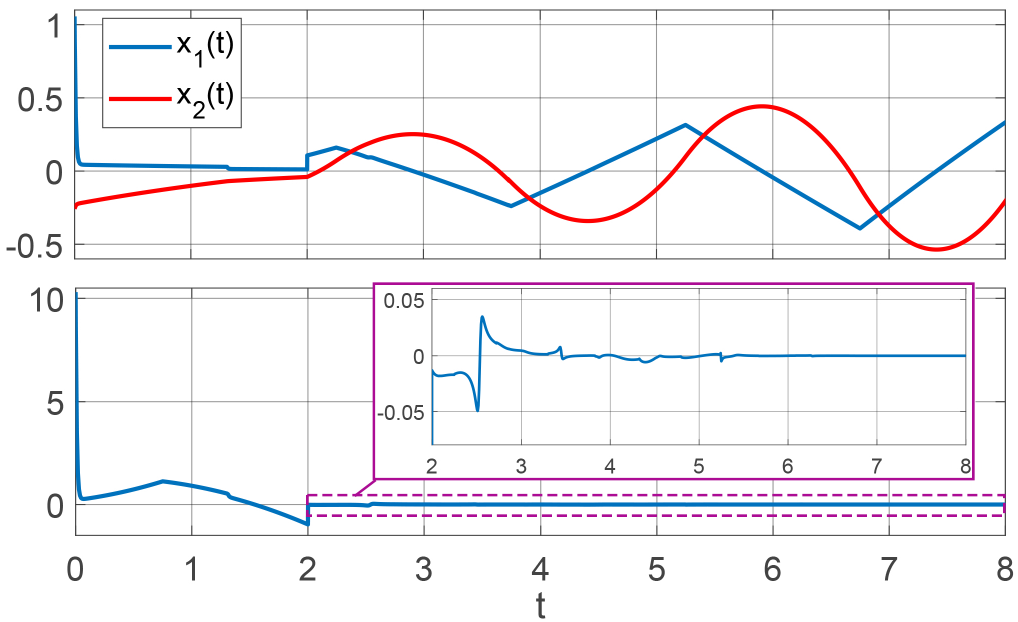}
		\caption{\textbf{Top:} Time history of system state $x(t)$. \textbf{Bottom:} Time history of the regulation error $e(t)$.}
	\label{fig:Immersion_Result}
	\end{centering}
\end{figure}

The design of an augmentation-based regulator follows the method of reducing the regulator dimension by finding the matrix $L$ satisfying~(\ref{equ:AugMethodRedOrder}). In this example, $l^{\prime} = 5$ so the dimension of the internal model is $\nu l^{\prime} = 15$. Here we set the dimension of the regulator (\ref{equ:errorFdReguCanonical}) as $q = 16$ (with $m=q$) as the first value above $\nu l^{\prime}$ for which the solution $\Pi_M$ in~(\ref{equ:IMCanoniParamMoment}) satisfies $\operatorname{rank}(\Pi_M) = s_2 = \nu l^{\prime}$. The pair $(F_{\mathrm{im}}, G_{\mathrm{im}})$ is constructed in the controllable canonical form with the eigenvalues of $F_{\mathrm{im}}$ randomly selected as $-0.5 \pm 0.5i$, $-1.2 \pm i$, $-1.5 \pm 1.2i$, $-2.0 \pm 1.8i$, $-2.5 \pm 2.0i$, $-0.3 \pm 1.5i$, $-0.8 \pm 0.4i$, and $-1.0 \pm 1.2i$. The high-gain value for stabilisation is selected as $k = 100$. By implementing the augmentation-based robust regulator, Fig.~\ref{fig:Augmentation_Result} (a) displays the time history of system state $x(t)$, while Fig.~\ref{fig:Augmentation_Result} (b) shows the time history of $H_{\mathrm{im}}(t)$ of the canonical realization given in Proposition~\ref{prop:pseudoInverse}.
Fig.~\ref{fig:Augmentation_Result} (c) and (d) display the time histories of the input signal $u(t)$ generated by the augmentation-based regulator and of the resulting regulation error $e(t)$, respectively. The figure implies that the augmentation-based regulator achieves the asymptotic convergence of the regulation error to zero despite the state trajectory diverging with time due to the divergence of the exogenous signal $r_f(t)$. The mapping $H_{\mathrm{im}}$ is shown to be bounded, while the input signal $u(t)$ is diverging and piecewise continuous.

We finally design a robust immersion-based regulator for solving the regulation problem with the same $C_L = 0.156\;F$ and $R_L = 7.891\;\Omega$. It is possible to show
that $R^{*}(t, \mu)$ satisfies $R^{*}(t, \mu) = \Delta(t)\Lambda(t, 0)U_R(\mu)$ with $U_R(\mu) = \operatorname{diag}(1, \mu_5, \mu_6)$ for some unknown constants $\mu_5$, $\mu_6$. This implies that an explicit immersion can be constructed if we can find bounded piecewise continuous functions $a_1(t), a_2(t), \ldots, a_{d}(t)$ solving~(\ref{equ:immersion}) with $R^{*}(t, \mu)$ replaced by $\Delta(t)\Lambda(t, 0)$. These functions are determined by following the method discussed in Example~\ref{eg:Immersion} with $d = 4$ and $\hat{t} = 2$. Then we set the regulator dimension $q = 5$. The eigenvalues of $F_{\mathrm{im}}$ are arbitrarily selected as $-5 \pm 4i$, $-3 \pm 2i$, and $-6$ with the pair $(F_{\mathrm{im}}, G_{\mathrm{im}})$ in controllable canonical form. The high-gain value for stabilisation is still $k = 100$. Since $\hat{t} = 2$, we set the input $u(t) = 0$ when $t < 2$. Fig.~\ref{fig:Immersion_Regulator} displays the time histories of functions $a_1(t)$, $a_2(t)$, $a_3(t)$, $a_4(t)$ that ensure the explicit immersion by~(\ref{equ:immersion}) (top graph), the time history of $H_{\mathrm{im}}(t)$ for the canonical realisation given in Corollary~\ref{corol:pseudoInverse} (middle graph), and the time history of the resulting input signal $u(t)$ (bottom graph). Fig.~\ref{fig:Immersion_Result} displays the time histories of the system state $x(t)$ (top graph) and the regulation error $e(t)$ (bottom graph). We can see from both figures that both the functions $a_i$, with $i = 1, 2, 3, 4$, and the mapping $H_{\mathrm{im}}$ are bounded. Meanwhile, an error-zeroing piecewise continuous input signal, similar to the input generated by the augmentation-based internal model, is successfully produced by this immersion-based internal model for $t > 2$. As a consequence, the regulation error asymptotically converges to zero when $t > 2$.

\section{Conclusion}\label{sec:concl}
In this article, we solved the output regulation problem for linear systems with non-smooth (possibly non-periodic) exogenous signals. In particular, we first solved the full-information problem by designing a state-feedback regulator based on new regulator equations. We then discussed the solvability of the regulator equations, showing that the solvability is related to the relative degree of the linear system, and proposing a new non-resonance condition. Based on the full-information solutions, we studied the error-feedback problem, where we designed an error-feedback regulator by finding the canonical realisation of the internal model, and we stabilised the closed-loop system under the minimum-phase assumption. After that, we focused on the internal model principle, and we proposed two methods of finding a robust internal model for the problem with model parameter uncertainties. We finally provided a circuit regulation example to illustrate the implementation of robust regulators when the problem involves parameter uncertainties. 

\section*{References}
\bibliographystyle{IEEEtran}
\bibliography{IEEEabrv,mybib}

\appendix

\subsection{Proof of Lemma~\ref{lem:moment}}\label{equ:lemMomentProof}

\begin{proof}
We first show that the family of matrix-valued functions $\Pi_{g}$ in~(\ref{equ:PIintegral}) parameterized in $\Pi_{g}(t_0) \in \mathbb{R}^{g \times \nu}$ is bounded and piecewise continuous. Under Assumption~\ref{asmp:exoProp}, Since $\Lambda$ is finite-time bounded and piecewise continuous while $B_g$ is bounded piecewise continuous, $\Pi_{g}(t)$ exists and is unique for any fixed $\Pi_{g}(t_0)$ and for any finite time~\cite[Theorem 3.1]{ref:khalil2002nonlinear}. Then to show the boundedness of $\Pi_{g}$, we need to prove that $\lim_{t \rightarrow+\infty} \Pi_{g}(t)$ is bounded. To this end, define $W_{\Lambda}(t, t_0) = \lim_{t_f \rightarrow+\infty} \Lambda(t, t_0)\Lambda(t_f, t_0)^{-1}$. Since $\Lambda$ is finite-time bounded and $\Lambda^{-1}$ is bounded, $W_{\Lambda}$ is finite-time bounded and $\lim_{t \rightarrow+\infty} W_{\Lambda}(t, t_0) = I_{\nu}$, implying that $W_{\Lambda}$ is bounded for all times. Meanwhile, as system~(\ref{equ:expStableSystem}) with $\omega \equiv 0$ is exponentially stable and $B_g$ is bounded for all times, there exist positive constants $\alpha$, $\beta$, $h_1$, $h_2$ such that $\|\Phi_{g}(t, \tau)\| \leq \alpha e^{-\beta(t - \tau)}$, $\| B_g(\tau) \| \leq h_1$, $\| W_{\Lambda}(\tau, t_0) \| \leq h_2$ for all $t \geq \tau \geq t_0$. Then with $\Lambda^{-1}$ bounded for all times, we have
\begin{equation}
\label{eq-limPI_g}
\begin{aligned}
    \lim_{t \rightarrow+\infty} \!\!\|\Pi_{g}(t)\| \!&= \!\!\!\lim_{t \rightarrow +\infty}\! \left\| \int_{t_0}^t \!\Phi_{g}(t, \tau) B_g(\tau) \Lambda(\tau, t_0) d \tau \Lambda(t, t_0)^{-1} \!\right\| \\
   &\leq \lim_{t \rightarrow +\infty} \int_{t_0}^{t} \alpha e^{-\beta(t-\tau)} h_1 h_2 d \tau \\
    &= \lim_{t \rightarrow +\infty} \left.\frac{\alpha h_1 h_2}{\beta}e^{\beta(\tau - t)} \right|_{\tau = t_0}^{t} = \frac{\alpha h_1 h_2}{\beta},
\end{aligned}
\end{equation}
yielding that $\Pi_{g}$ is bounded for all times. Meanwhile, note that because of the piecewise continuity of $\Lambda^{-1}$, $\Pi_{g}$ is also piecewise continuous. 

Consider now the unique solution of the closed-loop system~(\ref{equ:PsiSystem}) as
\begin{equation*}
x_{g}(t)=\Phi_{g}(t, t_0) x_{g}(t_0)+\!\!\int_{t_0}^t \Phi_{g}(t, \tau) B_g(\tau) \Lambda(\tau, t_0) \omega(t_0) d \tau.
\end{equation*}
Then for any pair of $\left((x_{g}({t_0}), \Pi_{g}(t_0)\omega(t_0)\right)$, tedious but straightforward calculations shows that $x_{g}(t) - \Pi_{g}(t) \omega(t) = \Phi_{g}(t, t_0)\left(x_{g}(t_0) - \Pi_{g}(t_0)\omega(t_0)\right)$.
As $\|\Phi_{g}(t, t_0)\| \leq \alpha e^{-\beta(t - t_0)}$ for all $t \geq t_0$ with some positive constants $\alpha$ and $\beta$, $\lim_{t \rightarrow+\infty}x_{g}(t)-\Pi_{g}(t) \omega(t)=\bm{0}_{g\times 1}$. Moreover, let $\Psi_{g}(t) = \Pi_{g}(t)\Lambda(t, t_0)$. By~(\ref{equ:PIintegral}), we have 
\begin{equation}\label{equ:Psi_g}
\Psi_{g}(t) \!=\! \Phi_{g}(t, t_0) \Psi_{g}(t_0) +\!\int_{t_0}^t \Phi_{g}(t, \tau) B_g(\tau) \Lambda(\tau, t_0) d \tau,
\end{equation}
which is absolutely continuous. Then since $A_g$ is bounded piecewise continuous and therefore Lebesgue integrable, Carathéodory's theorem implies that $\Psi_{g}$ in~(\ref{equ:Psi_g}) is the unique solution to the differential equation~(\ref{equ:PsiSystem}) with initial condition $\Psi_{g}(t_0) = \Pi_{g}(t_0)\Lambda(t_0, t_0)$ and satisfies $\lim_{t \rightarrow+\infty}x_{g}(t)-\Psi_{g}(t) \omega_0 = \lim_{t \rightarrow+\infty}x_{g}(t)-\Pi_{g}(t) \omega(t) = \bm{0}_{g\times 1}$.
\end{proof}

\subsection{Proof of Theorem~\ref{thm:zeroError}}
\label{AppB}

\begin{proof}
Under Assumptions~\ref{asmp:exoProp} and~\ref{asmp:systemProp}, if the input $u$ that ensures the satisfaction of \textbf{($\mathbf{S_F}$)} and \textbf{($\mathbf{R_F}$)} is finite-time bounded and piecewise continuous, then Lemma~\ref{lem:moment} and Theorem~\ref{thm:solutionWithD} implies that there exist a positive constant $\hat{t}$ and an initial condition $x(\hat{t})$ such that the finite-time bounded trajectory $x(t)$ results in $e(\hat{t}) = 0$ for all $t \geq \hat{t}$. Meanwhile, since $x$ and $u$ are finite-time bounded, $\dot{x}$ is finite-time bounded and $x$ is locally Lipschitz continuous~\cite[Chapter 3]{ref:khalil2002nonlinear}. This means that for any time constants $t_1$ and $t_2$ greater than $\hat{t}$, there exists a Lipschitz constant $L_{x}$ such that $\|x(t_1)-x(t_2)\| \leq L_{x}\|t_1-t_2\|$. Then the satisfaction of \textbf{($\mathbf{R_F}$)} yields that $e(t_1) = Cx(t_1) + Q\omega(t_1) = 0$ and $e(t_2) = Cx(t_2) + Q\omega(t_2) = 0$ for all $t \geq \hat{t}$. Subtracting these two equations, we get $C(x(t_1) - x(t_2)) + Q(\omega(t_1) - \omega(t_2)) = 0$, which yields
$\|Q(\omega(t_1)-\omega(t_2)\| = \|C(x(t_1) - x(t_2))\| \leq \|C\|\|x(t_1) - x(t_2)\| \leq \|C\|L_{x}\|t_1-t_2\|$. Therefore, $Q\omega(\cdot)$ is locally Lipschitz continuous for all $t \geq \hat{t}$.
\end{proof}

\ifthenelse{\boolean{TAC}}{}{

\subsection{Proof of Theorem~\ref{thm:solCondBoth}}\label{sec:Appsolvability}

In this subsection, for brevity, we set $t_0 = 0$ without loss of generality and let $\Lambda(t) = \Lambda(t, 0)$ with a slight abuse of notation. For the time being, suppose Assumption~\ref{asmp:pwDiff} holds. With a given initial time $\hat{t}$, define a group of matrix-valued functions $V_{j}(t) \in \mathbb{R}^{1 \times \nu}$ as
\begin{equation}\label{equ:functionsVj}
    V_{j}(t) := \sum_{i=1}^{j} \mathcal{I}^{[i]}_{\hat{t}} \left[CA^{i-1}P\Lambda(t)\right] + Q\Lambda(t)
\end{equation}
for $j = 1, 2, \cdots, n$, with $V_0(t) = Q\Lambda(t)$. Then define a sequence of positive integers $S_c = \{c_0, c_1, \cdots, c_n\}$ with each $c_{j}$ denoting the degree of smoothness of the corresponding matrix-valued function $V_{j}$, \textit{i.e.}, $V_{j} \in \mathcal{C}^{c_j}$. 
The idea behind the definition of the functions $V_{j}$ can be explained as follows: we cannot use differentiation as a tool to prove our claims because of the non-smoothness of $\Lambda$. Thus, we define auxiliary functions $V_{j}$ as ``integrals" that have sufficient smoothness to be differentiated as needed. As it will be clear later, the derivatives of the functions $V_{j}$ will play a role in the proof of the solvability condition.

\begin{lem}\label{lem:VjProperty}
    Consider the group of functions~(\ref{equ:functionsVj}). Let $j^{*} = c_n$. Suppose $j^{*} < n$. The following properties hold:
    \begin{itemize}
    \item[(i)] For all $j \in \{0, 1, \cdots, j^{*}\}$, $c_j \geq j$.
    \item[(ii)] For all $j \in \{j^{*}+1 \cdots, n\}$, $c_{j} = j^{*}$. 
\end{itemize}
\end{lem}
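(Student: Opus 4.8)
The plan is to eliminate the repeated integrals in favour of a single convolution, after which the smoothness of each $V_j$ can be read off from a \emph{single}, $j$-independent family of ``defect'' functions. First I would apply the Cauchy formula for iterated integration, $\mathcal{I}^{[i]}_{\hat t}[H](t)=\frac{1}{(i-1)!}\int_{\hat t}^{t}(t-\sigma)^{i-1}H(\sigma)\,d\sigma$, to rewrite $V_j(t)=Q\Lambda(t)+C x_j(t)$ with $x_j(t):=\int_{\hat t}^{t} E_j(t-\sigma)\,P\Lambda(\sigma)\,d\sigma$ and $E_j(s):=\sum_{i=1}^{j}A^{i-1}s^{i-1}/(i-1)!$ the truncated matrix exponential. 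Differentiating $x_j$ and using $E_j(0)=I$, $\tfrac{d}{ds}E_j=A E_{j-1}$ yields the recursion $\dot x_j = P\Lambda + A x_{j-1}$ with $x_0\equiv 0$; since $P\Lambda$ is piecewise continuous, each $x_j$ ($j\ge 1$) is continuous.

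Next I would prove by induction on $k$ the derivative identity
\[
V_j^{(k)} = D_k + CA^{k}x_{j-k},\qquad 0\le k\le j,
\]
where $D_0:=Q\Lambda$ and $D_{k+1}:=\dot D_k + CA^{k}P\Lambda$. The base case is the rewriting above, and the inductive step uses $\dot x_{j-k}=P\Lambda+Ax_{j-k-1}$. The crucial structural feature is that $D_k$ \emph{does not depend on $j$}: it isolates the possibly-non-smooth part of the $k$-th derivative, whereas $CA^{k}x_{j-k}$ is always continuous and vanishes when $j=k$ (because $x_0\equiv 0$).

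From this identity the smoothness of $V_j$ is controlled entirely by the continuity of the $D_k$. I would let $d^{\ast}$ be the largest integer for which $D_0,\dots,D_{d^{\ast}}$ are all continuous, and then show that, for $j\ge k$, one has $V_j\in\mathcal{C}^{k}$ if and only if $D_0,\dots,D_k$ are continuous: the forward direction is immediate since $V_j^{(k)}-CA^{k}x_{j-k}=D_k$ would then be continuous, while the converse follows because a continuous function whose derivative exists off finitely many points and extends continuously across them is differentiable there (a mean-value-theorem argument absorbing the jumps of $\Lambda$ and the kinks of $Q\Lambda$). Consequently $c_j\ge j$ whenever $j\le d^{\ast}$ — apply the identity up to $k=j$, where $V_j^{(j)}=D_j$ and $D_0,\dots,D_j$ are continuous — and $c_j=d^{\ast}$ whenever $j\ge d^{\ast}+1$, since then $V_j^{(d^{\ast}+1)}=D_{d^{\ast}+1}+CA^{d^{\ast}+1}x_{j-d^{\ast}-1}$ is discontinuous (or fails to exist).

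Finally I would identify $d^{\ast}=j^{\ast}$. Since $j^{\ast}=c_n<n$, if $n\le d^{\ast}$ held then the case $j\le d^{\ast}$ would give $c_n\ge n$, a contradiction; hence $n\ge d^{\ast}+1$, so $c_n=d^{\ast}$ and thus $j^{\ast}=d^{\ast}$. Statement~(i) is then exactly the bound $c_j\ge j$ for $j\le j^{\ast}$, and statement~(ii) is $c_j=d^{\ast}=j^{\ast}$ for $j^{\ast}+1\le j\le n$. I expect the main obstacle to be the piecewise-analytic justification of the equivalence ``$V_j\in\mathcal{C}^{k}$ iff $D_0,\dots,D_k$ are continuous'': the individual constituents of $D_k$ (derivatives of $Q\Lambda$ and of the $CA^{m}P\Lambda$) need not exist pointwise under Assumption~\ref{asmp:pwDiff}, so the argument must be carried out on the surviving combination $D_k$, exploiting that the continuous term $CA^{k}x_{j-k}$ supplies precisely the regularity that the truncation would otherwise destroy.
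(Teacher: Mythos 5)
Your argument is sound but takes a genuinely different route from the paper. The paper's proof is purely structural and far shorter: it writes $V_n(t)=\sum_{i=j^{*}+2}^{n}\mathcal{I}^{[i]}_{\hat t}\left[CA^{i-1}P\Lambda(t)\right]+V_{j^{*}+1}(t)$, notes that the tail is at least $(j^{*}+1)$-times differentiable, deduces $c_{j^{*}+1}=j^{*}=c_n$, propagates this to all $j>j^{*}+1$ by the same decomposition, and gets (i) by contradiction (if $c_j<j\leq j^{*}$, the decomposition $V_{j^{*}+1}=\sum_{i=j+1}^{j^{*}+1}\mathcal{I}^{[i]}_{\hat t}[\cdot]+V_j$ would force $c_{j^{*}+1}<j^{*}$). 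Your convolution machinery — Cauchy's formula, the truncated exponential $E_j$, the recursion $\dot x_j=P\Lambda+Ax_{j-1}$, and the $j$-independent defects $D_k$ — rests on the same kernel fact (your identity $CA^{k}x_{j-k}=\frac{d^{k}}{dt^{k}}(V_j-V_k)$ is exactly the paper's observation that the tails are $\mathcal{C}^{k}$), but it buys more: explicit derivative formulas and the characterization of $j^{*}$ as the largest $k$ with $D_0,\dots,D_k$ continuous, which dovetails with the later use of $V_{j^{*}}^{(j^{*})}$ in Proposition~\ref{prop:RDCondition}, at the cost of length and the analytic care you flag. One step to shore up in a full write-up: the recursion $D_{k+1}=\dot D_k+CA^{k}P\Lambda$ presupposes pointwise differentiability of $D_k$, which can fail at the jumps of $P\Lambda$ (already $x_1=\mathcal{I}^{[1]}_{\hat t}[P\Lambda]$, hence every $x_j$ with $j\geq 1$, is non-differentiable there, so your inductive step holds only off that set); the clean fix, consistent with your closing paragraph, is to \emph{define} $D_k:=V_k^{(k)}$, derive $V_j^{(k)}=D_k+CA^{k}x_{j-k}$ directly from the iterated-integral structure, and invoke your mean-value continuation lemma at the finitely many bad points per compact interval to establish existence of each derivative inside the induction rather than before it.
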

\begin{proof}
    We first prove (ii). Considering $j^{*} = c_n < n$, the definition in~(\ref{equ:functionsVj}) implies that $V_{n}(t) := \sum_{i=j^{*}+2}^{n} \mathcal{I}^{[i]}_{\hat{t}} \left[CA^{i-1}P\Lambda(t)\right] + V_{j^{*}+1}(t).$ Since the term $\sum_{i=j^{*}+2}^{n} \mathcal{I}^{[i]}_{\hat{t}}\left[CA^{i-1}P\Lambda(t)\right]$ is at least ($j^{*}+1$)-times differentiable, the degree of smoothness $c_n = j^{*}$ of function $V_n$ can only be induced by the term $V_{j^{*}+1}$. In other words, $V_n \in  \mathcal{C}^{c_n}$ with $c_n = j^{*} < n$ only if $c_{j^{*}+1} = j^{*} = c_n$, where $c_{j^{*}+1}$ is the degree of smoothness of $V_{j^{*}+1}$. Moreover, as $V_{j}(t):= \sum_{i=j^{*}+2}^{j} \mathcal{I}^{[i]}_{\hat{t}} \left[CA^{i-1}P\Lambda(t)\right] + V_{j^{*}+1}(t)$ for all $j^{*}+1 < j \leq n$ with the term $\sum_{i=j^{*}+2}^{j} \mathcal{I}^{[i]}_{\hat{t}}\left[CA^{i-1}P\Lambda(t)\right]$ at least ($j^{*}+1$)-times differentiable, the fact $c_{j^{*}+1} = j^{*}$ also yields that $c_{j} = c_{j^{*}+1} = j^{*}$ for all $j^{*}+1 < j \leq n$, \textit{i.e.}, the argument (ii) holds. The argument (i) can be proved in a similar manner. For any $j \in \{0, 1, \cdots, j^{*}\}$, we have $V_{j^{*}+1}(t) := \sum_{i=j+1}^{j^{*}+1} \mathcal{I}^{[i]}_{\hat{t}} \left[CA^{i-1}P\Lambda(t)\right] + V_{j}(t).$ Assume, by contradiction, that $V_{j} \in C^{c_j}$ with $c_j < j \leq j^{*}$. Since the term $\sum_{i=j+1}^{j^{*}+1} \mathcal{I}^{[i]}_{\hat{t}} \left[CA^{i-1}P\Lambda(t)\right]$ is at least $j$-times differentiable, $V_{j^{*}+1} \in C^{c_{j^{*}+1}}$ with $c_{j^{*}+1} = c_j < j^{*}$, which contradicts the equality $c_{j^{*}+1} = j^{*}$ proved before. Therefore, $c_j \geq j$ for any $j \in \{0, 1, \cdots, j^{*}\}$.
\end{proof}

Lemma~\ref{lem:VjProperty}, in short, reflects two facts: i) if the degree of smoothness, $c_n$, of the term $V_n$ satisfies $c_n < n$, this degree of smoothness coincides with the degree of smoothness of the ($c_n$+1)-th term in the sequence of functions $V_j$'s; ii) all functions $V_j$ with $j \leq c_n$ are at least $j$-times differentiable. These properties are instrumental in investigating the role of the relative degree of system~(\ref{equ:system}) in the solvability of the regulator equations~(\ref{equ:reguEquation}). With this consideration in mind, the following proposition establishes a necessary condition for the solvability of the regulator equations~(\ref{equ:reguEquation}) when $D = 0$.

\begin{prop}\label{prop:RDCondition}
Consider Problem~\ref{prob:ORFullInfo} driven by the control law~(\ref{equ:controlLaw}). Suppose $D = 0$ and Assumptions~\ref{asmp:exoProp},~\ref{asmp:systemProp}, and~\ref{asmp:pwDiff} hold. There exist bounded piecewise continuous $\Pi_x$ and $\Delta$ solving~(\ref{equ:reguEquation}) only if system (\ref{equ:system}) has a relative degree $r$ satisfying $r \leq j^{*} + 1$, where $j^{*}=c_n$.
\end{prop}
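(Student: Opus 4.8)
The plan is to pass to the differential--algebraic reformulation of the regulator equations and then, through repeated integration rather than differentiation (which is unavailable because of the non-smoothness of $\Lambda$), to extract an explicit smoothness certificate for the auxiliary functions $V_j$ in~(\ref{equ:functionsVj}). First I would invoke Corollary~\ref{corol:equivDAE} to replace the existence of bounded piecewise continuous $\Pi_x,\Delta$ solving~(\ref{equ:reguEquation}) by the existence of $\Psi_x=\Pi_x\Lambda$ and $\Delta$ solving the DAE~(\ref{equ:solutionDAE}) with $D=0$ on $[\hat{t},\infty)$. This furnishes two facts: the algebraic identity $C\Psi_x(t)=-Q\Lambda(t)=-V_0(t)$ for all $t\ge\hat{t}$, and the differential identity $\dot{\Psi}_x=A\Psi_x+(B\Delta+P)\Lambda$ with $\Psi_x$ absolutely continuous, hence continuous.

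Next I would introduce $g_k(t):=CA^k\Psi_x(t)$, so that $g_0=-V_0$. Differentiating along the dynamics gives $\dot{g}_k=CA^{k+1}\Psi_x+CA^kB\,\Delta\Lambda+CA^kP\Lambda$, and the relative-degree condition $CA^kB=0$ for $k\le r-2$ removes the middle term. Integrating from $\hat{t}$ and then applying the iterated integral $\mathcal{I}^{[k]}_{\hat{t}}$ produces a recursion linking $\mathcal{I}^{[k]}_{\hat{t}}[g_k]$ to $\mathcal{I}^{[k+1]}_{\hat{t}}[g_{k+1}]$. The core claim, proved by induction on $k$ for $0\le k\le r-1$, is
\begin{equation*}
\mathcal{I}^{[k]}_{\hat{t}}[g_k](t)=-V_k(t)+p_{k-1}(t),
\end{equation*}
where $p_{k-1}$ is a polynomial of degree at most $k-1$ built from the constants $g_0(\hat{t}),\dots,g_{k-1}(\hat{t})$. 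The induction step combines $\mathcal{I}^{[k]}_{\hat{t}}$ applied to the integrated form of $\dot{g}_k$ with the telescoping identity $V_{k+1}-V_k=\mathcal{I}^{[k+1]}_{\hat{t}}[CA^kP\Lambda]$ read directly off the definition~(\ref{equ:functionsVj}), so that the $CA^kP\Lambda$ contributions collapse into $-V_{k+1}$ and only the polynomial remainder grows. I expect the bookkeeping of these polynomial remainders and their exact alignment with the $V_j$ to be the main technical obstacle, since one must verify that each initial condition enters only through $p_{k-1}$ without perturbing the $-V_k$ term.

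With the claim available at $k=r-1$, the conclusion follows from a smoothness argument together with Lemma~\ref{lem:VjProperty}. Since $\Psi_x$ is continuous, $g_{r-1}=CA^{r-1}\Psi_x$ is continuous, so its $(r-1)$-fold iterated integral $\mathcal{I}^{[r-1]}_{\hat{t}}[g_{r-1}]$ is of class $\mathcal{C}^{r-1}$; as $p_{r-2}$ is smooth, the identity forces $V_{r-1}\in\mathcal{C}^{r-1}$, that is $c_{r-1}\ge r-1$. Finally I would argue by contradiction: if $r>j^{*}+1$, then $j^{*}<r-1$, and since $r\le n$ gives $r-1\le n-1<n$, we have $j^{*}<n$ with $r-1\in\{j^{*}+1,\dots,n\}$, so Lemma~\ref{lem:VjProperty}(ii) yields $c_{r-1}=j^{*}<r-1$, contradicting $c_{r-1}\ge r-1$. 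The remaining cases $j^{*}\ge n$ are immediate, since then $r\le n\le j^{*}\le j^{*}+1$. Hence $r\le j^{*}+1$, as required.
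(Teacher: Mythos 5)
Your proof is correct, and it takes a route that is genuinely dual to the paper's. The paper argues by contradiction and \emph{differentiates}: assuming $r > j^{*}+1$, it differentiates the algebraic constraint $C\Psi_x(t) + Q\Lambda(t) = \bm{0}_{1\times\nu}$ exactly $j^{*}$ times (legitimised by Lemma~\ref{lem:VjProperty}(i), which gives $V_{j^{*}}\in\mathcal{C}^{j^{*}}$), uses $CA^iB=0$ to kill the $\hat{\Delta}$ terms, substitutes the integral form of the $\Psi_x$-dynamics, and arrives at $-\hat{G} = \int_{\hat{t}}^{t} CA^{j^{*}+1}\Psi_x(\tau)\,d\tau + V_{j^{*}+1}^{(j^{*})}(t)$, which is impossible because the integral term is $\mathcal{C}^{1}$ while $V_{j^{*}+1}^{(j^{*})}$ is $\mathcal{C}^{0}$ but not $\mathcal{C}^{1}$, by the exactness of $c_{j^{*}+1}=j^{*}$ from Lemma~\ref{lem:VjProperty}(ii). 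You instead \emph{integrate}: your recursion $\mathcal{I}^{[k]}_{\hat{t}}[g_k] = -V_k + p_{k-1}$ is established unconditionally for all $k \le r-1$ (the induction step only needs $CA^kB=0$ for $k\le r-2$, i.e., the definition of relative degree, not the contradiction hypothesis), and the smoothness of the $(r-1)$-fold integral of the continuous function $g_{r-1}$ then yields the positive certificate $c_{r-1}\ge r-1$, after which Lemma~\ref{lem:VjProperty}(ii) excludes $r>j^{*}+1$; your bookkeeping checks out, since $p_k(t)=p_{k-1}(t)-g_k(\hat{t})(t-\hat{t})^k/k!$ and the telescoping $V_{k+1}-V_k=\mathcal{I}^{[k+1]}_{\hat{t}}[CA^kP\Lambda]$ make the induction close, and the endgame inequalities ($j^{*}<r-1\le n-1<n$, so part (ii) applies at $j=r-1$, with $j^{*}\ge n$ trivial) are exactly right. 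What your version buys: it never differentiates $\Psi_x$ or any $V_j$, so it needs neither part (i) of Lemma~\ref{lem:VjProperty} nor any regularity of $\Psi_x$ beyond the absolute continuity already provided by Carath\'eodory --- the repeated differentiation is the one mildly delicate point in the paper's argument --- and as a byproduct it proves the stronger intermediate fact $c_k\ge k$ for all $k\le r-1$ whenever the regulator equations are solvable. What the paper's version buys: it is shorter, and it localises the obstruction precisely at the first non-smooth level $j^{*}$, which is the reading that the subsequent remark on the borderline case $r=j^{*}+1$ builds upon.
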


\begin{proof}
When $D = 0$, by Corollary~\ref{corol:equivDAE}, the existence of bounded piecewise continuous $\Pi_x$ and $\Delta$ solving~(\ref{equ:reguEquation}) is equivalent to the existence of a positive constant $\hat{t}$ and matrix-valued functions $\Psi_x(\cdot)$ and $\hat{\Delta}(\cdot)$ solving
\begin{equation}\label{equ:solutionDAEwithoutD}
    \begin{aligned}
    \dot{\Psi}_x(t) &= A\Psi_x^{*}(t) + (B\hat{\Delta}(t) + P)\Lambda(t), \\
    \bm{0}_{1 \times \nu} &= C \Psi_x^{*}(t) + Q\Lambda(t), \\
    \end{aligned}
\end{equation}
for all $t \geq \hat{t}$, with $\hat{\Delta}(\cdot)$ being bounded piecewise continuous. If $j^* \geq n$, then $r \leq n$ is surely smaller than $j^*+1$. Thus, consider the case $j^{*} < n$. Assume by contradiction that $r > j^{*} + 1$. By Lemma~\ref{lem:VjProperty}(ii), $c_{j^*+1}= j^*$, which means by definition that $V_{j^{*}+1} \in \mathcal{C}^{j^{*}}$ and, by (i), $V_{j}$ is at least $j$-times differentiable for all $j \leq j^{*}$. Then, note that the second equation in~(\ref{equ:solutionDAEwithoutD}) can be written as
$\bm{0}_{1 \times \nu} = C\Psi(t) + V_0^{(0)}(t)$,
and that the equation
$\bm{0}_{1 \times \nu} = CA\Psi(t) + CB\Delta(t) + V_1^{(1)}(t)$
is its first derivative. By iterating this for $j^{*}$ times we obtain
\begin{equation}\label{equ:solutionDAEwithoutDDiff}
    \bm{0}_{1 \times \nu} = CA^{j^{*}}\Psi_x(t) + CA^{j^{*}-1}B\Delta(t) + V_{j^{*}}^{(j^{*})}(t).
\end{equation}
Note that $V_{j^*}$ is $j^*$-times differentiable, and thus~(\ref{equ:solutionDAEwithoutDDiff}) is well defined. Then, we substitute the integral solution of the first equation in~(\ref{equ:solutionDAEwithoutD}) into~(\ref{equ:solutionDAEwithoutDDiff}), and we use the fact that $CA^i B=0$ for all $i\le j^*$ because $r>j^*+1$, obtaining
\begin{equation}\label{equ:solutionDAEwithoutDDiffCont}
    \bm{0}_{1 \times \nu}\! = \!\!\int_{\hat{t}}^{t} \!\!  CA^{j^{*}+1}\Psi_x(t) + CA^{j^{*}}\!\!P\Lambda(t) d\tau + \hat{G} + V_{j^{*}}^{(j^{*})}(t),
\end{equation}
for all $t \geq \hat{t}$, where $\hat{G} = CA^{j^{*}}\Psi_x(\hat{t})$ is a constant matrix. Now considering that $V_{j^{*}+1}(t) = V_{j^{*}}(t) + \mathcal{I}^{[j^{*}+1]}_{\hat{t}} \left[CA^{j^{*}}P\Lambda(t)\right]$, we have $V_{j^{*}+1}^{(j^{*})}(t) = V_{j^{*}}^{(j^{*})}(t) + \int_{\hat{t}}^{t} CA^{j^{*}}P\Lambda(t) d\tau$. Then~(\ref{equ:solutionDAEwithoutDDiffCont}) can be rewritten as
\begin{equation*}
    -\hat{G} =\!\! \int_{\hat{t}}^{t}\! CA^{j^{*}+1}\Psi_x(\tau) d\tau + V_{j^{*}+1}^{(j^{*})}(t),
\end{equation*}
Since $V_{j^{*}+1} \in \mathcal{C}^{j^{*}}$, $V_{j^{*}+1}^{(j^{*})} \in \mathcal{C}^{0}$. By the fact that $\int_{\hat{t}}^{t} CA^{j^{*}}\Psi_x(t) d\tau$ is continuously differentiable and cannot generate functions of class $\mathcal{C}^{0}$, the last equation is not solvable. Therefore, the regulator equations~(\ref{equ:reguEquation}) are solvable only if $r \leq j^* + 1$.
\end{proof}

\begin{remark}
Proposition~\ref{prop:RDCondition} discusses the relation between the non-smooth output dynamics of system~(\ref{equ:system}) induced by $\omega$ and the relative degree that is necessary for cancelling the resulting non-smoothness via an external finite-time bounded piecewise continuous input $u$. However, the requirement $r \leq j^{*} + 1$ may not sufficiently guarantee this cancellation. For example, if $V_{j^{*}+1}^{(j^{*})}$ is continuous with a derivative that at some time instants is not finite, then (\ref{equ:solutionDAEwithoutDDiffCont}) cannot be satisfied when $r = j^{*} + 1$ because $CA^{j^{*}-1}B = 0$ and $CA^{j^{*}}\Psi_x$  has always a finite-time bounded derivative under the control of a finite-time bounded input $u$.
\end{remark}

In fact, the case $j^{*} = c_n \geq 1$ implies that the regulation error $e$ of the linear system~(\ref{equ:system}) subject to the explicit generator~(\ref{equ:explicitGen}) with $u \equiv 0$ is continuously differentiable. This can be seen by the fact that $\dot{e}(t) = CAx(t) + \dot{V}_1(t)\omega_0$ where, by Lemma~\ref{lem:VjProperty}, $V_1$ is continuously differentiable. In the non-smooth case that we target (possibly time-varying triangular waves and square waves), the case $c_n \geq 1$ is less likely to happen. Therefore, in this section, we mainly focus on the case $r = 1$, which, by Proposition~\ref{prop:RDCondition}, is necessary for guaranteeing the solvability of the regulator equations for any $\Lambda$ under Assumptions~\ref{asmp:exoProp} and~\ref{asmp:pwDiff}.

Then, we study the solvability condition for the regulation problem under extra Assumptions~\ref{asmp:pwDiff} and~\ref{asmp:unitaryRD}. For the time being, suppose Assumption~\ref{asmp:pwDiff} holds. Then for any compact time interval $[t_a, t_b] \subset [t_0, +\infty)$, there exists a finite subdivision $t_a = \hat{t}_0 < \hat{t}_1 < \cdots <\hat{t}_{k-1} < \hat{t}_{k} = t_b$ of $[t_a, t_b]$  such that the function $Q\Lambda(t)$ is continuously differentiable in each subinterval $[\hat{t}_{j-1},\; \hat{t}_{j}]$ for any $j = 1, 2, \cdots, k$. Moreover, define a matrix-valued function $Q_{\Lambda}: \mathbb{R} \to \mathbb{R}^{1 \times \nu}$ such that $Q_{\Lambda}(t) := Q\dot{\Lambda}(t)\Lambda(t)^{-1}$ for all $\tilde{t} \in [\hat{t}_{j-1},\; \hat{t}_{j}) \subset [t_a, t_b]$ with $j = 1, 2, \cdots, n$ and for any subset $[t_a, t_b] \subset [t_0, +\infty)$. Then $Q_{\Lambda}\Lambda$ satisfies
\begin{equation*}
    Q\Lambda(t) = Q\Lambda(0) + \int_{0}^{t} Q_{\Lambda}(\tau)\Lambda(\tau) d\tau
\end{equation*}
and is bounded and piecewise continuous for all times under Assumption~\ref{asmp:pwDiff}. Now we are ready to check the solvability. For simplicity, let $b = CB$.


\begin{thm}\label{thm:SolCondWithoutD}
Consider Problem~\ref{prob:ORFullInfo} driven by control law~(\ref{equ:controlLaw}). Suppose Assumptions~\ref{asmp:exoProp},~\ref{asmp:systemProp},~\ref{asmp:pwDiff}, and~\ref{asmp:unitaryRD} hold. There exist bounded piecewise continuous matrices $\Pi_x$ and $\Delta$ solving the regulator equations (\ref{equ:reguEquation}) for any $P$ and $Q$ if and only if systems~(\ref{equ:system}) and~(\ref{equ:explicitGen}) are non-resonant.
\end{thm}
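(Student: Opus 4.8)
The plan is to use the unitary relative degree (Assumption~\ref{asmp:unitaryRD}) to put system~(\ref{equ:system}) in normal form and thereby reduce the regulator equations to a zero-dynamics recursion driven by $\Lambda$, whose boundedness is governed \emph{exactly} by the operator $\Omega$ in~(\ref{equ:nonSmNRC}). Concretely, I would take $T=[T_1^\top,\,C^\top]^\top$ with $T_1$ chosen so that $T_1B=0$ and $T$ nonsingular (possible since $b=CB\neq 0$), giving $\theta=Tx=\operatorname{col}(z,\hat y)$ as in~(\ref{equ:systemNormalForm}), with blocks $A_{11},A_{12},A_{21},A_{22}$ of $TAT^{-1}$ and $TP=\operatorname{col}(P_z,P_y)$. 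The spectrum of $A_{11}$ coincides with the transmission zeros of~(\ref{equ:system}), so $A_{11}$ equals $A_z$ up to a similarity $S$; since conjugation sends the $A_z$-version of $\Omega$ to $(I_\nu\otimes S^{-1})\Omega(t)(I_\nu\otimes S)$, boundedness of $\Omega$ is insensitive to replacing $A_z$ by $A_{11}$. By Corollary~\ref{corol:equivDAE} (with $D=0$, and taking $t_0=0$ as in the appendix preamble) I work with the DAE~(\ref{equ:solutionDAEwithoutD}). Writing $\Psi_\theta=T\Psi_x=\operatorname{col}(\Psi_z,\Psi_{\hat y})$, the algebraic equation reads $\Psi_{\hat y}(t)=-Q\Lambda(t)$ (hence $\Pi_{\hat y}=-Q$, bounded), while the internal part obeys $\dot\Psi_z=A_{11}\Psi_z+B_z\Lambda$ with $B_z:=P_z-A_{12}Q$.

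Differentiating the constraint once, which is legitimate on each interval of differentiability of $Q\Lambda$ by Assumption~\ref{asmp:pwDiff} with $\tfrac{d}{dt}(Q\Lambda)=Q_\Lambda\Lambda$ and $Q_\Lambda=Q\dot\Lambda\Lambda^{-1}$ bounded, the remaining $\hat y$-equation solves algebraically for $\hat\Delta=\tfrac1b(A_{22}Q-P_y-Q_\Lambda-A_{21}\Pi_z)$ using $b\neq 0$. Hence the DAE admits a bounded piecewise continuous solution if and only if $\Pi_z=\Psi_z\Lambda^{-1}$ is bounded. Vectorizing variation-of-constants for $\Psi_z$ and right-multiplying by $\Lambda^{-1}$ yields $\operatorname{vec}(\Pi_z(t))=(\Lambda(t)^{-\top}\otimes e^{A_{11}t})\operatorname{vec}(\Pi_z(0))+\Omega(t)\operatorname{vec}(B_z)$, where the second summand is precisely $\Omega$ (in the $A_{11}$ form) applied to $\operatorname{vec}(B_z)$, and the first is a homogeneous term parameterized by the free initial datum $\Pi_x(0)$.

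\textbf{Sufficiency.} If the systems are non-resonant, $\Omega$ is bounded. Choosing $\Pi_z(0)=0$ annihilates the homogeneous term, so $\operatorname{vec}(\Pi_z(t))=\Omega(t)\operatorname{vec}(B_z)$ is bounded for \emph{every} $P,Q$ (here $B_z$ is an arbitrary matrix and $\Omega$ is bounded as a finite-dimensional operator). Then $\Pi_{\hat y}=-Q$ and the expression for $\hat\Delta$ are bounded, and $\Pi_x=T^{-1}\operatorname{col}(\Pi_z,\Pi_{\hat y})$, $\Delta=\hat\Delta$ solve~(\ref{equ:reguEquation}).

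\textbf{Necessity (the main obstacle).} Assuming solvability for all $P,Q$, I must deduce boundedness of $\Omega$. Since $\Omega(t)$ is a finite matrix, it suffices to bound $\Omega(t)\operatorname{vec}(B_z)$ in $t$ for $B_z$ ranging over a basis; taking $Q=0$ and sweeping $P$ makes $B_z=P_z$ cover all of $\mathbb{R}^{(n-1)\times\nu}$, and avoids any dependence on Assumption~\ref{asmp:pwDiff}. For each $B_z$ solvability provides a bounded $\Pi_z$, hence some $\Pi_z(0)$ with $(\Lambda(t)^{-\top}\otimes e^{A_{11}t})c+\Omega(t)\operatorname{vec}(B_z)$ bounded, where $c=\operatorname{vec}(\Pi_z(0))$. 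The difficulty is that the homogeneous term and $\Omega(t)\operatorname{vec}(B_z)$ might each be unbounded yet cancel, and the freedom in $\Pi_x(0)$ spans exactly the homogeneous solutions, so I must show this freedom cannot rescue an unbounded $\Omega$. I would write both summands in the common form $(\Lambda(t)^{-\top}\otimes e^{A_{11}t})y(t)$, with $y\equiv c$ for the homogeneous term and $y(t)=\int_{0}^t(\Lambda(\tau)^\top\otimes e^{-A_{11}\tau})d\tau\,\operatorname{vec}(B_z)$ for the $\Omega$-term, and split $\mathbb{R}^{n-1}$ into the stable, centre, and unstable generalized eigenspaces of $A_{11}$. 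On the stable part $\Omega$ is automatically bounded (the minimum-phase mechanism of Remark~\ref{remark:minphase}, using finite-time boundedness of $\Lambda$ and boundedness of $\Lambda^{-1}$ exactly as in the proof of Lemma~\ref{lem:moment}); on the centre/unstable part, boundedness of the sum forces $c+y(t)$ to stay in the directions along which $(\Lambda(t)^{-\top}\otimes e^{A_{11}t})$ remains bounded, and since a constant $c$ cannot compensate the accumulating integral $y(t)$ in those directions, $\Omega(t)\operatorname{vec}(B_z)$ must itself be bounded. Making this final cancellation step rigorous, relying only on the qualitative properties of $\Lambda$ in Assumption~\ref{asmp:exoProp}, is the technical heart of the argument.
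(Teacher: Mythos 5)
Your reduction coincides almost step for step with the paper's own proof of Theorem~\ref{thm:SolCondWithoutD}: the same normal-form transformation $T=[T_1^{\top},\,C^{\top}]^{\top}$ (the paper leaves $T_1B=0$ implicit, you state it), the same elimination of the algebraic constraint giving $\Theta_y=-Q\Lambda$, the same explicit bounded formula for $\hat{\Delta}$ using $b=CB\neq 0$ and the boundedness of $Q_{\Lambda}=Q\dot{\Lambda}\Lambda^{-1}$ from Assumption~\ref{asmp:pwDiff}, and the same vectorized variation-of-constants identity reducing everything to boundedness of $\bar{\Pi}_z$ in~(\ref{equ:solvabCheckNFIntDymPi}), with $\Omega$ appearing exactly as the forced term. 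Your sufficiency argument (zero initial datum annihilates the homogeneous term) is correct and matches the paper.

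The genuine gap is your necessity step, and it is not merely an unproven detail: the cancellation you hope to rule out can actually occur, so the claim ``a constant $c$ cannot compensate the accumulating integral $y(t)$'' is false. Take $\Lambda\equiv 1$ (scalar constant exogenous signal, which satisfies Assumptions~\ref{asmp:exoProp} and~\ref{asmp:pwDiff}) and a stabilizable relative-degree-one system with a transmission zero $\lambda>0$, so $A_{11}=\lambda$. The classical regulator equations $0=A\Pi_x+B\Delta+P$, $0=C\Pi_x+Q$ are solvable for \emph{every} $P,Q$ (since $0\notin\sigma(A_z)$), with constant, hence bounded, solutions that also solve~(\ref{equ:reguEquation}); yet $\Omega(t)=\int_{t_0}^{t}e^{\lambda(t-\tau)}d\tau$ diverges. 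In your notation, $y(t)=\int_0^t e^{-\lambda\tau}d\tau\,g\to g/\lambda=:Y_\infty$ converges along the unstable direction, and $c=-Y_\infty$ compensates exactly: $h(t)(c+y(t))=-g/\lambda$ is bounded while $\Omega(t)g=h(t)y(t)$ blows up. So under your ``there exists an initial condition'' reading of solvability, necessity is not just hard — it is false, and no refinement of your eigenspace splitting can rescue it; the compensation succeeds precisely when $y(t)$ accumulates to a finite limit at a rate matched by the growth of $h(t)$. The paper's proof avoids this by a different quantification: it requires boundedness of $\bar{\Pi}_z$ in~(\ref{equ:solvabCheckNFIntDymPi}) \emph{for an arbitrary initial condition} $\bar{\Pi}_z(\hat{t})$. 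Under that reading necessity is immediate with no cancellation analysis — taking $G_1=0$ and varying the initial condition forces the homogeneous factor to be bounded, and then taking zero initial condition and sweeping $P$, $Q$ forces $\Omega$ to be bounded. To repair your argument you must adopt this stronger notion of solvability (boundedness of the whole family of solutions parameterized by $\Pi_x(\hat{t})$), which is how the paper's statement should be read even though its ``there exist'' phrasing suggests otherwise.
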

\begin{proof} 
By Corollary~\ref{corol:equivDAE}, we need to prove that, when system~(\ref{equ:system}) has a unitary relative degree, there exist a positive constant $\hat{t}$ and matrix-valued functions $\Psi_x(\cdot)$ and $\hat{\Delta}(\cdot)$ solving~(\ref{equ:solutionDAE}) for all $t \geq \hat{t}$ if and only if systems~(\ref{equ:system}) and~(\ref{equ:explicitGen}) are non-resonant. To show this, define a linear transformation $T := [T_1^{\top},\;T_2^{\top}]^{\top}$ with $T_2 = C$ and $T_1$ being selected such that $T$ is non-singular. By the change of variable $\Psi_x \mapsto \Theta = T\Psi_x := \operatorname{col}\left(\Theta_z, \Theta_y\right)$ with $\Theta_z \in \mathbb{R}^{(n-1) \times \nu}$ and $\Theta_y \in \mathbb{R}^{1 \times \nu}$,~(\ref{equ:solutionDAEwithoutD}) becomes
\begin{equation*}
    \begin{aligned}
    \dot{\Theta}_z(t) \!&=\! A_{11}\Theta_z(t)\! +\! A_{12}\Theta_y(t)\! +\! P_{1}\Lambda(t), \\ 
    \dot{\Theta}_y(t) \!&=\! A_{21}\Theta_z(t)\! +\! A_{22}\Theta_y(t)\! +\! P_{2}\Lambda(t)\! +\! b\hat{\Delta}(t)\Lambda(t),  \\
    \bm{0}_{1 \times \nu} \!&= \Theta_y(t) + Q\Lambda(t), 
    \end{aligned}
\end{equation*}
with
\begin{equation*}
    T A T^{-1} =\left[\begin{array}{ll}
        A_{11} & A_{12} \\
        A_{21} & A_{22}
    \end{array}\right], \quad 
    T P =  \left[\begin{array}{l}
        P_{1} \\
        P_{2}
    \end{array}\right].
\end{equation*}
Since $\Theta_y(t) + Q\Lambda(t) = \bm{0}_{1 \times \nu}$ for all $t \geq \hat{t}$, the change of variable $\Theta \mapsto \bar{\Pi}_x := \Theta\Lambda^{-1} = \operatorname{col}\left(\bar{\Pi}_z, \bar{\Pi}_y\right)$ yields
\begin{subequations}\label{equ:solvabCheckNFPi}
    \begin{align}
    \!\!\!\!\bar{\Pi}_z(t) \!&=\! \left( \!e^{A_{11} t} \bar{\Pi}_z(\hat{t}) \!+ \!\!\int_{\hat{t}}^t e^{A_{11}(t-\tau)} G_1 \Lambda(\tau) d \tau \!\! \right) \! \Lambda(t)^{-1}\!, \label{equ:solvabCheckNFIntDymPi}\\ 
    \!\!\!\!Q\Lambda(t) \!&=\! Q\Lambda(\hat{t}) \!+ \!\!\int_{\hat{t}}^t \!(A_{21}\bar{\Pi}_z(\tau) + G_2 + b\hat{\Delta}(\tau))\Lambda(\tau) d\tau, \label{equ:solvabCheckOutPi}
    \end{align}
\end{subequations}
for all $t \geq \hat{t}$ with $G_1 = P_1 - A_{12}Q$, $G_2 = P_{2} - A_{22}Q$, and $\bar{\Pi}_y(t) = Q\Lambda(t)$. By Lemma~\ref{lem:moment},~Theorem~\ref{thm:solutionWithD}, and Corollary~\ref{corol:equivDAE}, to have bounded piecewise continuous matrix-valued functions $\Pi_x = \Psi_x\Lambda^{-1}$ and $\Delta = \hat{\Delta}$ solving the regulator equations~(\ref{equ:reguEquation}), $\bar{\Pi}_z$ and $\hat{\Delta}$, that solve~(\ref{equ:solvabCheckNFPi}) with $\Pi_z$ obtained after the transformation $\bar{\Pi}_x = T\Pi_x$, should be bounded piecewise continuous for all $t \geq \hat{t}$. Note that under Assumptions~\ref{asmp:exoProp} and~\ref{asmp:pwDiff}, there always exists a bounded piecewise continuous $\hat{\Delta}$ solving~(\ref{equ:solvabCheckOutPi}) if a bounded $\bar{\Pi}_z$ solving~(\ref{equ:solvabCheckNFIntDymPi}) exists for all $t \geq \hat{t}$. For example, with a bounded $\bar{\Pi}_z$, there exists a bounded piecewise continuous $\hat{\Delta}$ expressed by
\begin{equation*}
    \hat{\Delta}(t) = b^{-1}\left(Q_\Lambda(t) - G_{2} - A_{21}\bar{\Pi}_z(t)\right),
\end{equation*}
which solves~(\ref{equ:solvabCheckOutPi}). Then, the existence of bounded piecewise continuous $\Pi_x$ and $\Delta$ solving the regulator equations~(\ref{equ:reguEquation}) comes down to the existence of the bounded solution $\bar{\Pi}_z$ in~(\ref{equ:solvabCheckNFIntDymPi}) with an arbitrary initial condition $\bar{\Pi}_z(\hat{t}) \in \mathbb{R}^{(n-1) \times \nu}$ for all $t \geq \hat{t}$. When Assumptions~\ref{asmp:exoProp} and~\ref{asmp:pwDiff} hold, the vectorization of~(\ref{equ:solvabCheckNFIntDymPi}) yields that a bounded solution $\bar{\Pi}_z$ exists for any $\bar{\Pi}(\hat{t})$ and any $P$ and $Q$ if and only if the non-smooth non-resonance condition is satisfied.
\end{proof}

\begin{remark}
When $\Lambda = e^{St}$, 
the expression~(\ref{equ:solvabCheckNFIntDymPi}) becomes
\begin{equation}\label{equ:solvabCheckNFIntDymPiSylvester}
    \bm{0}_{(n-r)\times \nu} = A_{11}\bar{\Pi}_z - \bar{\Pi}_{z}S + G_1,
\end{equation}
whose solvability is guaranteed by the traditional non-resonance condition. In this case, the boundedness of $\Omega$ in~(\ref{equ:nonSmNRC}) is only a sufficient but not necessary condition for the existence of the matrix $\bar{\Pi}_{z}$ solving~(\ref{equ:solvabCheckNFIntDymPiSylvester}), see~\cite{ref:bhatia1997and} for more discussion of the integral-from solution of the Sylvester equations.
\end{remark}

Finally, we consider the regulation problem with $D \neq 0$, in which case the relative degree of system~(\ref{equ:system}) is zero and Assumptions~\ref{asmp:pwDiff} and~\ref{asmp:unitaryRD} can be omitted. Solvability of the regulator equations (\ref{equ:reguEquation}) can be studied in a similar way as in Theorem~\ref{thm:SolCondWithoutD}.

\begin{thm}\label{thm:solCondWithD}
Consider Problem~\ref{prob:ORFullInfo} driven by control law~(\ref{equ:controlLaw}). Suppose $D \neq 0$ and Assumptions~\ref{asmp:exoProp} and~\ref{asmp:systemProp} hold. There exist bounded piecewise continuous matrices $\Pi_x$ and $\Delta$ solving the regulator equations~(\ref{equ:reguEquation}) for any $P$ and $Q$ if and only if systems~(\ref{equ:system}) and~(\ref{equ:explicitGen}) are non-resonant.
\end{thm}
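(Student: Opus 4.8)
The plan is to follow the same route as the proof of Theorem~\ref{thm:SolCondWithoutD}, but to exploit the invertibility of $D$ to bypass the normal-form reduction and the iterated-integral machinery that the $D=0$ case required (this is why Assumptions~\ref{asmp:pwDiff} and~\ref{asmp:unitaryRD} can be dropped). First I would invoke Corollary~\ref{corol:equivDAE} to replace the regulator equations~(\ref{equ:reguEquation}) by the differential-algebraic system~(\ref{equ:solutionDAE}), so that the task becomes finding bounded piecewise continuous $\Psi_x$ and $\hat{\Delta}$ solving~(\ref{equ:solutionDAE_a})--(\ref{equ:solutionDAE_b}) on $[\hat t,+\infty)$.

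The crucial simplification is that, because $D\neq 0$, the algebraic constraint~(\ref{equ:solutionDAE_b}) can be solved \emph{explicitly} for $\hat\Delta$. Dividing~(\ref{equ:solutionDAE_b}) by the invertible $\Lambda(t,t_0)$ gives $\bm{0}_{1\times\nu}=C\Pi_x(t)+D\hat\Delta(t)+Q$, with $\Pi_x=\Psi_x\Lambda^{-1}$, whence $\hat\Delta(t)=-D^{-1}\big(C\Pi_x(t)+Q\big)$. This $\hat\Delta$ is bounded and piecewise continuous exactly when $\Pi_x$ is, so no differentiability hypotheses on $Q\Lambda$ are needed here: the feedforward term absorbs any jump. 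Substituting this expression back into~(\ref{equ:solutionDAE_a}) closes the loop on $\Psi_x$ alone, yielding
\begin{equation*}
\dot\Psi_x(t)=\big(A-BD^{-1}C\big)\Psi_x(t)+\big(P-BD^{-1}Q\big)\Lambda(t,t_0),
\end{equation*}
a system of the form~(\ref{equ:PsiSystem}) with $A_g=A-BD^{-1}C$ and $B_g=P-BD^{-1}Q$. Since $\sigma(A-BD^{-1}C)$ coincides with the transmission zeros of system~(\ref{equ:system}) in the relative-degree-zero case, $A_g$ plays precisely the role of the matrix $A_z$ in the non-resonance condition~(\ref{equ:nonSmNRC}).

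Next I would write $\Pi_x=\Psi_x\Lambda^{-1}$ in the integral form of Lemma~\ref{lem:moment} and reduce the question of solvability for every $P$ and $Q$ to a boundedness statement. Vectorizing the particular-solution integral, exactly as in the passage from~(\ref{equ:solvabCheckNFIntDymPi}) to the non-resonance condition, turns the integrand into $\big(\Lambda(\tau,t_0)\Lambda(t,t_0)^{-1}\big)^{\top}\otimes e^{A_g(t-\tau)}$ acting on $\operatorname{vec}(P-BD^{-1}Q)$, so that $\operatorname{vec}(\Pi_x(t))$ equals $\Omega(t)\operatorname{vec}(P-BD^{-1}Q)$ up to a homogeneous contribution driven by $e^{A_g(t-\hat t)}$. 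Choosing the initial condition to render the homogeneous part bounded, a bounded $\Pi_x$ exists for arbitrary $P$ and $Q$ if and only if $\Omega$ in~(\ref{equ:nonSmNRC}) is bounded, because $(P,Q)\mapsto P-BD^{-1}Q$ is onto $\mathbb{R}^{n\times\nu}$; this delivers both directions of the equivalence at once.

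The main obstacle I anticipate is bookkeeping rather than algebra: I must reconcile the non-resonance condition~(\ref{equ:nonSmNRC}), whose matrix $A_z$ is stated for the unitary relative degree case, with the present setting where the transmission-zero matrix $A-BD^{-1}C$ is $n\times n$, so the definition must be read with $A_z$ of the appropriate size. I must also verify that boundedness of $\Omega$ is equivalent to---not merely sufficient for---the existence of a bounded $\Pi_x$ for every $P$ and $Q$; this requires treating the free homogeneous term $e^{A_g(t-\hat t)}\Pi_x(\hat t)\Lambda(\hat t,t_0)\Lambda(t,t_0)^{-1}$ exactly as in the proof of Theorem~\ref{thm:SolCondWithoutD} and confirming, via the surjectivity of $(P,Q)\mapsto P-BD^{-1}Q$, that an unbounded $\Omega$ genuinely obstructs solvability for some admissible pair.
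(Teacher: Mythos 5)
Your proposal follows essentially the same route as the paper's proof: invoke Corollary~\ref{corol:equivDAE}, solve the algebraic constraint as $\hat{\Delta} = -D^{-1}(C\Psi_x\Lambda^{-1} + Q)$, substitute to obtain the closed integral form driven by $A_\pi = A - BCD^{-1}$ (identical to your $A - BD^{-1}C$ since $D$ is scalar) and $P_\pi = P - BQD^{-1}$, and conclude that boundedness of the resulting $\hat{\Pi}_z$ for arbitrary $P$ and $Q$ is equivalent to the non-resonance condition. Your additional bookkeeping remarks (reading $A_z$ with the appropriate $n \times n$ size in the relative-degree-zero case, and using surjectivity of $(P,Q) \mapsto P - BQD^{-1}$ for necessity) are consistent with, and slightly more explicit than, the paper's terse final step.
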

\begin{proof}
By Corollary~\ref{corol:equivDAE} with the substitutions $\hat{\Delta} = -D^{-1}(C\Psi_x\Lambda^{-1} + Q)$ and $\hat{\Pi}_z = \Psi_x\Lambda^{-1}$, the regulator equations~(\ref{equ:reguEquation}) is solvable for any $P$ and $Q$ if and only if there exist a positive constant $\hat{t}$ and a bounded piecewise continuous matrix-valued function $\hat{\Pi}_z(t) \in \mathbb{R}^{n \times \nu}$ of the form
\begin{equation*}
    \hat{\Pi}_z(t) \!=\! \left( \!e^{A_{\pi} t} \hat{\Pi}_z(\hat{t}) \!+ \!\!\int_{\hat{t}}^t e^{A_{\pi}(t-\tau)} P_{\pi} \Lambda(\tau) d \tau \!\! \right) \! \Lambda(t)^{-1}\!, 
\end{equation*}
for all $t \geq \hat{t}$ and any initial condition $\hat{\Pi}_z(\hat{t})$, where $A_{\pi} =  A - BCD^{-1}$ and $P_{\pi} = P - BQD^{-1}$. Under Assumption~\ref{asmp:exoProp}, this boundedness is satisfied for any $P$ and $Q$ if and only if systems~(\ref{equ:system}) and~(\ref{equ:explicitGen}) are non-resonant.
\end{proof}

Rewriting Theorems~\ref{thm:SolCondWithoutD} and \ref{thm:solCondWithD} together yields Theorem~\ref{thm:solCondBoth}, concluding the proof of that theorem.

}



\begin{IEEEbiography}[{\includegraphics[width=1in,height=1.25in,clip,keepaspectratio]{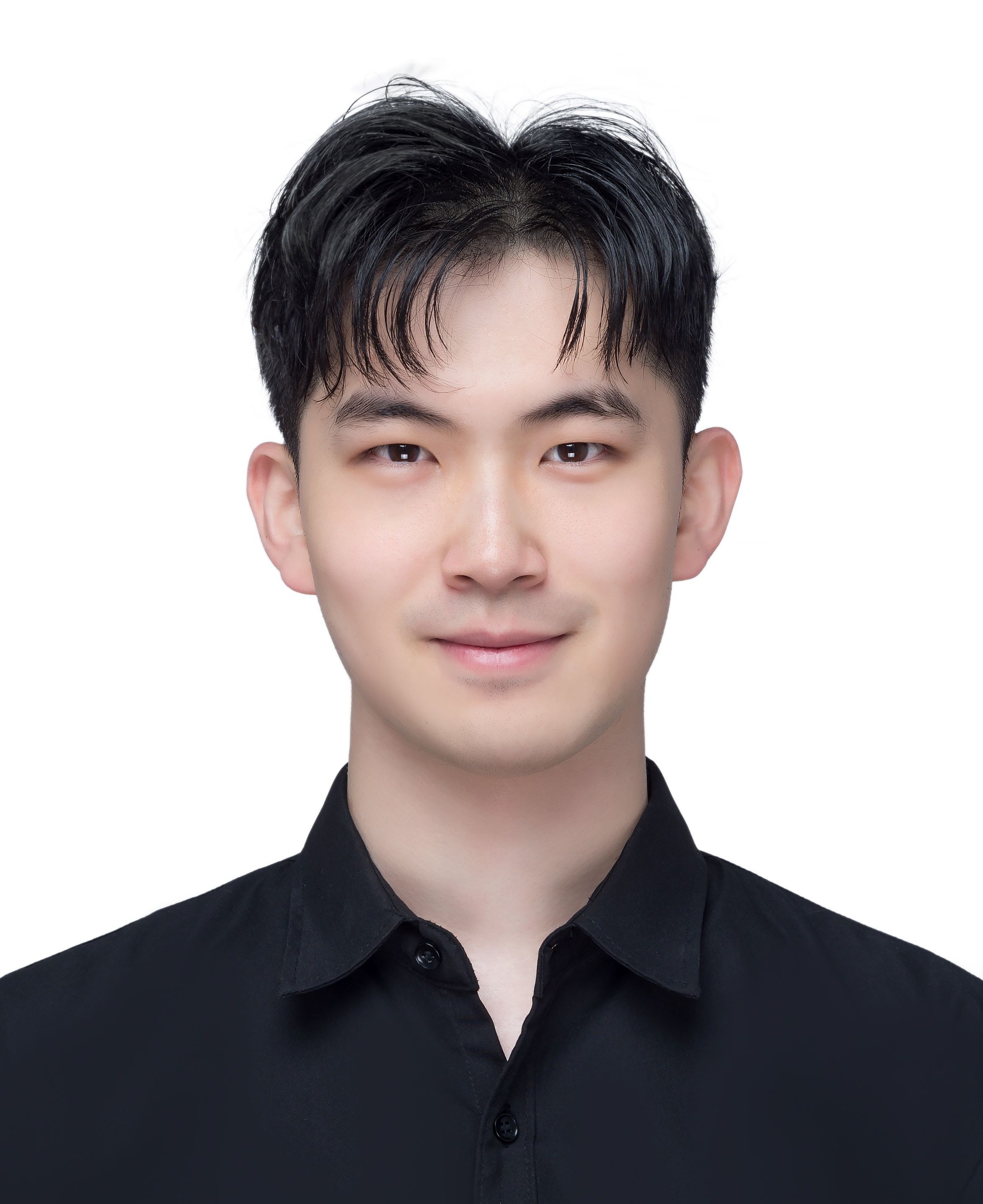}}]{Zirui Niu} (Graduate Student Member, IEEE)
was born in Shandong, China, in 1997. He received the B.Eng. (Hons) degree in electrical and electronic engineering from the University of Liverpool, UK, in 2020, and the M.Sc. degree in control systems from Imperial College London, UK, in 2021. Since 2022, he has been pursuing a Ph.D. in control theory at Imperial College London, UK, supported by the CSC-Imperial Scholarship. His research interests include output regulation, adaptive control, hybrid systems, model reduction, and robust control. He was the recipient of the MSc Control Systems Outstanding Achievement Prize (2021) and the Hertha Ayrton Centenary Prize (2021) for outstanding performance in MSc Control Systems and the outstanding master’s thesis in the Electrical and Electronic Engineering Department, respectively. 
\end{IEEEbiography}

\begin{IEEEbiography}
[{\includegraphics[width=1in,height=1.25in,clip,keepaspectratio]{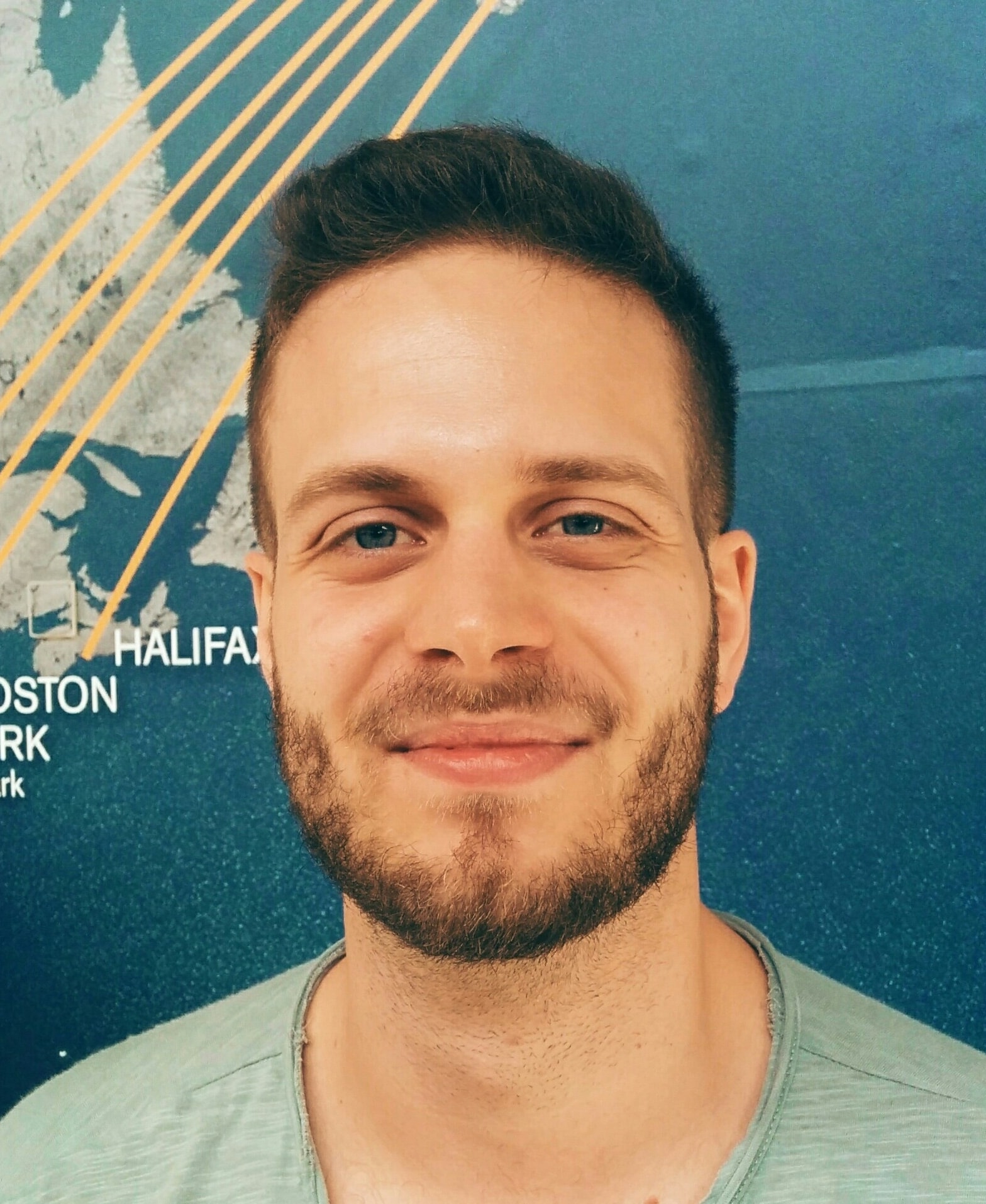}}]{Daniele Astolfi}
received the B.S. and M.S. degrees in
automation engineering from the University of Bologna,
Italy, in 2009 and 2012, respectively. He obtained a joint
Ph.D. degree in Control Theory from the University of
Bologna, Italy, and from Mines ParisTech, France, in 2016.
In 2016 and 2017, he has been a Research Assistant 
at the University of Lorraine (CRAN), Nancy, France.
Sinah ce 2018, he is a CNRS Researcher at 
LAGEPP, Lyon, France. 
His research interests include observer design, feedback
stabilization and output regulation  for nonlinear systems,
networked control systems, hybrid systems, and multi-agent systems.
 He serves as an associate 
editor of the IFAC journal Automatica since 2018 and European Journal of Control since 2023.
He was a recipient of the 2016 Best Italian Ph.D. Thesis Award in Automatica given by SIDRA.
\end{IEEEbiography}

\begin{IEEEbiography}
[{\includegraphics[width=1in,height=1.25in,clip,keepaspectratio]{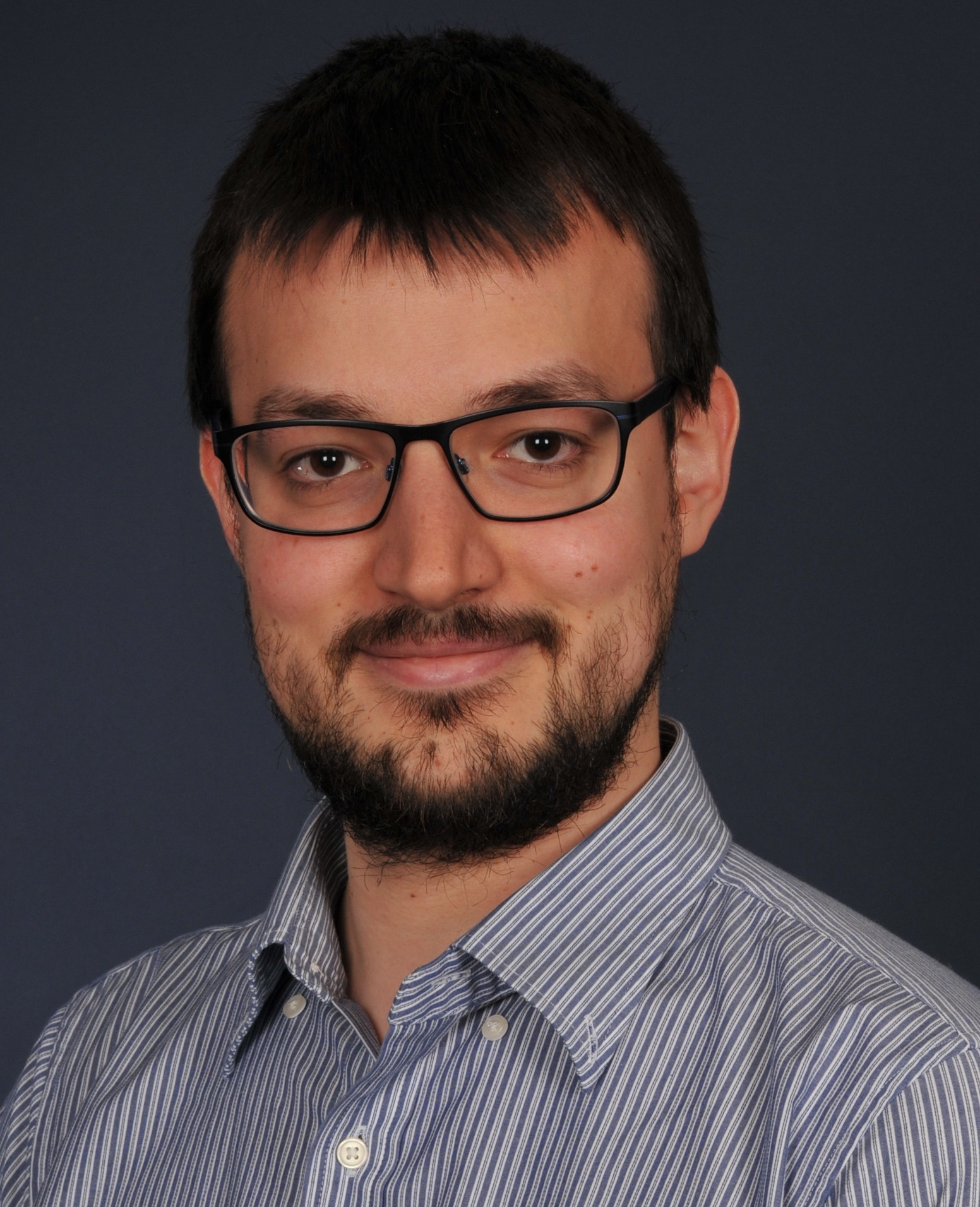}}]{Giordano Scarciotti}
Giordano Scarciotti (Senior Member, IEEE) was born in Frascati (Rome), Italy, in 1988. He received his B.Sc. and M.Sc. degrees in Automation Engineering from the University of Rome “Tor Vergata”, Italy, in 2010 and 2012, respectively. In 2012 he joined the Control and Power Group, Imperial College London, UK, where he obtained a Ph.D. degree in 2016. He also received an M.Sc. in Applied Mathematics from Imperial College in 2020. He is currently a Senior Lecturer in the Control and Power Group. His current research interests are focused on analysis and control of uncertain systems, model reduction and optimal control. He was a visiting scholar at New York University in 2015 and at University of California Santa Barbara in 2016 and a Visiting Fellow of Shanghai University in 2021-2022. He is the recipient of an Imperial College Junior Research Fellowship (2016), of the IET Control \& Automation PhD Award (2016), the Eryl Cadwaladr Davies Prize (2017) and an ItalyMadeMe award (2017). He received the IEEE Transactions on Control Systems Technology Outstanding Paper Award (2023). He is a member of the EUCA and IEEE CSS Conference Editorial Boards, and of the IFAC and IEEE CSS Technical Committees on Nonlinear Control Systems. He has served in the International Program Committees of multiple conferences, he is Associate Editor of Automatica and an Editor-at-Large of IEEE CDC 2024. He was the National Organizing Committee Chair for the EUCA European Control Conference 2022 and of the 7th IFAC Conference on Analysis and Control of Nonlinear Dynamics and Chaos 2024, and the Invited Session Chair for IFAC Symposium on Nonlinear Control Systems (NOLCOS) 2022. He is the Editor of the IFAC NOLCOS 2025. 
\end{IEEEbiography}

\end{document}